\documentclass[letterpaper,11pt,twoside]{article}
\usepackage{amsmath,amssymb,theorem,xspace}
\usepackage[dvips]{graphicx}
\usepackage{ifthen,calc,epic}
\usepackage[usenames,dvipsnames]{color}
\usepackage{hyperref,breakurl,mathptmx} 

\textwidth              16cm             
\textheight             623.01pt         
\oddsidemargin          -0.04cm          
\evensidemargin         -0.04cm          
\topmargin              -0.04cm          
\advance\topmargin-\headheight           
\advance\topmargin-\headsep              
\advance\evensidemargin  .295cm          
\advance\oddsidemargin   .295cm          
\marginparwidth         45pt

\theoremheaderfont{\scshape}
\theorembodyfont{\normalfont\slshape}
\theoremstyle{plain}
\newtheorem{theorem}{Theorem}
\newtheorem{proposition}{Proposition}
\newtheorem{lemma}{Lemma}
\newtheorem{conjecture}{Conjecture}
\newtheorem{corollary}{Corollary}

\theorembodyfont{\normalfont}

\newenvironment{proof}{\begin{trivlist}\item{}\normalfont\textit{Proof. }}{\hfill$\square$\end{trivlist}}



\newdimen\proofrulebreadth \proofrulebreadth=.05em
\newdimen\proofdotseparation \proofdotseparation=1.25ex
\newdimen\proofrulebaseline \proofrulebaseline=2ex
\newcount\proofdotnumber \proofdotnumber=3
\let\then\relax
\def\hfi{\hskip0pt plus.0001fil}
\mathchardef\squigto="3A3B
\newif\ifinsideprooftree\insideprooftreefalse
\newif\ifonleftofproofrule\onleftofproofrulefalse
\newif\ifproofdots\proofdotsfalse
\newif\ifdoubleproof\doubleprooffalse
\let\wereinproofbit\relax
\newdimen\shortenproofleft
\newdimen\shortenproofright
\newdimen\proofbelowshift
\newbox\proofabove
\newbox\proofbelow
\newbox\proofrulename
\def\shiftproofbelow{\let\next\relax\afterassignment\setshiftproofbelow\dimen0 }
\def\shiftproofbelowneg{\def\next{\multiply\dimen0 by-1 }%
\afterassignment\setshiftproofbelow\dimen0 }
\def\setshiftproofbelow{\next\proofbelowshift=\dimen0 }
\def\setproofrulebreadth{\proofrulebreadth}
\def\prooftree{
\ifnum  \lastpenalty=1
\then   \unpenalty
\else   \onleftofproofrulefalse
\fi
\ifonleftofproofrule
\else   \ifinsideprooftree
        \then   \hskip.5em plus1fil
        \fi
\fi
\bgroup
\setbox\proofbelow=\hbox{}\setbox\proofrulename=\hbox{}%
\let\justifies\proofover\let\leadsto\proofoverdots\let\Justifies\proofoverdbl
\let\using\proofusing\let\[\prooftree
\ifinsideprooftree\let\]\endprooftree\fi
\proofdotsfalse\doubleprooffalse
\let\thickness\setproofrulebreadth
\let\shiftright\shiftproofbelow \let\shift\shiftproofbelow
\let\shiftleft\shiftproofbelowneg
\let\ifwasinsideprooftree\ifinsideprooftree
\insideprooftreetrue
\setbox\proofabove=\hbox\bgroup$\displaystyle 
\let\wereinproofbit\prooftree
\shortenproofleft=0pt \shortenproofright=0pt \proofbelowshift=0pt
\onleftofproofruletrue\penalty1
}
\def\eproofbit{
\ifx    \wereinproofbit\prooftree
\then   \ifcase \lastpenalty
        \then   \shortenproofright=0pt  
        \or     \unpenalty\hfil         
        \or     \unpenalty\unskip       
        \else   \shortenproofright=0pt  
        \fi
\fi
\global\dimen0=\shortenproofleft
\global\dimen1=\shortenproofright
\global\dimen2=\proofrulebreadth
\global\dimen3=\proofbelowshift
\global\dimen4=\proofdotseparation
\global\count255=\proofdotnumber
$\egroup  
\shortenproofleft=\dimen0
\shortenproofright=\dimen1
\proofrulebreadth=\dimen2
\proofbelowshift=\dimen3
\proofdotseparation=\dimen4
\proofdotnumber=\count255
}
\def\proofover{
\eproofbit 
\setbox\proofbelow=\hbox\bgroup 
\let\wereinproofbit\proofover
$\displaystyle
}%
\def\proofoverdbl{
\eproofbit 
\doubleprooftrue
\setbox\proofbelow=\hbox\bgroup 
\let\wereinproofbit\proofoverdbl
$\displaystyle
}%
\def\proofoverdots{
\eproofbit 
\proofdotstrue
\setbox\proofbelow=\hbox\bgroup 
\let\wereinproofbit\proofoverdots
$\displaystyle
}%
\def\proofusing{
\eproofbit 
\setbox\proofrulename=\hbox\bgroup 
\let\wereinproofbit\proofusing
\kern0.3em$
}
\def\endprooftree{
\eproofbit 
  \dimen5 =0pt
\dimen0=\wd\proofabove \advance\dimen0-\shortenproofleft
\advance\dimen0-\shortenproofright
\dimen1=.5\dimen0 \advance\dimen1-.5\wd\proofbelow
\dimen4=\dimen1
\advance\dimen1\proofbelowshift \advance\dimen4-\proofbelowshift
\ifdim  \dimen1<0pt
\then   \advance\shortenproofleft\dimen1
        \advance\dimen0-\dimen1
        \dimen1=0pt
        \ifdim  \shortenproofleft<0pt
        \then   \setbox\proofabove=\hbox{%
                        \kern-\shortenproofleft\unhbox\proofabove}%
                \shortenproofleft=0pt
        \fi
\fi
\ifdim  \dimen4<0pt
\then   \advance\shortenproofright\dimen4
        \advance\dimen0-\dimen4
        \dimen4=0pt
\fi
\ifdim  \shortenproofright<\wd\proofrulename
\then   \shortenproofright=\wd\proofrulename
\fi
\dimen2=\shortenproofleft \advance\dimen2 by\dimen1
\dimen3=\shortenproofright\advance\dimen3 by\dimen4
\ifproofdots
\then
        \dimen6=\shortenproofleft \advance\dimen6 .5\dimen0
        \setbox1=\vbox to\proofdotseparation{\vss\hbox{$\cdot$}\vss}%
        \setbox0=\hbox{%
                \advance\dimen6-.5\wd1
                \kern\dimen6
                $\vcenter to\proofdotnumber\proofdotseparation
                        {\leaders\box1\vfill}$%
                \unhbox\proofrulename}%
\else   \dimen6=\fontdimen22\the\textfont2 
        \dimen7=\dimen6
        \advance\dimen6by.5\proofrulebreadth
        \advance\dimen7by-.5\proofrulebreadth
        \setbox0=\hbox{%
                \kern\shortenproofleft
                \ifdoubleproof
                \then   \hbox to\dimen0{%
                        $\mathsurround0pt\mathord=\mkern-6mu%
                        \cleaders\hbox{$\mkern-2mu=\mkern-2mu$}\hfill
                        \mkern-6mu\mathord=$}%
                \else   \vrule height\dimen6 depth-\dimen7 width\dimen0
                \fi
                \unhbox\proofrulename}%
        \ht0=\dimen6 \dp0=-\dimen7
\fi
\let\doll\relax
\ifwasinsideprooftree
\then   \let\VBOX\vbox
\else   \ifmmode\else$\let\doll=$\fi
        \let\VBOX\vcenter
\fi
\VBOX   {\baselineskip\proofrulebaseline \lineskip.2ex
        \expandafter\lineskiplimit\ifproofdots0ex\else-0.6ex\fi
        \hbox   spread\dimen5   {\hfi\unhbox\proofabove\hfi}%
        \hbox{\box0}%
        \hbox   {\kern\dimen2 \box\proofbelow}}\doll%
\global\dimen2=\dimen2
\global\dimen3=\dimen3
\egroup 
\ifonleftofproofrule
\then   \shortenproofleft=\dimen2
\fi
\shortenproofright=\dimen3
\onleftofproofrulefalse
\ifinsideprooftree
\then   \hskip.5em plus 1fil \penalty2
\fi
}


\makeatletter
\def\section{\@startsection{section}{1}{0pt}{-3.25ex plus -1ex minus 
-.2ex}{1.5ex plus .2ex minus .3ex}{\normalfont\large\bf}}
\def\subsection{\@startsection {subsection}{2}{0pt}{-2ex plus -1ex minus 
   -.2ex}{1.5ex plus .2ex minus .3ex}{\normalfont\normalsize\bf}}
\makeatother

\newlength{\tw}\setlength{\tw}{\textwidth}\addtolength{\tw}{-\arrayrulewidth}
\addtolength{\tw}{-2\tabcolsep}


\newcommand{\cutoption}[1]{}
\newcommand{\RvG}[1]{#1} 
\newcommand{\plat}[1]{\raisebox{0pt}[0pt][0pt]{#1}}     
	
\newcommand{\Green}{OliveGreen}
\newcommand{\Blue}{Blue}

\newcommand{\Red}{Red}

\newcommand{\Brown}{Brown}


\newcommand{\linkcolor}{Black}
\newcommand{\colorlinks}{\color{\linkcolor}}

\newcounter{l}
\newcommand{\rln}[3]{\colorlinks\setcounter{l}{#2-#1}\put(#1,#3){\line(1,0){\value{l}}}}
\newcommand{\lln}[3]{\colorlinks\setcounter{l}{#1-#2}\put(#2,#3){\line(1,0){\value{l}}}}
\newcommand{\uln}[3]{\colorlinks\setcounter{l}{#3-#2}\put(#1,#2){\line(0,1){\value{l}}}}
\newcommand{\dln}[3]{\colorlinks\setcounter{l}{#2-#3}\put(#1,#2){\line(0,-1){\value{l}}}}

\newcommand{\hln}[3]{\ifthenelse{#2 > #1}
{\rln{#1}{#2}{#3}}{\lln{#1}{#2}{#3}}}
\newcommand{\vln}[3]{\ifthenelse{#3 > #2}
{\uln{#1}{#2}{#3}}{\dln{#1}{#2}{#3}}}
\newcommand{\link}[4]{
\vln{#1}{#3}{#4}\vln{#2}{#3}{#4}\hln{#1}{#2}{#4}}

\newlength{\boxht}
\newcommand{\obox}[2]{\raisebox{#1}{\makebox[0pt][l]{#2}}}
\newcommand{\oboxlink}[5]{\obox{#1}{\begin{picture}(0,#5)\link{#2}{#3}{#4}{#5}\end{picture}}}
\newcommand{\olinkexplicit}[4]{\oboxlink{\boxht}{#1}{#2}{#3}{#4}}
\newlength{\boxdp}\newlength{\uboxlocalvar}
\newcommand{\ubox}[2]{\settoheight{\uboxlocalvar}{#2}\addtolength{\uboxlocalvar}{#1}%
\raisebox{-\uboxlocalvar}{\makebox[0pt][l]{#2}}}
\newcommand{\uboxlink}[5]{\ubox{#1}{\begin{picture}(0,#5)(0,-#5)\link{#2}{#3}{-#4}{-#5}\end{picture}}}
\newcommand{\ulinkexplicit}[4]{\uboxlink{\boxdp}{#1}{#2}{#3}{#4}}

\newcommand{\olinkbasey}{0}\newcommand{\ulinkbasey}{0}
\newcommand{\olinkspecy}[3]{\olinkexplicit{#1}{#2}{\olinkbasey}{#3}} 
\newcommand{\ulinkspecy}[3]{\ulinkexplicit{#1}{#2}{\ulinkbasey}{#3}} 

\newcommand{\olinky}{4}\newcommand{\ulinky}{\olinky}
\newcommand{\oolinky}{7}\newcommand{\uulinky}{\oolinky}

\newcommand{\olink}[2]{\olinkspecy{#1}{#2}{\olinky}}
\newcommand{\oolink}[2]{\olinkspecy{#1}{#2}{\oolinky}}

\newcommand{\ulink}[2]{\ulinkspecy{#1}{#2}{\ulinky}}
\newcommand{\uulink}[2]{\ulinkspecy{#1}{#2}{\uulinky}}

\newcommand{\centredlinks}[2]{
\settoheight{\boxht}{\ensuremath{#1}}\settodepth{\boxdp}{\ensuremath{#1}}#2\ensuremath{#1}}
\newlength{\updepthboxvar}
\newcommand{\updepthbox}[1]{\settodepth{\updepthboxvar}{#1}\raisebox{\updepthboxvar}{#1}}
\newcommand{\usualdepth}{.1ex}
\newcommand{\shiftdown}[1]{\raisebox{-\usualdepth}{#1}}
\newlength{\fattenheight}\newlength{\fattendepth}
\newcommand{\fattenbox}[1]{\settoheight{\fattenheight}{#1}\addtolength{\fattenheight}{.5ex}%
\settodepth{\fattendepth}{#1}\addtolength{\fattendepth}{0ex}\raisebox{0pt}[\fattenheight][\fattendepth]{#1}}
\newcommand{\links}[2]{\fattenbox{\shiftdown{\updepthbox{\centredlinks{#1}{#2}}}}}
\newlength{\commadepth}\settodepth{\commadepth}{,}

\newcommand{\defn}[1]{{\textit{\textbf{#1}}}}

\newcommand{\ie}{\emph{i.e.}}

\newcommand{\eg}{\emph{e.g.}}

\newcommand{\cf}{\emph{cf.}}
\newcommand{\figurerule}{\vspace*{1.8ex}\hrule}

\newcommand{\filledbox}{\rule{1.2ex}{1.2ex}}

\renewcommand{\perp}{^\bot}
\newcommand{\with}{\&}
\newcommand{\plus}{\oplus}
\newcommand{\tensor}{\otimes}
\newcommand{\tightplus}{\!\oplus\!}
\newcommand{\tighttensor}{\!\otimes\!}
\newlength{\parrdp}\newlength{\parrht}
\settodepth{\parrdp}{$\&$}\settoheight{\parrht}{$\&$}
\newcommand{\parr}{\raisebox{-\parrdp}{\raisebox{\parrht}{\rotatebox{180}{$\&$}}}}
\newlength{\smallparrdp}\newlength{\smallparrht}\settodepth{\smallparrdp}{{\small$\&$}}\settoheight{\smallparrht}{{\small$\&$}}
\newcommand{\smallparr}{\mkern1mu\raisebox{-.1\smallparrdp}{\raisebox{\smallparrht}{\rotatebox{180}{\small$\&$}}}\mkern1mu}
\newlength{\footnoteparrdp}\newlength{\footnoteparrht}\settodepth{\footnoteparrdp}{{\footnotesize$\&$}}\settoheight{\footnoteparrht}{{\footnotesize$\&$}}
\newcommand{\footnoteparr}{\mkern1mu\raisebox{-\footnoteparrdp}{\raisebox{\footnoteparrht}{\rotatebox{180}{\footnotesize$\&$}}}\mkern1mu}
\newlength{\tinyparrdp}\newlength{\tinyparrht}\settodepth{\tinyparrdp}{{\tiny$\&$}}\settoheight{\tinyparrht}{{\tiny$\&$}}
\newcommand{\tinyparr}{\raisebox{-\tinyparrdp}{\raisebox{\tinyparrht}{\rotatebox{180}{\tiny$\&$}}}}
\newcommand{\cutsymbol}{\ast}
\newcommand{\cut}{\mkern-1mu\cutsymbol\mkern-1mu}

\newcommand{\axlabel}{\mathsf{ax}}
\newcommand{\parlabel}{\parr}
\newcommand{\smallparlabel}{\smallparr}
\newcommand{\tensorlabel}{\tensor}
\newcommand{\withlabel}{\with}

\newcommand{\leftpluslabel}{\oplus_1}
\newcommand{\rightpluslabel}{\oplus_2}
\newcommand{\truecutlabel}{\mathsf{cut}}
\newcommand{\cutlabel}{\,\mbox{\normalsize$\cutsymbol$}}
\newcommand{\ipluslabel}{\oplus_i}
\newcommand{\jpluslabel}{\oplus_j}
\newcommand{\mixlabel}{\mathsf{mix}}

\newcommand{\axiomrule}[1]{\[\;\justifies {#1}\;\]}
\newcommand{\piproof}[2]{\[\;\Pi_{#1}\justifies{#2}\]}
\newcommand{\convparr}{\tinyparr}
\newcommand{\comm}[2]{\mathsf{C}^{_{#1}}_{^{#2}}}
\newcommand{\conv}[2]{\;\begin{array}{c}\comm{#1}{#2}\\\longrightarrow\\\longleftarrow\\\comm{#2}{#1}\end{array}\;}
\newcommand{\diagconv}[1]{\;\begin{array}{c}\comm{#1}{#1}\\\longleftrightarrow\end{array}\;}

\newcommand{\commpair}[2]{\comm{#1}{#2}\text{/}\comm{#2}{#1}}

\newcommand{\jumpgraph}[1]{\mathcal{G}_{#1}}
\newcommand{\fullgraph}{\jumpgraph{\theta}}

\newcommand{\judgement}{\raisebox{.15ex}{$\mkern9mu\rhd\mkern8mu$}}

\newcommand{\mallcut}{MALL$^{\mkern-5mu\cutlabel}$}

\newcommand{\seqrel}{\begin{picture}(24,4)\put(10.5,3){\makebox(0,0){\small$\mathsf{s}$}\
}\put(4,3){\vector(1,0){16}}\end{picture}}

\makeatletter
\def\@listi{\leftmargin\leftmargini
            \parsep 2.5\p@ \@plus1.5\p@ \@minus\p@
            \topsep 5\p@   \@plus2\p@ \@minus5\p@
            \itemsep2.5\p@ \@plus1.5\p@ \@minus\p@}
\let\@listI\@listi
\@listi
\def\@listii {\leftmargin\leftmarginii
              \labelwidth\leftmarginii
              \advance\labelwidth-\labelsep
              \topsep    1\p@ \@plus\p@ \@minus\p@
              \parsep    1\p@   \@plus\p@  \@minus\p@
              \itemsep   \parsep}
\def\@listiii{\leftmargin\leftmarginiii
              \labelwidth\leftmarginiii
              \advance\labelwidth-\labelsep
              \topsep    \z@
              \parsep    \z@
              \itemsep   \topsep}
\makeatother
\newcommand{\A}{A_1}
\newcommand{\B}{A_2}
\newcommand{\X}{B_1}
\newcommand{\Y}{B_2}

\newcommand{\convVgap}{4ex}
\newcommand{\convHgap}{\hspace{8ex}}

\newcommand{\cutfreeMALL}{MALL$^{\mkern-5mu-}$\xspace}
\newcommand{\fullMALL}{MALL\xspace}


\title{\fullMALL proof nets identify proofs\\ modulo rule commutation}
\author{%
   \sc Rob van Glabbeek \\ \small
	NICTA\!\thanks{NICTA is funded by the Australian 
        Government through the Department of Communications and 
        the Australian Research Council through the ICT Centre of 
        Excellence Program.}\;/Data61, CSIRO \& UNSW,\, Sydney
   \and 
   \sc Dominic Hughes \\ \small Stanford University and U.C.\
      Berkeley\thanks{This research was conducted primarily whilst a
        Visiting Scholar at Stanford in the Mathematics and Computer
        Science departments, and completed as a Visiting Scholar in
        the Berkeley Logic Group. I gratefully acknowledge my
        respective hosts, Sol Feferman, Vaughan Pratt, and Wes
        Holliday.}}
\date{}

\begin{document}
\vspace{-3ex}\maketitle

\vspace{-3ex}\begin{quotation}\small\noindent
We show that the proof nets introduced in \cite{HG03,HvG} for \fullMALL
(Multiplicative Additive Linear Logic, without units) identify cut-free proofs modulo rule
commutation: two cut-free proofs translate to the same proof net \emph{if and
only if} one can be obtained from the other by a succession of rule commutations.
This result holds with and without the mix rule, and we extend it with cut.
\end{quotation}

\section{Introduction}

The proof nets for \fullMALL (Multiplicative Additive Linear Logic
\cite{Gir87}, without units) introduced in \cite{HG03,HvG} solved numerous issues with
monomial proof nets \cite{Gir96}, for example:
\begin{itemize}
\item There is a simple (deterministic) translation function from
  cut-free proofs to proof nets.
\vspace{-2pt}
\item Cut elimination is simply defined and strongly normalising.
\vspace{-2pt}
\item Proof nets form a semi (\ie, unit-free) star-autonomous category with 
(co)products.
\end{itemize}  
A proof net is a set of \defn{linkings} on a sequent.  Each linking is
a set of \defn{links} between complementary formula leaves (literal
occurrences).  Figure~\ref{intro-translation-eg}
\begin{figure}[tb]
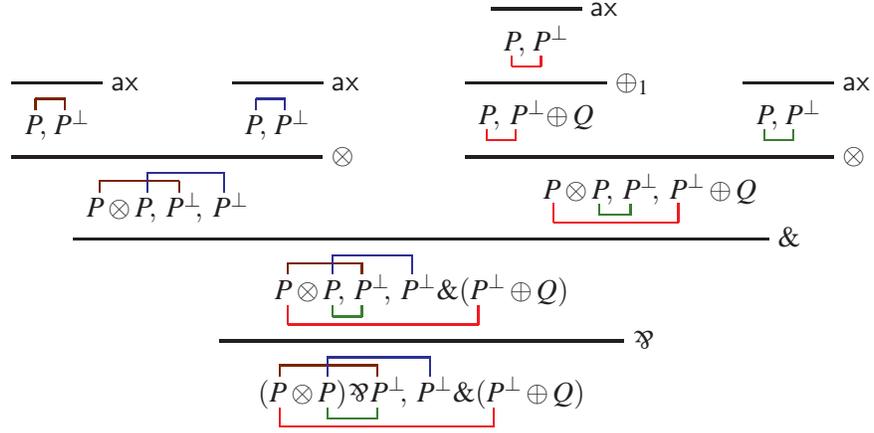

\begin{center}
\newcommand{\PPperpolink}{\links{P,\:P\perp}{\olink{4}{15}}}
\newcommand{\PPperpulink}{\links{P,\:P\perp}{\ulink{3}{14}}}
\newcommand{\tensorrulegap}{\hspace{6ex}}
\newcommand{\axrulegap}{\hspace{5ex}}
\prooftree
\thickness=0.08em
\[
  \[ \[ \justifies
  \;\;\renewcommand{\linkcolor}{\Brown}\PPperpolink\;\; \using
  \axlabel \] \axrulegap \[ \justifies
  \;\;\renewcommand{\linkcolor}{\Blue}\PPperpolink\;\; \using \axlabel
  \] \justifies \;\; \links{P\tensor
  P,\;P\perp\!\!,\;P\perp}{\renewcommand{\linkcolor}{\Brown}\olink{5}{35}\renewcommand{\linkcolor}{\Blue}\oolink{23}{52}}
  \;\; \using\tensorlabel \] \tensorrulegap \[ \[ \[ \justifies
  \;\;\renewcommand{\linkcolor}{\Red}\PPperpulink \;\; \using \axlabel
  \] \justifies \;\; \links{P,\;P\perp\!\plus
  Q}{\renewcommand{\linkcolor}{\Red}\ulink{3}{14}} \;\; \using
  \leftpluslabel \] \axrulegap \[ \justifies
  \;\;\renewcommand{\linkcolor}{\Green}\PPperpulink\;\; \using
  \axlabel \] \justifies \;\; \links{P\tensor
  P,\;P\perp\!\!,\;P\perp\plus
  Q}{\renewcommand{\linkcolor}{\Green}\ulink{21}{33}\renewcommand{\linkcolor}{\Red}\uulink{4}{51}}
  \;\; \using \tensorlabel \] \justifies \links{P\tensor
  P,\;P\perp\!\!,\;P\perp\with (P\perp\plus Q)}
  {\renewcommand{\linkcolor}{\Brown}\olink{5}{33}\renewcommand{\linkcolor}{\Blue}\oolink{22}{52}\renewcommand{\linkcolor}{\Green}\ulink{22}{33}\renewcommand{\linkcolor}{\Red}\uulink{5}{77}}
  \using \withlabel
\]
\justifies 
  \;\;\;\;\;\;
  \links{(P\tensor P)\smallparr P\perp\!\!,\;P\perp\with (P\perp\plus Q)}
    {\renewcommand{\linkcolor}{\Brown}\olink{8}{45}\renewcommand{\linkcolor}{\Blue}\oolink{26}{65}\renewcommand{\linkcolor}{\Green}\ulink{26}{45}\renewcommand{\linkcolor}{\Red}\uulink{8}{89}}
  \;\;\;\;\;\;
  \using \smallparlabel
\endprooftree
\end{center}
\caption{Example of the inductive
translation of a \fullMALL proof into a proof
net.\label{intro-translation-eg} The concluding proof net has two
linkings, one drawn above the sequent, the other below.  Each has two
links.  The proof nets further up in the derivation have one or two
linkings, correspondingly above/below the sequent.}%
\figurerule
\end{figure}%
illustrates the translation of a proof into a proof net.

In this paper we prove that the translation precisely captures
proofs modulo rule commutation:
two proofs translate to the same proof net \emph{if and only if} one
can be obtained from the other by a succession of rule commutations.
A \defn{rule commutation} is a
transposition of adjacent rules that preserves subproofs immediately above,
with possible duplication/identification, for example
\begin{displaymath}
\begin{array}{ccc}
\begin{prooftree}\thickness=.08em
\axiomrule{P\perp\!\!,P}
\quad
\[ \axiomrule{Q,Q\perp} \justifies Q,\;R\plus Q\perp\using\rightpluslabel \]
\justifies P\perp\!\!,\;P\tensor Q,\;R\plus Q\perp \using\tensorlabel
\end{prooftree}
&\quad\quad\longrightarrow\quad\quad&
\begin{prooftree}\thickness=.08em
\[\axiomrule{P\perp\!\!,P}\quad\quad \axiomrule{Q,Q\perp}\justifies
  P\perp\!\!,\;P\tensor Q\;,Q\perp\!\using\tensorlabel \]
\justifies   P\perp\!\!,\;P\tensor Q,\;R\plus Q\perp\using\rightpluslabel
\end{prooftree}
\end{array}
\end{displaymath}
in which the lower $\tensor$-rule commutes over the $\plus$-rule, or
\begin{displaymath}
\newcommand{\leftsubpr}{{\color{\Blue}\[\axiomrule{P\perp\!\!,P}\hspace{2ex} \justifies P\perp\!\!,P\plus R \using\leftpluslabel \]}}
\newcommand{\bigsub}{\[ \leftsubpr\axiomrule{Q\perp\!\!,Q} \justifies P\perp\!\!,(P\plus R)\tensor Q\perp\!\!,Q \using \tensorlabel \]}
\begin{array}{ccc}
\begin{prooftree}\thickness=.08em
\leftsubpr
\[ \axiomrule{Q\perp\!\!,Q} \axiomrule{Q\perp\!\!,Q}
   \justifies 
   Q\perp\!\!,\;Q\with Q\using\withlabel \]
\justifies P\perp\!\!,\;(P\plus R)\tensor Q\perp\!\!,\;Q\with Q \using\tensorlabel
\end{prooftree}
&\hspace{2.5ex}\longrightarrow\hspace{2.5ex}&
\begin{prooftree}\thickness=.08em
\bigsub
\bigsub
\justifies
P\perp\!\!,\;P\tensor Q\perp\!\!,\;Q\with Q \using \withlabel
\end{prooftree}
\end{array}
\end{displaymath}
illustrating duplication 
(of the $\tensor$-rule and subproof\hspace{.8ex} {\footnotesize\color{\Blue}\begin{prooftree}
\raisebox{-1.8ex}[0ex][0ex]{\begin{prooftree}\justifies\raisebox{.8ex}[0ex][0ex]{\footnotesize$P\perp\!\!,P$}\end{prooftree}} 
\justifies \raisebox{.8ex}[0ex][0ex]{$P\perp\!\!,P\plus R$} \using \leftpluslabel	  
\end{prooftree}}\,)
as the $\tensor$-rule commutes over the
$\with$-rule.

\section{Cut-free MALL}\label{mall-sec}
Let \cutfreeMALL denote cut-free multiplicative-additive linear logic without units
\cite{Gir87}.\footnote{We treat
cut in Section~\ref{cut-sec}.}
Formulas are built from literals (propositional
variables $P,Q,\ldots$ and their negations $P\perp$, $Q\perp,\ldots$)
by the binary connectives \defn{tensor}~$\otimes$, \defn{par}~$\parr$,
\defn{with}~$\with$ and \defn{plus}~$\oplus$.  
Negation\label{negation} $(-)\perp$ extends to arbitrary formulas
with  $P\perp{}\perp=P$ on propositional variables and
de Morgan duality: $(A\tensor B)\perp=A\perp\parr B\perp$, $(A\parr
B)\perp=A\perp\tensor B\perp$, $(A\plus B)\perp=A\perp\with B\perp$,
and $(A\with B)\perp=A\perp\plus B\perp$.  
We identify a formula with its parse tree, labelled with literals on
leaves and connectives on internal vertices.
A \defn{sequent} is a disjoint union of formulas.  Thus a
sequent is a labelled forest.  We write comma for
disjoint union.  For example, 
\newcommand{\simplemllsequent}{P\perp\!\!,\;\,(P\tensor P\perp)\parr P}%
$$\simplemllsequent$$ is the labelled forest
\begin{center}
\setlength{\unitlength}{1.9pt}
\begin{picture}(140,37)(-15,-18)
\newcommand{\rtwo}{15}
\newcommand{\rperp}{41}
\newcommand{\rtensor}{51}
\newcommand{\ess}{60}
\newcommand{\spar}{62}
\newcommand{\sperp}{81}
\newcommand{\toprow}{14}
\newcommand{\midrow}{0}
\newcommand{\botrow}{-15}
\newcommand{\topy}{19}
\newcommand{\topyy}{25}
\newcommand{\topyyy}{30}
\put(\rtwo,\toprow){\makebox(0,1){$~\;P\perp$}}
\put(\rperp,\toprow){\makebox(0,0){$P$}}
\put(\rtensor,\midrow){\makebox(0,0){$\tensor$}}
\put(\ess,\toprow){\makebox(0,1){$~\;P\perp$}}
\put(\spar,\botrow){\makebox(0,0){$\parr$}}
\put(\sperp,\toprow){\makebox(0,0){$P$}}
\put(49,3){\line(-3,4){5}}
\put(53,3){\line(3,4){5}}
\put(60,-12){\line(-3,4){7}}
\put(64,-12){\line(3,4){16}}
\end{picture}
\end{center}
Sequents are proved using the following rules:
\begin{center}
\vspace{1ex}
\hspace{2ex}
\begin{prooftree}\thickness=.08em
\strut
\justifies
\,P, P\perp
\using\axlabel
\end{prooftree}
\hspace{8.5ex}
\begin{prooftree}\thickness=.08em
\Gamma,A\;\;\;\;\;B,\Delta
\justifies
\Gamma,A\tensor B,\Delta
\using\tensorlabel
\end{prooftree}
\hspace{5.6ex}
\begin{prooftree}\thickness=.08em
\Gamma,\,A,\,B
\justifies
\Gamma,\,A\smallparr B
\using\smallparlabel
\vspace{-1ex}
\end{prooftree}
\hspace{1ex}

\vspace{3ex}

\begin{prooftree}\thickness=.08em
\Gamma,A\;\;\;\;\;\Gamma,B
\justifies
\Gamma,A\with B\using\withlabel
\end{prooftree}
\hspace{6ex}
\begin{prooftree}\thickness=.08em
\Gamma,\,A
\justifies
\Gamma,\,A\plus B\using\leftpluslabel
\end{prooftree}
\hspace{6ex}
\begin{prooftree}\thickness=.08em
\Gamma,\,B
\justifies
\Gamma,\,A\plus B\using\rightpluslabel
\end{prooftree}
\hspace{6ex}
\begin{prooftree}\thickness=.08em
\Gamma\;\;\;\;\;\Delta
\justifies
\Gamma,\Delta \using \mixlabel\hspace{2ex}\text{(optional)}
\end{prooftree}
\vspace{1ex}
\end{center}
The $\mixlabel$-rule is optional and absent by default.
Our treatment is valid for \cutfreeMALL with and without $\mixlabel$.

Throughout this document $P,Q,R$ range over
propositional variables, $A,B,\ldots$ over formulas, and
$\Gamma\mkern-3mu,\Delta,\Sigma$ over sequents.
Each of the proof rules above yields an 
implicit tracking of subformula occurrences, mapping the vertices in
the hypotheses to the ones in the conclusion.
A formula occurrence in the conclusion of a rule $\rho$ is \defn{generated}
by $\rho$ if it is not in the image of this map.

\section{Function from proofs to proof nets}\label{sec-proofnets}

A \defn{link} on $\Gamma$ is a pair (two-element set) of leaves in $\Gamma$.
A \defn{linking} on $\Gamma$ is a set of links on
$\Gamma$.\footnote{The paper \cite{HvG} imposed additional conditions
in the definition of a linking.  We do not need these conditions
here.}
Every \cutfreeMALL proof $\Pi$ of $\Gamma$ defines a set $\theta_\Pi$
of linkings on $\Gamma$ as follows.
Define a \defn{$\with$-resolution} $R$ of $\Pi$ to be any result of
deleting one branch above each $\with$-rule of $\Pi$.
By downwards tracking of formula leaves,
the axiom rules of $R$ determine a linking $\lambda_R$ on $\Gamma$.
Define $\theta_\Pi=\{\lambda_R:R\text{ is a }\with\text{-resolution of
}\Pi\}$.

Table~\ref{cut-free-induction} defines the same function by induction.
See Figure~\ref{intro-translation-eg} for an example.  
The fact that this yields the same linking set as the resolution-based
function follows from a simple structural induction on proofs.
Note that $\tensor$ (resp.\ $\with$) is \defn{multiplicative} (resp.\
\defn{additive}): multiply (resp.\ add) the number of linkings in
$\theta$ and $\theta'$ to obtain the number of linkings on the
conclusion.\footnote{This observation relies on $\theta$ and $\theta'$
having no common linking, which follows (by structural induction) from
the fact that in any proof net on $\Gamma$, every linking touches
every formula in $\Gamma$ (\ie, for every linking $\lambda$ in the
proof net, and every formula(-occurrence) $A$ in $\Gamma$, some link
of $\lambda$ contains a leaf of $A$).}

A linking set is a \defn{proof net} if it is the translation of a
proof.\footnote{In \cite{HG03,HvG} we defined a proof net via a
  geometric criterion on a linking set, and proved that a linking
  set meets this criterion if and only if it is the translation of a proof.}

\begin{table}\small
\renewcommand{\tensor}{\tighttensor}
\renewcommand{\parr}{\smallparr}
\begin{displaymath}
\prooftree\thickness=0.08em
\strut\justifies
     {\big\{
     \raisebox{-.25ex}{\footnotesize$\begin{picture}(0,10)\link{4}{13}{7}{9}\end{picture}P,P\perp$}\,%
     \big\}\judgement P,P\perp}
\using \axlabel
\endprooftree
\hspace{6ex}
\prooftree
\theta\judgement\Gamma\!,\:A\;\;\;\;\;\theta'\judgement\Gamma\!,\:B
\justifies
\theta\cup\theta'\judgement\Gamma\!,\:A\with B
\using\withlabel
\endprooftree
\hspace{6ex}
\begin{prooftree}\thickness=.08em
\theta\judgement\Gamma\!,\:A\;\;\;\;\;\;\;\;\;\;\;\;\;\;\;\;\;\;\;\;\theta'\judgement B,\Delta
\justifies
\{\lambda\cup\lambda'\,:\,\lambda\in\theta,
     \lambda'\in\theta'\}\judgement \Gamma\!,\:A\tensor
     B,\Delta
\using\tensorlabel
\end{prooftree}
\end{displaymath}
\vspace{1ex}
\begin{displaymath}
\prooftree\thickness=0.08em
\theta\judgement\Gamma\!,\:A,B \justifies
\theta\judgement\Gamma\!,\:A\parr B
\using\parlabel
\endprooftree
\hspace{6ex}
\begin{prooftree}\thickness=.08em
\theta\judgement\Gamma\!,\:A
\justifies
\theta\judgement\Gamma\!,\:A\plus B
\using\leftpluslabel
\end{prooftree}
\hspace{6ex}
\begin{prooftree}\thickness=.08em
\theta\judgement\Gamma\!,\:B
\justifies
     \theta\judgement\Gamma\!,\:A\plus B
\using\rightpluslabel
\end{prooftree}
\hspace{6ex}
\begin{prooftree}\thickness=.08em
\theta\judgement\Gamma\;\;\;\;\;\;\;\;\;\;\;\;\;\;\;\;\;\;\;\;\theta'\judgement \Delta
\justifies
\{\lambda\cup\lambda'\,:\,\lambda\in\theta,
     \lambda'\in\theta'\}\judgement \Gamma\!,\Delta
\using\mixlabel
\end{prooftree}
\end{displaymath}
\caption{\label{cut-free-induction}%
Alternative but equivalent definition of the function from
\cutoption{cut-free }\cutfreeMALL proofs to linking sets.  Here
$\theta\protect\judgement\Gamma$ signifies that $\theta$ is a set of
linkings on $\Gamma$.  We use the implicit tracking of formula leaves
downwards through rules.  The base case is a singleton linking set
whose only linking comprises a single link, between $P$ and $P\perp$.}
\figurerule\end{table}

\section{Rule commutations}\label{sec-commutations}

Tables~\ref{diag-commutations}, \ref{nondiag-commutations} and
\ref{mix-commutations}
exhaustively list the rule commutations of \cutoption{cut-free }\cutfreeMALL.  Each
commutation may be applied in context, \ie, to any subproof.
This collection of rule commutations is not ad hoc: they are
generated systematically from a general
definition of commutation, presented in
the Appendix, which is more
liberal than the one analysed by Kleene \cite{Kle52} and Curry
\cite{Cur52} in the context of sequent calculus \cite[Def.\
5.2.1]{TS96}.
\begin{table}
\small
\vspace{-2ex}
\begin{displaymath}
\renewcommand{\parr}{\footnoteparr}
\renewcommand{\convHgap}{\hspace{4ex}}
\begin{array}{c@{\convHgap}c@{\convHgap}c}
\begin{prooftree}\thickness=.08em
\[ \piproof{}{\Gamma,\A,\B,\X,\Y}
   \justifies \Gamma,\A\parr \B,\X,\Y\using\parlabel \]
\justifies \Gamma,\A\parr \B,\X\parr \Y\using\parlabel
\end{prooftree}
&\diagconv{\convparr}&
\begin{prooftree}\thickness=.08em
\[ \piproof{}{\Gamma,\A,\B,\X,\Y}
   \justifies \Gamma,\A,\B,\X\parr \Y\using\parlabel \]
\justifies \Gamma,\A\parr \B,\X\parr \Y\using\parlabel
\end{prooftree}
\\\\[\convVgap]
\begin{prooftree}\thickness=.08em
\[ \piproof{}{\Gamma,A_i,B_j}
   \justifies \Gamma,A_1\plus A_2,B_j\using\ipluslabel \]
\justifies \Gamma,A_1\plus A_2,B_1\plus B_2\using\jpluslabel
\end{prooftree}
&\diagconv{\plus}&
\begin{prooftree}\thickness=.08em
\[ \piproof{}{\Gamma,A_i,B_j}
   \justifies \Gamma,A_i,B_1\plus B_2\using\jpluslabel \]
\justifies \Gamma,A_1\plus A_2,B_1\plus B_2\using\ipluslabel
\end{prooftree}
\\\\[\convVgap]
\begin{prooftree}\thickness=.08em
\piproof{1}{\Gamma,\A}
\[
 \piproof{2}{\B,\Delta,\X} \piproof{3}{\Y,\Sigma}
 \justifies \B,\Delta,\X\tensor \Y,\Sigma \using \tensorlabel
\]
\justifies
\Gamma,\A\tensor \B,\Delta,\X\tensor \Y,\Sigma
\using \tensorlabel
\end{prooftree}
&\diagconv{\tensor}&
\begin{prooftree}\thickness=.08em
\[
 \piproof{1}{\Gamma,\A} \piproof{2}{\B,\Delta,\X}
 \justifies \Gamma,\A\tensor \B,\Delta,\X \using \tensorlabel
\]
\piproof{3}{\Y,\Sigma}
\justifies
\Gamma,\A\tensor \B,\Delta,\X\tensor \Y,\Sigma
\using \tensorlabel
\end{prooftree}
\\\\[\convVgap]
\hspace{-1ex}
\begin{prooftree}\thickness=.08em
\[
 \piproof{1}{\Gamma,\A,\X} \piproof{2}{\Gamma,\B,\X} \justifies
 \Gamma,\A\with \B,\X \using \withlabel
\]
\[
 \piproof{3}{\Gamma,\A,\Y} \piproof{4}{\Gamma,\B,\Y} \justifies
 \Gamma,\A\with \B,\Y \using \withlabel
\]
\justifies
\Gamma,\A\with \B,\X\with \Y
\using \withlabel
\end{prooftree}
&\diagconv{\with}&
\begin{prooftree}\thickness=.08em
\[
 \piproof{1}{\Gamma,\A,\X} \piproof{3}{\Gamma,\A,\Y} \justifies
 \Gamma,\A,\X\with \Y \using \withlabel
\]
\[
 \piproof{2}{\Gamma,\B,\X} \piproof{4}{\Gamma,\B,\Y} \justifies
 \Gamma,\B,\X\with \Y \using \withlabel
\]
\justifies
\Gamma,\A\with \B,\X\with \Y
\using \withlabel
\end{prooftree}
\\\\[\convVgap]
\end{array}
\end{displaymath}
\caption{\label{diag-commutations}Homogeneous rule commutations.  In the last conversion,
note the reversal of $\Pi_2$ and $\Pi_3$.}\figurerule\end{table}
\begin{table}\small\begin{displaymath}
\renewcommand{\parr}{\footnoteparr}
\renewcommand{\convVgap}{2.9ex}
\begin{array}{c@{\convHgap}c@{\convHgap}c}
\begin{prooftree}\thickness=.08em
\[ \piproof{}{\Gamma,A_i,\X,\Y}
   \justifies \Gamma,A_1\plus A_2,\X,\Y\using\ipluslabel \]
\justifies \Gamma,A_1\plus A_2,\X\parr \Y\using\parlabel
\end{prooftree}
&\conv{\plus}{\convparr}&
\begin{prooftree}\thickness=.08em
\[ \piproof{}{\Gamma,A_i,\X,\Y}
   \justifies \Gamma,A_i,\X\parr \Y\using\parlabel \]
\justifies \Gamma,A_1\plus A_2,\X\plus \Y\using\ipluslabel
\end{prooftree}
\\\\[\convVgap]
\begin{prooftree}\thickness=.08em
\[ \piproof{1}{\Gamma,A_i,\X}
   \justifies \Gamma,A_1\plus A_2,\X\using\ipluslabel \]
\[ \piproof{2}{\Gamma,A_i,\Y}
   \justifies \Gamma,A_1\plus A_2,\Y\using\ipluslabel \]
\justifies \Gamma,A_1\plus A_2,\X\with \Y\using\withlabel
\end{prooftree}
&\conv{\plus}{\with}&
\begin{prooftree}\thickness=.08em
\[ \piproof{1}{\Gamma,A_i,\X}
   \piproof{2}{\Gamma,A_i,\Y}
   \justifies \Gamma,A_i,\X\with \Y\using\withlabel \]
\justifies \Gamma,A_1\plus A_2,\X\with \Y\using\ipluslabel
\end{prooftree}
\\\\[\convVgap]
\begin{prooftree}\thickness=.08em
\[ \piproof{1}{\Gamma,\A,\B,\X}
   \justifies \Gamma,\A\parr \B,\X\using\parlabel \]
\[ \piproof{2}{\Gamma,\A,\B,\Y}
   \justifies \Gamma,\A\parr \B,\Y\using\parlabel \]
\justifies \Gamma,\A\parr \B,\X\with \Y\using\withlabel
\end{prooftree}
&\conv{\convparr}{\with}&
\begin{prooftree}\thickness=.08em
\[ \piproof{1}{\Gamma,\A,\B,\X}
   \piproof{2}{\Gamma,\A,\B,\Y}
   \justifies \Gamma,\A,\B,\X\with \Y\using\withlabel \]
\justifies \Gamma,\A\parr \B,\X\with \Y\using\parlabel
\end{prooftree}
\\\\[\convVgap]
\begin{prooftree}\thickness=.08em
\piproof{1}{\Gamma,\A}
\[ \piproof{2}{\B,\Delta,B_i}\justifies \B,\Delta,B_1\plus B_2\using\ipluslabel \]
\justifies \Gamma,\A\tensor \B,\Delta,B_1\plus B_2 \using\tensorlabel
\end{prooftree}
&\conv{\plus}{\tensor}&
\begin{prooftree}\thickness=.08em
\[\piproof{1}{\Gamma,\A}\piproof{2}{\B,\Delta,B_i}\justifies
  \Gamma,\A\tensor \B,\Delta,B_i\using\tensorlabel \]
\justifies   \Gamma,\A\tensor \B,\Delta,B_1\plus B_2\using\ipluslabel
\end{prooftree}
\\\\[\convVgap]
\begin{prooftree}\thickness=.08em
\piproof{1}{\Gamma,\A}
\[  \piproof{2}{\B,\Delta,\X,\Y}
    \justifies \B,\Delta,\X\parr \Y\using\parlabel \]
\justifies \Gamma,\A\tensor \B,\Delta,\X\parr \Y\using\tensorlabel
\end{prooftree}
&\conv{\convparr}{\tensor}&
\begin{prooftree}\thickness=.08em
\[ \piproof{1}{\Gamma,\A}
   \piproof{2}{\B,\Delta,\X,\Y}
   \justifies \Gamma,\A\tensor \B,\Delta,\X,\Y\using\tensorlabel \]
\justifies \Gamma,\A\tensor \B,\Delta,\X\parr \Y\using\parlabel
\end{prooftree}
\\\\[\convVgap]
\begin{prooftree}\thickness=.08em
\piproof{1}{\Gamma,\A}
\[
 \piproof{2}{\B,\Delta,\X} \piproof{3}{\B,\Delta,\Y} \justifies
 \B,\Delta,\X\with \Y \using \withlabel
\]
\justifies
\Gamma,\A\tensor \B,\Delta,\X\with \Y
\using \tensorlabel
\end{prooftree}
&\conv{\with}{\tensor}&
\begin{prooftree}\thickness=.08em
\[
 \piproof{1}{\Gamma,\A} \piproof{2}{\B,\Delta,\X} \justifies
 \Gamma,\A\tensor \B,\Delta,\X \using \tensorlabel
\]
\[
 \piproof{1}{\Gamma,\A} \piproof{3}{\B,\Delta,\Y} \justifies
 \Gamma,\A\tensor \B,\Delta,\Y \using \tensorlabel
\]
\justifies
\Gamma,\A\tensor \B,\Delta,\X\with \Y
\using \withlabel
\end{prooftree}
\\[\convVgap]
\end{array}
\end{displaymath}
\caption{\label{nondiag-commutations}Heterogeneous rule commutations.
  The last three
  have symmetric variants, obtained by switching $A_2\tensor A_1$
  for $A_1\tensor A_2$ and exchanging hypotheses of rules from left to
  right, correspondingly. (The hypotheses are not ordered; however, we
  apply the convention that a hypothesis that contributes to one side of
  a $\tensorlabel$ or $\withlabel$ connective is drawn on that side.)
  Note that there are two copies of the subproof $\Pi_1$ on the right
  side of the final conversion.}
\figurerule
\end{table}

\begin{table}
\small
\renewcommand{\parr}{\footnoteparr}
\begin{displaymath}
\begin{array}{c@{\convHgap}c@{\convHgap}c}
\begin{prooftree}\thickness=.08em
\piproof{1}{\Gamma}
\[
 \piproof{2}{\Delta} \piproof{3}{\Sigma}
 \justifies \Delta,\Sigma \using \mixlabel
\]
\justifies
\Gamma,\Delta,\Sigma
\using \mixlabel
\end{prooftree}
&\diagconv{\mixlabel}&
\begin{prooftree}\thickness=.08em
\[
 \piproof{1}{\Gamma} \piproof{2}{\Delta}
 \justifies \Gamma,\Delta \using \mixlabel
\]
\piproof{3}{\Sigma}
\justifies
\Gamma,\Delta,\Sigma
\using \mixlabel
\end{prooftree}
\\\\[\convVgap]
\begin{prooftree}\thickness=.08em
\piproof{1}{\Gamma}
\[
 \piproof{2}{\Delta,\X} \piproof{3}{\Y,\Sigma}
 \justifies \Delta,\X\tensor \Y,\Sigma \using \tensorlabel
\]
\justifies
\Gamma,\Delta,\X\tensor \Y,\Sigma
\using \mixlabel
\end{prooftree}
&\conv{\tensor}{\mixlabel}&
\begin{prooftree}\thickness=.08em
\[
 \piproof{1}{\Gamma} \piproof{2}{\Delta,\X}
 \justifies \Gamma,\Delta,\X \using \mixlabel
\]
\piproof{3}{\Y,\Sigma}
\justifies
\Gamma,\Delta,\X\tensor \Y,\Sigma
\using \tensorlabel
\end{prooftree}
\\\\[\convVgap]
\begin{prooftree}\thickness=.08em
\piproof{1}{\Gamma}
\[ \piproof{2}{\Delta,B_i}\justifies \Delta,B_1\plus B_2\using\ipluslabel \]
\justifies \Gamma,\Delta,B_1\plus B_2 \using\mixlabel
\end{prooftree}
&\conv{\plus}{\mixlabel}&
\begin{prooftree}\thickness=.08em
\[\piproof{1}{\Gamma}\piproof{2}{\Delta,B_i}\justifies
  \Gamma,\Delta,B_i\using\mixlabel \]
\justifies
\Gamma,\Delta,B_1\plus B_2 \using\mixlabel
\using\ipluslabel
\end{prooftree}
\\\\[\convVgap]
\begin{prooftree}\thickness=.08em
\piproof{1}{\Gamma}
\[  \piproof{2}{\Delta,\X,\Y}
    \justifies\Delta,\X\parr \Y\using\parlabel \]
\justifies \Gamma,\Delta,\X\parr \Y\using\mixlabel
\end{prooftree}
&\conv{\convparr}{\mixlabel}&
\begin{prooftree}\thickness=.08em
\[ \piproof{1}{\Gamma}
   \piproof{2}{\Delta,\X,\Y}
   \justifies \Gamma,\Delta,\X,\Y\using\mixlabel \]
\justifies \Gamma,\Delta,\X\parr \Y\using\parlabel
\end{prooftree}
\\\\[\convVgap]
\begin{prooftree}\thickness=.08em
\piproof{1}{\Gamma}
\[
 \piproof{2}{\Delta,\X} \piproof{3}{\Delta,\Y} \justifies
 \Delta,\X\with \Y \using \withlabel
\]
\justifies
\Gamma,\Delta,\X\with \Y
\using \mixlabel
\end{prooftree}
&\conv{\with}{\mixlabel}&
\begin{prooftree}\thickness=.08em
\[
 \piproof{1}{\Gamma} \piproof{2}{\Delta,\X} \justifies
 \Gamma,\Delta,\X \using \mixlabel
\]
\[
 \piproof{1}{\Gamma} \piproof{3}{\Delta,\Y} \justifies
 \Gamma,\Delta,\Y \using \mixlabel
\]
\justifies
\Gamma, \Delta,\X\with \Y
\using \withlabel
\end{prooftree}
\\\\[\convVgap]
\end{array}
\end{displaymath}
\caption{\label{mix-commutations}Mix rule commutations.  
The second conversion has a symmetric variant, in which, on the
right-hand side, the mix rule applies to the hypothesis contributing
to the right argument of the tensor. Since sequents are unordered,
we do not need symmetric variants obtained by exchanging
the hypotheses of the mix rule.
Our general definition of rule commutation in
the Appendix also allows a version of $\comm{\mixlabel}{\mixlabel}$
with three applications of $\mixlabel$, two
above and one below. However, this conversion can be generated
from the top conversion listed above and is therefore not listed explicitly.}
\figurerule
\end{table}

Our main result is that the kernel of the function from
\cutoption{cut-free }\cutfreeMALL proofs to proof nets coincides precisely
with equivalence modulo rule commutations:

\begin{theorem}\label{kernel}
Two \cutoption{cut-free }\cutfreeMALL proofs translate to the same proof net if and only
if they can be converted into each other by a series of rule commutations.
\end{theorem}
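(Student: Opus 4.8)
The plan is to prove the two implications separately. For the \emph{only if} direction, since a rule commutation may be applied in any subproof and the translation of Table~\ref{cut-free-induction} is compositional, it suffices to verify that each single commutation in Tables~\ref{diag-commutations}--\ref{mix-commutations} leaves the induced linking set unchanged. I would first record how each rule acts on linking sets: $\parr$ and $\plus$ act as the identity, $\with$ acts as union $\theta\cup\theta'$, and $\tensor$ and $\mixlabel$ act as the \emph{product} $\{\lambda\cup\lambda':\lambda\in\theta,\ \lambda'\in\theta'\}$. Each listed commutation then collapses to a basic algebraic identity among these three operations: associativity of the product (for the homogeneous $\tensor$ and $\mixlabel$ commutations and the $\tensor$/$\mixlabel$ commutation), associativity and commutativity of union (for the $\with$/$\with$ commutation), distributivity of the product over union (for $\with$/$\tensor$ and $\with$/$\mixlabel$), and the trivial identities obtained when one side is the identity (for every commutation involving $\parr$ or $\plus$). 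Carrying out this finite check, while keeping track of the reversal of $\Pi_2,\Pi_3$ and of the duplications of $\Pi_1$ flagged in the captions, settles this direction.

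For the \emph{if} direction I would argue by induction on the number of connectives in the conclusion $\Gamma$; this measure is robust because commutations preserve $\Gamma$ while stripping a root connective strictly decreases it, whereas the duplications in Tables~\ref{nondiag-commutations}--\ref{mix-commutations} make induction on proof \emph{size} unsafe. In the base case $\Gamma$ is literals only: without $\mixlabel$ it is forced to be $P,P\perp$ with a unique proof; with $\mixlabel$ the single linking is a perfect matching fixing the axioms, and two proofs differ only in the nesting of $\mixlabel$, reconciled by the $\mixlabel$/$\mixlabel$ commutation. For the inductive step I would extract a root connective, prioritising the invertible ones. If some formula has root $A\parr B$, every proof is commutation-equivalent to one ending in the $\parr$-rule generating it: the $\parr$-rules producing that occurrence form a frontier pushed downward by the $\parr$/$\plus$, $\parr$/$\tensor$, $\parr$/$\mixlabel$ and $\parr$/$\parr$ commutations, sibling rules being merged across each intervening $\with$ by $\parr$/$\with$, until a single $\parr$-rule reaches the bottom; stripping it leaves the linking set literally unchanged, so the hypothesis applies on $\Gamma',A,B$. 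If instead a root is $A\with B$, I would symmetrically float the generating $\with$-rules down and strip. Here $\theta=\theta^A\sqcup\theta^B$, where $\theta^A$ (resp.\ $\theta^B$) collects the linkings touching a leaf of $A$ (resp.\ $B$); since each linking comes from a $\with$-resolution keeping exactly one branch, it touches exactly one of $A,B$, so this partition depends only on $\theta$ and $\Gamma$. Both proofs then have premise-nets $\theta^A,\theta^B$, and the hypothesis applies to each.

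The crux is the remaining case, in which every root is a $\tensor$, a $\plus$, or a literal. Here I would prove an \emph{extraction lemma}: for each last rule the net can support at a root---a uniform $\plus_i$ (all linkings touch the same side $A_i$, the other side empty), a $\tensor$ splitting the context along a fixed partition, or a $\mixlabel$ splitting $\Gamma$ into two independent nonempty parts---every proof with that net is commutation-equivalent to one ending in precisely that rule, with premise-nets the pieces induced by $\theta$. As before the generating rules must be floated down and, where they sit above a $\with$, merged across its branches, now using the $\plus$/$\tensor$, $\plus$/$\mixlabel$, $\tensor$/$\tensor$, $\tensor$/$\mixlabel$ and the heterogeneous $\with$/$\tensor$, $\with$/$\mixlabel$ commutations (the last two being exactly the distributivity moves that split one rule into one copy per $\with$-branch). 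Since $\theta=\theta_{\Pi_1}$, the last rule of $\Pi_1$ (necessarily a $\tensor$-, $\plus$- or $\mixlabel$-rule once the $\parr$- and $\with$-root cases are excluded) already exhibits one supported decomposition; I would take it as the target and apply the extraction lemma to \emph{both} $\Pi_1$ and $\Pi_2$. Because the induced pieces---the side of a uniform $\plus$, the two factors of a $\tensor$ or $\mixlabel$ split---are functions of $\theta$ and $\Gamma$ alone, the corresponding premises carry equal nets on strictly smaller sequents, and the hypothesis closes the step.

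The main obstacle is this extraction lemma for the non-invertible connectives, and within it two points need most care. First, floating a $\tensor$-, $\plus$- or $\mixlabel$-rule to the bottom is not a simple bubbling argument: the occurrence may be generated in several $\with$-branches at once, so the flotation must simultaneously push down and \emph{merge} these copies, and one must check that the heterogeneous commutations of Tables~\ref{nondiag-commutations}--\ref{mix-commutations} suffice to realise every such merge in arbitrary context. Second, for a $\tensor$ or $\mixlabel$ root one must show the context split fed to the induction is \emph{forced by the net}, independent of the proof---the additive-free, multiplicative correctness content underlying the proof-net criterion of \cite{HG03,HvG}---so that the same partition is extracted from $\Pi_1$ and from $\Pi_2$ and the two recursions line up. I expect the bookkeeping of these merges across $\with$-branches, rather than any single commutation, to be the hardest part of the argument.
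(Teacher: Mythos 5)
Your skeleton matches the paper's overall strategy (easy direction by checking that each tabled commutation preserves the linking set; hard direction by converting both proofs to end with a common last rule and recursing), and the easy direction as you present it is fine. But the step you defer as ``bookkeeping'' is a genuine gap, and it is where essentially all of the paper's work lies. To float a $\tensor$- or $\mixlabel$-rule down past a $\with$-rule you must merge its two copies, one in each $\with$-branch, using the last commutation of Table~\ref{nondiag-commutations} (resp.\ Table~\ref{mix-commutations}) read from right to left; that commutation is only applicable when the two hypotheses not containing $B_1$ or $B_2$ are the same sequent \emph{and carry syntactically identical subproofs}. In an arbitrary proof the two branches contain two a priori different subproofs of that shared hypothesis, so no sequence of local applications of the tabled commutations realises the merge. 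The paper closes exactly this hole by (i) proving that those two subproofs induce the \emph{same} proof net, which requires the dependency graph $\jumpgraph{\theta}$ with its jump edges and the toggling property (\ref{P2}) of proof nets imported from \cite{HvG}, under a hypothesis saying every path between the two sides crosses the connective being extracted; and then (ii) invoking the very statement being proven --- equal nets imply interconvertibility, Lemma~\ref{lem-same} --- on these strictly smaller subproofs to make them identical before merging. Consequently your ``extraction lemma'' cannot be proven first and then used: it must be proven by \emph{simultaneous} induction with the main equivalence statement (this is the paper's mutual induction of Lemmas~\ref{lem-path}--\ref{lem-same}). Moreover its hypothesis must be graph-theoretic rather than ``the net can support this rule at a root'': supportedness is a global existential that does not localise when you chase the target rule into subproofs, whereas the paper's path conditions restrict automatically because $\jumpgraph{\Pi_i}$ is a subgraph of $\jumpgraph{\Pi}$. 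None of this machinery (the graph, jump edges, toggling) appears in your plan, and without it the crux case does not go through.

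A second, independent error: your claim that the context split of a $\tensor$ or $\mixlabel$ root is ``forced by the net, independent of the proof'' is false once $\mixlabel$ is present. For the sequent $P\tensor Q,\, P\perp\!,\, Q\perp\!,\, R,\, R\perp$ and the proof net whose unique linking is $\{\{P,P\perp\},\{Q,Q\perp\},\{R,R\perp\}\}$, one proof ends with $\mixlabel$, another ends with a $\tensor$-rule whose left hypothesis is $P,P\perp\!,R,R\perp$, and yet another with left hypothesis $P,P\perp$: neither the last rule nor its split is a function of the net and the sequent. Your argument can survive structurally, since you aim both extractions at the last rule of $\Pi_1$, so forcedness is not actually needed --- but then what you must prove is the split-parameterised statement, which is the graph-conditioned lemma described above. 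The paper handles this tension by the connected/disconnected dichotomy inside the proof of Lemma~\ref{lem-same}: when $\jumpgraph{\theta}$ is connected the last rule is unique (Lemmas~\ref{lem-last-gen} and~\ref{lem-conn}), and when it is not, both proofs are first converted to end with $\mixlabel$ over one fixed partition (Lemma~\ref{lem-no-path}). Finally, a small repair: your induction measure (number of connectives of $\Gamma$) fails to decrease when a $\mixlabel$ split puts all connectives into one part; the sub-forest order on $\Gamma$, as in the paper, works.
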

We will obtain this result as a special case of Proposition~\ref{MALL* kernel}.

\section{Cut}
\label{cut-sec}

Let \fullMALL be \cutfreeMALL, as defined in Section~\ref{mall-sec},
together with the rule
\begin{center}
\begin{prooftree}\thickness=.08em
\Gamma,A\;\;\;\;\;A\perp,\Delta
\justifies
\Gamma,\Delta
\using\text{\sf cut}
\end{prooftree}
\end{center}
\RvG{Table~\ref{cut-commutations} lists the rule commutations for $\truecutlabel$;
the rule commutations for \fullMALL not involving $\truecutlabel$
are exactly the same as in the cut-free case
(Tables~\ref{diag-commutations}--\ref{mix-commutations}).}

The translation of \fullMALL proofs to proof nets \cite{HvG} goes via a
technically convenient variant \mallcut{} of \fullMALL in which cuts are
retained in sequents.
Extend \fullMALL formulas to include \defn{cuts} $A\cut A\perp$ for any
cut-free \fullMALL formula $A$, where $\:\cut\:$ is the \defn{cut connective}.
By definition $\cut$ is unordered, \ie, $A\cut A\perp=A\perp\cut A$
(in contrast to \fullMALL formulas, where connectives are ordered, \eg,
$A\tensor B\neq B\tensor A$ when $A\neq B$).
Note that $\cut$ can only occur in outermost position.

As before, a \defn{sequent} is a disjoint union of formulas (but now a
formula may be a cut $A\cut A\perp$).
\begin{table}[t]\small
\renewcommand{\parr}{\smallparr}
\renewcommand{\tensor}{\tighttensor}
\newcommand{\gap}{\hspace{8ex}}
$$
\begin{prooftree}\thickness=.08em
\strut\justifies P,\,P\perp
\using\axlabel
\end{prooftree}
\gap
\begin{prooftree}\thickness=.08em
\Gamma,\,A,B
\justifies
\Gamma,\,A\parr B
\using\parlabel
\end{prooftree}
\gap
\begin{prooftree}\thickness=.08em\Gamma,A
      \hspace{5ex}
      B,\Delta
\justifies
     \Gamma,A\tensor B,\Delta
\using\tensorlabel
\end{prooftree}
\gap
\begin{prooftree}\thickness=.08em
\Gamma,\,A
\justifies
\Gamma,\,A\tightplus B
\using\leftpluslabel
\end{prooftree}
\gap
\begin{prooftree}\thickness=.08em\Gamma,B
\justifies     
\Gamma,A\tightplus B
\using\rightpluslabel
\end{prooftree}
$$                                                                                       
\vspace{2ex}                                                                             
$$
\begin{prooftree}\thickness=.08em
\Gamma\;\;\;\;\;\Delta
\justifies
\Gamma,\Delta \using \mixlabel\hspace{2ex}\text{(optional)}
\end{prooftree}
\hspace{6ex}
\begin{prooftree}\thickness=.08em
\Gamma,\,A \hspace{6ex} A\perp\!,\Delta
\justifies
\Gamma,\:A\cut A\perp\!,\Delta
\using\cutlabel
\end{prooftree}
\hspace{6ex}
\begin{prooftree}\thickness=.08em
\Omega_1,\Gamma,\,A
\hspace{6ex}
\Omega_2,\Gamma,\,B
\justifies      
\Omega_1,\Omega_2,\Gamma,\,A\with B
\using\withlabel\hspace{2ex}
\text{($\Omega_i$ is cut-only)}
\end{prooftree}
$$
\vspace{-1.3ex}
\caption{Rules for deriving sequents in \mallcut{}. Here $P$
  ranges over propositional variables, $A,B$ range over \fullMALL formulas,
  $\Gamma,\Delta$ range over \mallcut{} sequents, and the $\Omega_i$ range
  over cut-only sequents (disjoint unions of cuts).\label{mallcut-rules}
  Note that the $\with$-rule may superimpose one or more cuts
  from its two hypotheses (the ones contained in $\Gamma$), or may leave all
  cut pairs separate (when putting all cuts in $\Omega_i$).}
\figurerule\vspace{-3pt}\end{table}%
Sequents are derived in \mallcut{} using the rules in
Table~\ref{mallcut-rules}.
The system \mallcut{} is an extension of \cutfreeMALL.  The function
taking a \cutoption{cut-free }\cutfreeMALL proof to a set of linkings on a \cutoption{cut-free }\cutfreeMALL
sequent (defined in Section~\ref{sec-proofnets},
page~\pageref{sec-proofnets}) extends in the obvious way to a function
taking a \mallcut{} proof $\Pi$ to a set of linkings $\theta_\Pi$ on a \mallcut{} sequent:
a \defn{$\with$-resolution} $R$ of $\Pi$
is any result of deleting one branch above each $\with$-rule
of $\Pi$;
by downwards tracking of formula leaves,
the axiom rules of $\Pi$ determine a linking $\lambda_R$;
define $\theta_\Pi=\{\lambda_R:R\text{ is a }\with\text{-resolution of                     
}\Pi\}$.
Alternatively, the same function can be defined inductively, by means
of a direct extension of the cut-free case in
Table~\ref{cut-free-induction} \cite{HvG}.

A linking set on a sequent $\Gamma$ is a
\defn{proof net} if it is the translation of a \mallcut{}
proof of $\Gamma$.

Every \mallcut{} proof projects to a \fullMALL proof by deleting all cuts,
thereby turning each $\cutlabel$ rule into a standard cut rule.
Let $\theta$ be a set of linkings on a sequent.  A \fullMALL proof $\Pi$
\defn{translates} to $\theta$, or is a \defn{sequentialisation} of
$\theta$, denoted $\Pi\seqrel\theta$, if $\Pi$ is the projection of a
\mallcut{} proof translating to $\theta$.

Restricted to the cut-free case, the sequentialisation relation
$\seqrel$ is a function taking a proof to a proof net, exactly the
cut-free translation defined in Section~\ref{sec-proofnets}.
In the presence of cuts, more than one proof net may correspond to the
same \fullMALL proof. Examples can be found in \cite{HvG}.

\newcommand{\cutinconv}{\mbox{\small$\mkern.5mu\cut$}}

{\begin{table}\vspace{-2ex}\small
\begin{displaymath}
\renewcommand{\parr}{\footnoteparr}
\renewcommand{\cutinconv}{\truecutlabel}
\renewcommand{\convVgap}{3.9ex}
\begin{array}{c@{\convHgap}c@{\convHgap}c}
\begin{prooftree}\thickness=.08em
\piproof{1}{\Gamma,A}
\[
 \piproof{2}{A\perp,\Delta,B} \piproof{3}{B\perp,\Sigma}
 \justifies A\perp,\Delta,\Sigma \using \truecutlabel
\]
\justifies
\Gamma,\Delta,\Sigma
\using \truecutlabel
\end{prooftree}
&\diagconv{\cutinconv}&
\begin{prooftree}\thickness=.08em
\[
 \piproof{1}{\Gamma,A} \piproof{2}{A\perp,\Delta,B}
 \justifies \Gamma,\Delta,B \using \truecutlabel
\]
\piproof{3}{B\perp,\Sigma}
\justifies
\Gamma,\Delta,\Sigma
\using \truecutlabel
\end{prooftree}
\\\\[\convVgap]
\begin{prooftree}\thickness=.08em
  \piproof{1}{\Gamma,\,A}
\[
 \piproof{2}{A\perp,\Delta,\X} \piproof{3}{\Y,\Sigma}
 \justifies A\perp,\Delta,\X\tensor \Y,\Sigma \using \tensorlabel
\]
\justifies
\Gamma,\,\Delta,\X\tensor \Y,\Sigma
\using \truecutlabel
\end{prooftree}
&\conv{\tensor}{\cutinconv} &
\begin{prooftree}\thickness=.08em
\[
 \piproof{1}{\Gamma,A} \piproof{2}{A\perp,\Delta,\X}
 \justifies \Gamma,\Delta,\X \using \truecutlabel
\]
\piproof{3}{\Y,\Sigma}
\justifies
\Gamma,\Delta,\X\tensor \Y,\Sigma
\using \tensorlabel
\end{prooftree}
\\\\[\convVgap]
\begin{prooftree}\thickness=.08em
\piproof{1}{\Gamma,A}
\[ \piproof{2}{A\perp,\Delta,B_i}\justifies A\perp,\Delta,B_1\plus B_2\using\ipluslabel \]
\justifies \Gamma,\Delta,B_1\plus B_2 \using\truecutlabel
\end{prooftree}
&\conv{\plus}{\cutinconv}&
\begin{prooftree}\thickness=.08em
\[\piproof{1}{\Gamma,A}\piproof{2}{A\perp,\Delta,B_i}\justifies
  \Gamma,\Delta,B_i\using\truecutlabel \]
\justifies   \Gamma,\Delta,B_1\plus B_2\using\ipluslabel
\end{prooftree}
\\\\[\convVgap]
\begin{prooftree}\thickness=.08em
\piproof{1}{\Gamma,A}
\[  \piproof{2}{A\perp,\Delta,\X,\Y}
    \justifies A\perp,\Delta,\X\parr \Y\using\parlabel \]
\justifies \Gamma,\Delta,\X\parr \Y\using\truecutlabel
\end{prooftree}
&\conv{\convparr}{\cutinconv}&
\begin{prooftree}\thickness=.08em
\[ \piproof{1}{\Gamma,A}
   \piproof{2}{A\perp,\Delta,\X,\Y}
   \justifies \Gamma,\Delta,\X,\Y\using\truecutlabel \]
\justifies \Gamma,\Delta,\X\parr \Y\using\parlabel
\end{prooftree}
\\\\[\convVgap]
\begin{prooftree}\thickness=.08em
\piproof{1}{\Gamma,A}
\hspace{2ex}
\[
 \piproof{2}{A\perp,\Delta,\X} \piproof{3}{A\perp,\Delta,\Y} \justifies
 A\perp,\Delta,\X\with \Y \using \withlabel
\]
\justifies
\Gamma,\Delta,\X\with \Y
\using \truecutlabel
\end{prooftree}\hspace*{-1ex}
&\conv{\with}{\truecutlabel}&
\hspace*{-2ex}\begin{prooftree}\thickness=.08em
\[
 \piproof{1}{\Gamma,A} \piproof{2}{A\perp,\Delta,\X} \justifies
 \Gamma,\Delta,\X \using \truecutlabel
\]
\hspace{1ex}
\[
 \piproof{1}{\Gamma,A} \piproof{3}{A\perp,\Delta,\Y} \justifies
 \Gamma,\Delta,\Y \using \truecutlabel
\]
\justifies
\Gamma,\Delta,\X\with \Y
\using \withlabel
\end{prooftree}\hspace*{-2ex}
\\\\[\convVgap]
\begin{prooftree}\thickness=.08em
\piproof{1}{\Gamma}
\hspace{2ex}
\[
 \piproof{2}{\Delta,B} 
 \hspace{2ex}
 \piproof{3}{B\perp,\Sigma}
 \justifies \Delta,\Sigma \using \truecutlabel
\]
\justifies
\Gamma,\Delta,\Sigma
\using \mixlabel
\end{prooftree}
&\conv{\cutinconv}{\mixlabel}&
\begin{prooftree}\thickness=.08em
\[
 \piproof{1}{\Gamma} 
 \hspace{2ex}
 \piproof{2}{\Delta,B}
 \justifies \Gamma,\Delta,B \using \mixlabel
\]
\hspace{2ex}
\piproof{3}{B\perp,\Sigma}
\justifies
\Gamma,\Delta,\Sigma
\using \truecutlabel
\end{prooftree}
\\[\convVgap]
\end{array}
\end{displaymath}
\caption{\label{cut-commutations}\fullMALL rule commutations
  involving cut.  The second conversion has a symmetric variant
  in whose right-hand side the $\truecutlabel$ rule applies to the
  hypothesis contributing to the right argument of the tensor.}
  \figurerule\end{table}}%

Let \defn{proof-net equivalence} be the smallest equivalence relation
on \fullMALL proofs such that proofs that have a common translation are equivalent.
Then our main result (Theorem~\ref{kernel}) extends to \fullMALL as follows.
\vspace{-2pt}

\begin{theorem}\label{kernel cut}
Two \fullMALL proofs are proof-net equivalent if and only
if they can be converted into each other by a series of rule commutations.
\vspace{-2pt}
\end{theorem}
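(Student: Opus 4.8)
The plan is to reduce Theorem~\ref{kernel cut} to the \mallcut kernel result, Proposition~\ref{MALL* kernel}, exactly as Theorem~\ref{kernel} was obtained from it. Write $\sim$ for proof-net equivalence and $\approx$ for commutation-equivalence (relatedness by a series of rule commutations). By definition $\sim$ is the \emph{smallest} equivalence relation identifying proofs with a common translation, whereas $\approx$ is \emph{already} an equivalence relation; hence the biconditional may be checked on generators. For $\sim\,\subseteq\,\approx$ it suffices to prove that any two \fullMALL proofs with a common translation are commutation-equivalent, and for $\approx\,\subseteq\,\sim$ it suffices to prove that a single rule commutation $\Pi\to\Pi'$ produces proofs with a common translation, which are then $\sim$-equivalent by definition.

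First I would dispatch $\sim\,\subseteq\,\approx$. Suppose $\Pi\seqrel\theta$ and $\Pi'\seqrel\theta$. By the definition of $\seqrel$ there are \mallcut proofs $\hat\Pi,\hat\Pi'$, each translating to $\theta$, whose projections (deletion of all cuts) are $\Pi,\Pi'$. Since $\hat\Pi$ and $\hat\Pi'$ translate to the same linking set, Proposition~\ref{MALL* kernel} yields a series of \mallcut rule commutations converting $\hat\Pi$ into $\hat\Pi'$. The remaining task is a \emph{projection lemma}: each \mallcut rule commutation, after deletion of cuts, becomes either an instance of one of the \fullMALL commutations of Tables~\ref{diag-commutations}--\ref{cut-commutations} or the identity on \fullMALL proofs. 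Granting it, projecting the whole \mallcut series gives a \fullMALL series converting $\Pi$ into $\Pi'$, so $\Pi\approx\Pi'$.

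For $\approx\,\subseteq\,\sim$ I would run the dual, \emph{lifting} argument. Given a single \fullMALL commutation $\Pi\to\Pi'$, pick any \mallcut proof $\hat\Pi$ projecting to $\Pi$, fixing some choice of cut-superimposition at each $\with$-rule. The local transposition defining $\Pi\to\Pi'$ then lifts to a \mallcut rule commutation $\hat\Pi\to\hat\Pi'$ whose projection is exactly $\Pi\to\Pi'$: for a commutation not involving cut this is immediate, as the cuts are left untouched; for those of Table~\ref{cut-commutations} it is the evident lift of the \fullMALL $\truecutlabel$-rule to the \mallcut $\cutlabel$-rule. By Proposition~\ref{MALL* kernel}, commutation-equivalent \mallcut proofs have equal translations, so $\hat\Pi$ and $\hat\Pi'$ share a translation $\theta$; hence $\Pi\seqrel\theta$ and $\Pi'\seqrel\theta$ and thus $\Pi\sim\Pi'$.

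The principal obstacle is the projection lemma, precisely because passing from \mallcut to \fullMALL is many-to-one: a single \fullMALL $\with$-rule lifts to many \mallcut $\with$-rules depending on how the cut pairs of the two premisses are superimposed, the $\Omega_i$-versus-$\Gamma$ freedom of Table~\ref{mallcut-rules}. Consequently the \mallcut commutations comprise not only lifts of the entries of Tables~\ref{diag-commutations}--\ref{cut-commutations} but also ``cut-invisible'' commutations that merely reorganise how cuts are shared across $\with$-branches. I must verify that every cut-invisible commutation collapses to the identity under cut-deletion while every cut-visible one collapses to a single entry of Table~\ref{cut-commutations}; this is a finite case analysis over the \mallcut commutation schemes, the delicate cases being those in which a $\cutlabel$-rule is transposed past a $\with$-rule, where the superimposition of cuts must be tracked through the transposition.
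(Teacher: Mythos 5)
Your skeleton---checking both inclusions on generators and reducing everything to Proposition~\ref{MALL* kernel} via a projection lemma and a lifting lemma---is the paper's own route (its Lemmas~\ref{lem comm 1} and~\ref{lem comm 2} and Corollary~\ref{corr-kernel cut}), and your direction $\sim\,\subseteq\,\approx$ is sound: your projection lemma is exactly the paper's Lemma~\ref{lem comm 1}, which does hold by inspection of the {\mallcut} commutations, each of which projects to a \fullMALL{} commutation (degenerating to the identity only when the exchanged or identified subproofs happen to have equal projections). Note, however, that you locate the delicacy in the wrong half: projection is the easy direction.

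The genuine gap is in your lifting argument. You assert that for a \fullMALL{} commutation $\Pi\to\Pi'$, \emph{any} {\mallcut} lift $\hat\Pi$ of $\Pi$ admits a {\mallcut} commutation $\hat\Pi\to\hat\Pi'$ projecting to it, and that ``for a commutation not involving cut this is immediate, as the cuts are left untouched.'' The paper refutes precisely this expectation with the counterexample displayed before Lemma~\ref{lem comm 2}. Take $\Pi$ to be an instance of the right-hand side of $\commpair{\with}{\tensor}$ (last row of Table~\ref{nondiag-commutations}): a $\withlabel$-rule over two $\tensorlabel$-rules whose left premisses are proved by \emph{identical} subproofs of $\Gamma,\A$, and let $\Pi\to\Pi'$ be the commutation that identifies these two copies. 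A lift $\hat\Pi$ may prove the corresponding premisses by \emph{distinct} {\mallcut} subproofs $\Pi^*_{1a}$ of $\Omega_1,\Gamma,\A$ and $\Pi^*_{1b}$ of $\Omega_2,\Gamma,\A$ which nevertheless project to the same \fullMALL{} proof (e.g.\ an internal $\withlabel$-rule superimposes two cuts in one subproof and keeps them separate in the other). No {\mallcut} commutation applies to such a $\hat\Pi$ at the root: the $\Omega$-enriched $\comm{\with}{\tensor}$ scheme requires the two subproofs being identified to be literally one and the same proof, and since projection preserves rule structure, no other scheme can project to $\Pi\to\Pi'$. So the cuts are \emph{not} ``left untouched'' merely because neither commuting rule is a cut: superimposition choices at $\withlabel$-rules deep inside the subproofs can block the lift. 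The repair is what the paper's Lemma~\ref{lem comm 2} does: since ``having a common translation'' is symmetric, orient each pair so that the side you lift is \emph{not} an $\alpha\beta$-fragment with $\beta=\with$ and $\alpha\in\{\tensor,\mixlabel,\truecutlabel\}$; lifting from the other side, the duplication creates two copies of one and the same {\mallcut} subproof, whose cuts the $\withlabel$-rule below may superimpose, and the lifted commutation exists and projects to $(\Pi,\Pi')$. With that orientation step inserted, your argument closes.
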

The proof will be the subject of the following section\RvG{s}.

{\begin{table}\vspace{-2ex}\small
\begin{displaymath}
\renewcommand{\parr}{\footnoteparr}
\renewcommand{\convVgap}{3.9ex}
\begin{array}{c@{\convHgap}c@{\convHgap}c}
\begin{prooftree}\thickness=.08em
\piproof{1}{\Gamma,A}
\[
 \piproof{2}{A\perp,\Delta,B} \piproof{3}{B\perp,\Sigma}
 \justifies A\perp,\Delta,B\cut B\perp,\Sigma \using \cutlabel
\]
\justifies
\Gamma,A\cut A\perp,\Delta,B\cut B\perp,\Sigma
\using \cutlabel
\end{prooftree}
&\diagconv{\cutinconv}&
\begin{prooftree}\thickness=.08em
\[
 \piproof{1}{\Gamma,A} \piproof{2}{A\perp,\Delta,B}
 \justifies \Gamma,A\cut A\perp,\Delta,B \using \cutlabel
\]
\piproof{3}{B\perp,\Sigma}
\justifies
\Gamma,A\cut A\perp,\Delta,B\cut B\perp,\Sigma
\using \cutlabel
\end{prooftree}
\\\\[\convVgap]
\begin{prooftree}\thickness=.08em
  \piproof{1}{\Gamma,\,A}
\[
 \piproof{2}{A\perp,\Delta,\X} \piproof{3}{\Y,\Sigma}
 \justifies A\perp,\Delta,\X\tensor \Y,\Sigma \using \tensorlabel
\]
\justifies
\Gamma,\,A\cut A\perp,\Delta,\X\tensor \Y,\Sigma
\using \cutlabel
\end{prooftree}
&\conv{\tensor}{\cutinconv} &
\begin{prooftree}\thickness=.08em
\[
 \piproof{1}{\Gamma,A} \piproof{2}{A\perp,\Delta,\X}
 \justifies \Gamma,A\cut A\perp,\Delta,\X \using \cutlabel
\]
\piproof{3}{\Y,\Sigma}
\justifies
\Gamma,A\cut A\perp,\Delta,\X\tensor \Y,\Sigma
\using \tensorlabel
\end{prooftree}
\\\\[\convVgap]
\begin{prooftree}\thickness=.08em
\piproof{1}{\Gamma,A}
\[ \piproof{2}{A\perp,\Delta,B_i}\justifies A\perp,\Delta,B_1\plus B_2\using\ipluslabel \]
\justifies \Gamma,A\cut A\perp,\Delta,B_1\plus B_2 \using\cutlabel
\end{prooftree}
&\conv{\plus}{\cutinconv}&
\begin{prooftree}\thickness=.08em
\[\piproof{1}{\Gamma,A}\piproof{2}{A\perp,\Delta,B_i}\justifies
  \Gamma,A\cut A\perp,\Delta,B_i\using\cutlabel \]
\justifies   \Gamma,A\cut A\perp,\Delta,B_1\plus B_2\using\ipluslabel
\end{prooftree}
\\\\[\convVgap]
\begin{prooftree}\thickness=.08em
\piproof{1}{\Gamma,A}
\[  \piproof{2}{A\perp,\Delta,\X,\Y}
    \justifies A\perp,\Delta,\X\parr \Y\using\parlabel \]
\justifies \Gamma,A\cut A\perp,\Delta,\X\parr \Y\using\cutlabel
\end{prooftree}
&\conv{\convparr}{\cutinconv}&
\begin{prooftree}\thickness=.08em
\[ \piproof{1}{\Gamma,A}
   \piproof{2}{A\perp,\Delta,\X,\Y}
   \justifies \Gamma,A\cut A\perp,\Delta,\X,\Y\using\cutlabel \]
\justifies \Gamma,A\cut A\perp,\Delta,\X\parr \Y\using\parlabel
\end{prooftree}
\\\\[\convVgap]
\begin{prooftree}\thickness=.08em
\piproof{1}{\Gamma,A}
\hspace{2ex}
\[
 \piproof{2}{A\perp,\Omega_1,\Delta,\X} \piproof{3}{A\perp,\Omega_2,\Delta,\Y} \justifies
 A\perp,\Omega_1,\Omega_2,\Delta,\X\with \Y \using \withlabel
\]
\justifies
\Gamma,A\cut A\perp,\Omega_1,\Omega_2,\Delta,\X\with \Y
\using \cutlabel
\end{prooftree}\hspace*{-1ex}
&\conv{\with}{\cutinconv}&
\hspace*{-2ex}\begin{prooftree}\thickness=.08em
\[
 \piproof{1}{\Gamma,A} \piproof{2}{A\perp,\Omega_1,\Delta,\X} \justifies
 \Gamma,A\cut A\perp,\Omega_1,\Delta,\X \using \cutlabel
\]
\hspace{1ex}
\[
 \piproof{1}{\Gamma,A} \piproof{3}{A\perp,\Omega_2,\Delta,\Y} \justifies
 \Gamma,A\cut A\perp,\Omega_2,\Delta,\Y \using \cutlabel
\]
\justifies
\Gamma,A\cut A\perp,\Omega_1,\Omega_2,\Delta,\X\with \Y
\using \withlabel
\end{prooftree}\hspace*{-2ex}
\\\\[\convVgap]
\begin{prooftree}\thickness=.08em
\piproof{1}{\Gamma}
\hspace{2ex}
\[
 \piproof{2}{\Delta,B} 
 \hspace{2ex}
 \piproof{3}{B\perp,\Sigma}
 \justifies \Delta,B\cut B\perp,\Sigma \using \cutlabel
\]
\justifies
\Gamma,\Delta,B\cut B\perp,\Sigma
\using \mixlabel
\end{prooftree}
&\conv{\cutinconv}{\mixlabel}&
\begin{prooftree}\thickness=.08em
\[
 \piproof{1}{\Gamma} 
 \hspace{2ex}
 \piproof{2}{\Delta,B}
 \justifies \Gamma,\Delta,B \using \mixlabel
\]
\hspace{2ex}
\piproof{3}{B\perp,\Sigma}
\justifies
\Gamma,\Delta,B\cut B\perp,\Sigma
\using \cutlabel
\end{prooftree}
\\[\convVgap]
\end{array}
\end{displaymath}
\caption{\label{cut-star-commutations}\mallcut{} rule commutations
  involving cut.  The second conversion also has a symmetric variant
  in whose right-hand side the $\cutlabel$ cut rule applies to the
  hypothesis contributing to the right argument of the tensor. Since
  the arguments of $\cutlabel$ are unordered, we do not need symmetric
  variants obtained by exchanging the hypotheses of the $\cutlabel$
  cut rule.}  \figurerule\end{table}}%

\section[MALL* rule commutations]{\mallcut{} rule commutations}\label{star commutations}

\RvG{We will obtain Theorem~\ref{kernel cut} from a similar theorem
  for \mallcut. To this end, we need to collect the rule commutations
  for {\mallcut}.}
As in the cut-free case, \RvG{these}
can be generated systematically from the general definition of rule
commutation in the Appendix.
Table~\ref{cut-star-commutations} lists the rule commutations for $\cutlabel$.
The rule commutations for {\mallcut} not involving $\cutlabel$ or $\withlabel$
are exactly the same as in the cut-free case (Tables~\ref{diag-commutations}--\ref{mix-commutations}),
whereas the heterogeneous commutations involving $\withlabel$ are
obtained from the cut-free ones by the addition of $\Omega_1$ and
$\Omega_2$, just as in the rule $\comm{\with}{\cutinconv}$.
The commutation $\comm{\with}{\with}$ is a bit more involved:
it can be obtained from the one in Table~\ref{diag-commutations}
by the addition of \plat{$2^{2^2}-2=14$} variables $\Omega_{ijkl}$ with $i,j,k,l\in\{0,1\}$.
Each cut occurring in the conclusion can be produced by one or more of
the subproofs $\Pi_1$--$\Pi_4$. The variable $\Omega_{1001}$ captures
those cuts that are produced by $\Pi_1$ as well as $\Pi_4$, but not
by $\Pi_2$ or $\Pi_3$; in general the $n^{\rm th}$ index from the series
$i,j,k,l$ indicates whether or not the cuts in $\Omega_{ijkl}$ are
produced by $\Pi_n$. A sequent occurring in the rule commutation is
enriched with $\Omega_{ijkl}$ iff it occurs under $\Pi_n$ for an $n$
such that the \plat{$n^{\rm th}$} index from the series $i,j,k,l$ is set to 1.
The variable $\Omega_{0000}$ is not needed, as it would not occur in
the conclusion, and the variable $\Omega_{1111}$ is superfluous, as it
can be incorporated in $\Gamma$.
Since the resulting rules do not fit on the page, below they are
displayed using an abbreviation: $\Omega^n$ denotes the disjoint union
of the sequents $\Omega_{ijkl}$ where the $n^{\rm th}$ index is set to 1.
Likewise $\Omega^{mn}$ indicates the disjoint union of the sequents
$\Omega_{ijkl}$ where either the \plat{$m^{\rm th}$ or the $n^{\rm th}$}
index is set to one, i.e.\ the non-disjoint union of $\Omega^m$ and $\Omega^n$.

{\small
\[
\begin{array}{c}
\begin{prooftree}\thickness=.08em
\[
 \piproof{1}{\Omega^{1},\Gamma,\A,\X} \piproof{2}{\Omega^{2},\Gamma,\B,\X} \justifies
 \Omega^{12},\Gamma,\A\with \B,\X \using \withlabel
\]
\[
 \piproof{3}{\Omega^{3},\Gamma,\A,\Y} \piproof{4}{\Omega^{4},\Gamma,\B,\Y} \justifies
 \Omega^{34},\Gamma,\A\with \B,\Y \using \withlabel
\]
\justifies
\Omega^{1234},\Gamma,\A\with \B,\X\with \Y
\using \withlabel
\end{prooftree}
\\~
\\
\makebox[0pt][l]{{\large$\updownarrow$}~~$\comm{\with}{\with}$}
\\~
\\
\begin{prooftree}\thickness=.08em
\[
 \piproof{1}{\Omega^{1},\Gamma,\A,\X} \piproof{3}{\Omega^{3},\Gamma,\A,\Y} \justifies
 \Omega^{13},\Gamma,\A,\X\with \Y \using \withlabel
\]
\[
 \piproof{2}{\Omega^{2},\Gamma,\B,\X} \piproof{4}{\Omega^{4},\Gamma,\B,\Y} \justifies
 \Omega^{24},\Gamma,\B,\X\with \Y \using \withlabel
\]
\justifies
\Omega^{1234},\Gamma,\A\with \B,\X\with \Y
\using \withlabel
\end{prooftree}
\vspace{5pt}
\end{array}
\]
}%
The rule commutations for {\fullMALL}
\RvG{(cf.~Section~\ref{cut-sec})} are obtained from the ones of
{\mallcut} by omitting all cuts from sequents.

\section[Proof of the MALL*      rule commutation theorem]
        {Proof of the {\mallcut} rule commutation theorem}

We say that a $\beta$-rule \defn{commutes over} an $\alpha$-rule if
there is a valid {\mallcut} rule commutation where a proof fragment in which
the $\beta$-rule occurs immediately below one or more $\alpha$-rules is
replaced by a proof fragment in which this order is reversed.
Using either the definition of rule commutation from
the Appendix or the enumeration of
Tables~\ref{diag-commutations}, \ref{nondiag-commutations},
\ref{mix-commutations} and~\ref{cut-star-commutations}, enriched with
$\Omega$s as discussed above, it is not hard to check that
this happens if and only if (i) no formula occurrence generated by one
of the $\alpha$-rules tracks to a subformula of a formula generated by
the $\beta$-rule, and (ii) one of the
following cases applies (\cf\ Table~\ref{commutation-table}):
\begin{table}[tb]
\renewcommand{\parr}{\footnoteparr}
$$\begin{array}{l|ccccccc}
\raisebox{-5pt}{$\beta\!$}\raisebox{-2pt}{$\backslash$}\!\alpha
         & \cutlabel & \mixlabel & \tensor   & \plus_1   & \plus_2   & \parr     & \with     \\[5pt]
\hline
\cutlabel&\checkmark &\checkmark &\checkmark &\checkmark &\checkmark &\checkmark &\checkmark \\
\mixlabel&\checkmark &\checkmark &\checkmark &\checkmark &\checkmark &\checkmark &\checkmark \\
\tensor	 &\checkmark &\checkmark &\checkmark &\checkmark &\checkmark &\checkmark &\checkmark \\
\plus_1	 &\checkmark &\checkmark &\checkmark &\checkmark &\checkmark &\checkmark &\checkmark \\
\plus_2	 &\checkmark &\checkmark &\checkmark &\checkmark &\checkmark &\checkmark &\checkmark \\
\parr	 &\circ      &\circ      &\circ      &\checkmark &\checkmark &\checkmark &\checkmark \\
\with	 &\bullet    &\bullet    &\bullet    &\circ      &\circ      &\circ      &\circ      \\
\end{array}$$
\caption{Rule commutations.\label{commutation-table}
The check marks flag pairs \protect\raisebox{2pt}{$\frac{\alpha}{\beta}$}
where a (lower) $\beta$-rule always commutes over an $\alpha$-rule. The marks $\circ$
indicate situations where $\beta$-rules commute over $\alpha$-rules only
under certain syntactic restrictions, which can be found by studying
the results of commuting $\alpha$- over $\beta$-rules.  The $\bullet$
denotes commutation under certain syntactic restrictions.}
\figurerule
\end{table}
\begin{itemize}
\vspace{-1pt}
\item $\beta\in\{\tensor,\plus_1,\plus_2,\mixlabel,\cutlabel\}$;
\vspace{-1pt}
\item $\beta=\parr$ and $\alpha\neq\tensor,\mixlabel,\cutlabel$;
\vspace{-1pt}
\item $\beta=\parr$, $\alpha=\tensor$, $\mixlabel$ or $\cutlabel$, and both
arguments of the formula generated by the $\parr$-rule
occur in the same hypothesis of the $\alpha$-rule;
\vspace{-1pt}
\item $\beta=\with$, $\alpha\neq\tensor,\mixlabel,\cutlabel$,
  and the formula occurrences generated by the two $\alpha$-rules
  track to the same same formula occurrence of the $\beta$-rule.
\vspace{-1pt}
\item $\beta=\with$, $\alpha=\tensor$, $\mixlabel$ or $\cutlabel$, the $\beta$-rule
  generates a formula $B_1\with B_2$, and the hypotheses of the two
  $\alpha$-rules that do not contain $B_1$ or $B_2$ are the same, and
  have identical subproofs.
\end{itemize}
This, in turn, yields exactly the rule commutations of
Tables~\ref{diag-commutations}--\ref{cut-star-commutations},
enriched with $\Omega$s as discussed in Section~\ref{star commutations}.

The following result, a generalisation of Theorem~\ref{kernel},
is a crucial step towards proving Theorem~\ref{kernel cut}.

\begin{proposition}\label{MALL* kernel}
Two {\mallcut} proofs translate to the same proof net if and only
if they can be converted into each other by a series of rule commutations.
\end{proposition}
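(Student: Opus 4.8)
The plan is to prove the two implications separately, treating the forward (soundness) direction as a routine verification and concentrating on the reverse (completeness) direction. For the forward direction I would show that each rule commutation preserves the translation. Using the inductive description of the translation (Table~\ref{cut-free-induction}, extended to \mallcut as in \cite{HvG}), each commutation in Tables~\ref{diag-commutations}--\ref{cut-star-commutations}, together with the $\Omega$-enriched $\comm{\with}{\with}$, merely rearranges, duplicates or identifies the subproofs feeding the affected rules; one checks case by case that the $\with$-resolutions of the two sides are in bijection under an identification preserving the induced linking $\lambda_R$, so that $\theta$ is unchanged. Since commutations may be applied in context, proof-net equivalence is a congruence and this yields $\theta_\Pi=\theta_{\Pi'}$ whenever $\Pi,\Pi'$ are commutation-equivalent.

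For the reverse direction I would induct on the size of the conclusion sequent $\Gamma$ (its number of connectives, cuts and leaves), given two \mallcut proofs $\Pi,\Pi'$ with $\theta_\Pi=\theta_{\Pi'}=\theta$. The driving observation is that a $\parr$-, $\plus_1$-, $\plus_2$- or $\with$-rule generating a \emph{root} formula of $\Gamma$ can always be commuted to the bottom: reading the columns $\alpha\in\{\parr,\plus_1,\plus_2,\with\}$ of Table~\ref{commutation-table}, the only rule obstructing such a downward move is another $\with$, and the one relevant restriction (that the two copies of the pulled-down rule track to a common occurrence sitting in the \emph{shared} context of the lower $\with$) is automatic precisely because that occurrence is a root. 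I would therefore dispatch these three root types directly. For a root $A\parr B$, pull the $\parr$ to the bottom of each proof, strip it (the translation is unchanged), and recurse on the two subproofs of $\Gamma',A,B$. For a root $A\plus B$, since every linking touches every formula, $\theta$ records which summand carries links, so both proofs present the \emph{same} $\plus_i$ at the bottom after pulling; strip and recurse. For a root $A\with B$, the translation splits canonically as $\theta=\theta_A\sqcup\theta_B$ according to whether a linking meets $A$ or $B$; after pulling the $\with$ to the bottom of each proof, the two left premises both translate to $\theta_A$ and the two right premises to $\theta_B$, and we recurse on each pair.

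This reduces matters to the purely \emph{multiplicative} case in which every root of $\Gamma$ is a $\tensor$, a cut $A\cut A\perp$, or a literal, so that each bottom rule is a $\tensor$, a $\cut$ or a $\mixlabel$. Unlike the previous connectives, these cannot in general be pulled to the bottom (a lower $\parr$ merging the two sides of a $\tensor$, or a lower $\with$ whose two branches carry different far-side subproofs, genuinely blocks the move). The key is that the translation nonetheless \emph{determines the multiplicative decomposition}: as $\tensor$, $\cut$ and $\mixlabel$ are multiplicative, no link of any linking crosses the induced partition, so the net singles out a splitting root $\tensor$/cut (or, when the net is disconnected, a $\mixlabel$-decomposition), the partition $\Gamma=\Gamma_1,\Gamma_2$, and the factorisation $\theta=\{\lambda_1\cup\lambda_2:\lambda_1\in\theta_1,\lambda_2\in\theta_2\}$. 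I would then invoke the splitting property of \fullMALL proof nets from \cite{HG03,HvG} to realise this net-determined splitting rule as the bottom rule of \emph{any} sequentialisation, after commutations; once both proofs expose the common bottom rule with the same partition, we strip it and apply the induction hypothesis to the two strictly smaller subsequents carrying $\theta_1,\theta_2$. The base case, $\Gamma$ a multiset of literals matched by a single linking, reduces to rearranging $\mixlabel$-rules via $\comm{\mixlabel}{\mixlabel}$.

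I expect this multiplicative step to be the main obstacle. The difficulty is exactly that the commutation restrictions for $\parr$ and $\with$ can block pulling a $\tensor$ or $\cut$ downwards, so one must show that a genuinely \emph{splitting} tensor or cut is never blocked --- that no lower $\parr$ bridges its two sides and no lower $\with$ forces non-identical far-side subproofs. This is where the geometric correctness and splitting results of \cite{HG03,HvG}, rather than the commutation bookkeeping alone, carry the argument, and where the additive subtlety enters, since the splitting must respect all linkings simultaneously. The remaining pieces are routine: because commutations act in context, the local equivalences assembled at each inductive step compose into a single chain witnessing $\Pi\sim\Pi'$.
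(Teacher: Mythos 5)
Your overall architecture matches the paper's: the forward direction via matching $\with$-resolutions is exactly the paper's argument, your pulling-down of root $\parr$/$\plus$/$\with$ rules is the paper's Lemma~\ref{lem-sep} restricted to those connectives, and your $\mixlabel$-decomposition is Lemma~\ref{lem-no-path}. The genuine gap is in the multiplicative step, which you correctly flag as the main obstacle but then discharge by citing ``the splitting property of \cite{HG03,HvG}''. Those results are sequentialisation theorems: they assert that a correct linking set \emph{has some} sequentialisation beginning with a splitting $\tensor$, $\cutlabel$ or $\mixlabel$. They say nothing about rule commutations, i.e., about converting a \emph{given} proof into one ending with that rule. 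That statement is precisely the content of the paper's Lemmas~\ref{lem-path} and~\ref{lem-no-path}, it is not available in \cite{HG03,HvG}, and proving it is where essentially all of the work lies.

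Moreover, the way it must be proved exposes a structural feature your plan lacks. To push a splitting $\tensor$/$\cutlabel$/$\mixlabel$ down past a lower $\with$, the relevant commutations ($\commpair{\with}{\tensor}$ etc., the $\bullet$ entries of Table~\ref{commutation-table}) require the far-side hypotheses of the two upper rules to be the same \emph{and to carry identical subproofs}. No geometric fact can give you identity of subproofs: what the geometry yields --- via the jump-edge and toggling properties (\ref{resolution})--(\ref{P2}) of \cite{HvG}, in an argument that also has to show the far-side cut-portions $\Omega$ are empty --- is only that the two far-side subproofs have the \emph{same proof net}. To upgrade this to identity one must invoke the very statement being proved (the paper's Lemma~\ref{lem-same}) on those smaller subproofs, so the pull-down lemmas and the main statement must be established by a \emph{simultaneous} induction; your induction on $|\Gamma|$ never sets this up, and your phrase ``no lower $\with$ forces non-identical far-side subproofs'' asks the geometry for more than it can deliver. (A smaller instance of the same omission appears in your $\with$-root case for \mallcut: you need both proofs' bottom $\with$-rules to partition the cuts into $\Omega_1$, $\Omega_2$ and shared context identically; this is the paper's Lemma~\ref{lem-conn}, proved via property (\ref{inhabitation}), not something that follows from $\theta=\theta_A\sqcup\theta_B$ alone.)
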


\begin{proof}
If $\Pi'$ can be obtained from $\Pi$ by commuting rule occurrences,
then $\Pi$ and $\Pi'$ translate to the same linking set:
taking a $\with$-resolution on either side of a commutation
induces essentially the same $\with$-resolutions (or deletions) of the
subproofs $\Pi_i$. For example, in the last commutation in
Table~\ref{nondiag-commutations}, if we choose \emph{right} for the
distinguished $\with$-rule, we delete subproof $\Pi_2$ from both sides,
and induce corresponding $\with$-resolutions of $\Pi_1$ and $\Pi_3$.
The converse is proved below.
\end{proof}
Given a set of linkings $\Lambda$ on a {\mallcut} sequent $\Gamma$, let
$\Gamma\mathord{\restriction}\Lambda$ be obtained from the forest
$\Gamma$ by deleting all vertices that are not below
a leaf of $\Gamma$ that occurs in $\Lambda$ (\ie, in a link in a linking of $\Lambda$).
A $\with$-vertex $w$ in $\Gamma$ is \defn{toggled} by $\Lambda$ if
both arguments of $w$ occur in $\Gamma\mathord{\restriction}\Lambda$.
A link $a$ \defn{depends} on $w$ in $\Lambda$ if there
exist $\lambda,\lambda'\in \Lambda$ such that $a\in\lambda$,
$a\not\in\lambda'$, and $w$ is the only $\with$ toggled by
$\{\lambda,\lambda'\}$.
Construct the \defn{graph} $\jumpgraph{\Lambda}$ \cite{HvG} from
$\Gamma\mathord{\restriction}\Lambda$ by
adding the edges of $\bigcup_{\lambda\in\Lambda}\lambda$, as well
as all \defn{jump} edges from leaves $\ell$ and $\ell'$ to any $\with$-vertex
on which the link $\{\ell,\ell'\}$ depends in $\Lambda$.
Below we will need the following properties of a proof net $\theta$ on
a {\mallcut} sequent $\Gamma$, established in \cite{HvG}.
\begin{eqnarray}
&&\mbox{\it Any set of two linkings in $\theta$ toggles a
  $\with$-vertex of\/ $\Gamma$.}
\label{resolution}\\
&&\mbox{\it Each root vertex (formula occurrence) in
  $\Gamma$ occurs in $\jumpgraph{\theta}$.}
\label{inhabitation}\\
&&\begin{array}{@{}l@{}}\mbox{\it For every $\lambda\mathbin\in\theta$
 and each root $\with$-vertex $w$ in $\Gamma$}\\
 \mbox{\it there is a $\lambda'\mathbin\in\theta$ such that
  $w$ is the only $\with$ toggled by $\{\lambda,\lambda'\}$.}
  \end{array}
\label{P2}
\end{eqnarray}
A formula occurrence $A=A_1 \alpha A_2$ in a {\mallcut} sequent $\Gamma$
\defn{separates} a proof net $\theta$ on $\Gamma$
if (i) $\alpha\in\{\parr,\with\}$, (ii) $\alpha=\plus$ and one of the
$A_i$ does not occur in $\fullgraph$,
or (iii) $\alpha\in\{\tensor,\cutlabel\}$ and
$\jumpgraph\theta$ has no cycle through $\alpha$.
\begin{lemma}\label{lem-last-gen} If the last rule of a {\mallcut} proof
  generates $A$, then $A$ separates the associated proof net.
\end{lemma}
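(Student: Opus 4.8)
The plan is to proceed by cases on the principal connective $\alpha$ of the generated formula $A = A_1\,\alpha\,A_2$, i.e.\ on which rule was applied last. Axiom and $\mixlabel$ generate no compound formula, so there is nothing to check for them, and the cases $\alpha\in\{\parr,\with\}$ are immediate: clause~(i) of the definition of separation imposes no further condition. For $\alpha=\plus$, say the last rule is $\plus_1$ with premise $\Gamma,A_1$ and conclusion $\Gamma,A_1\plus A_2$ (the $\plus_2$ case is symmetric). The translation leaves the linking set unchanged, so no leaf of the freshly generated argument $A_2$ lies in any link of $\theta$; hence every vertex of $A_2$ is deleted in $\Gamma\restriction\theta$, so $A_2$ does not occur in $\fullgraph$, which is exactly clause~(ii).

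The substance of the lemma is the case $\alpha\in\{\tensor,\cut\}$; I treat $\tensor$, the cut case being identical in structure. Here the premises $\Gamma,A_1$ and $A_2,\Delta$ translate to proof nets $\theta'$ and $\theta''$, and the conclusion $\Gamma,A_1\tensor A_2,\Delta$ translates to $\theta=\{\mu\cup\nu:\mu\in\theta',\,\nu\in\theta''\}$. I aim to show that no cycle of $\fullgraph$ passes through the $\tensor$-vertex $\alpha$, which is clause~(iii). Partition the vertices of $\Gamma\restriction\theta$ into a \emph{left block} (those in $\Gamma,A_1$) and a \emph{right block} (those in $A_2,\Delta$). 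The only tree edges at $\alpha$ join it to the roots of $A_1$ and $A_2$, and $\alpha$ meets no link edge (links join leaves) and no jump edge (jumps target $\with$-vertices). I claim every remaining edge lies within a single block. For tree edges this is clear; for link edges it follows from the product form of $\theta$, since each linking $\mu\cup\nu$ splits into a link set $\mu$ on the left and a link set $\nu$ on the right, so every individual link stays in one block.

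The crux is that the jump edges respect the blocks as well. Consider a jump from the two endpoints of a link $a$ to a $\with$-vertex $w$ on which $a$ depends, and suppose $a$ lies in the left block. By the definition of dependence there are $\lambda=\mu\cup\nu$ and $\lambda'=\mu'\cup\nu'$ in $\theta$ with $a\in\lambda$, $a\notin\lambda'$, and $w$ the only $\with$ toggled by $\{\lambda,\lambda'\}$. Since $a$ is a left link, $a\in\mu$ and $a\notin\mu'$, so $\mu\neq\mu'$. Applying property~(\ref{resolution}) to the left proof net $\theta'$, the pair $\{\mu,\mu'\}$ toggles some left $\with$-vertex $w_0$; as the subtree of $w_0$ lies wholly on the left, $w_0$ is toggled by $\{\lambda,\lambda'\}$ too (the right linkings $\nu,\nu'$ contribute no left leaves). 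By uniqueness $w=w_0$, so $w$ is in the left block, and the jump, together with both endpoints of $a$, stays there; the right case is symmetric. Hence deleting $\alpha$ disconnects the two blocks, so no cycle runs through $\alpha$, establishing clause~(iii) for both $\tensor$ and $\cut$.

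I expect confining the jumps to a single block to be the only delicate step, since it is the one place where the dependence relation interacts with the multiplicative splitting and where property~(\ref{resolution}) on two-linking toggling is actually invoked; the matching of toggling between the premise forest $\theta'$ and the conclusion forest $\theta$ must be handled with a little care, but everything else is bookkeeping.
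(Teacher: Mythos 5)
Your proof is correct and takes essentially the same route as the paper's: the decisive step---writing each linking of $\theta$ as $\mu\cup\nu$ with $\mu,\nu$ from the premise nets and invoking property~(\ref{resolution}) on the premise net to force any $\with$ on which a one-sided link depends to lie on that same side---is exactly the paper's argument, merely phrased as a direct ``jump edges stay within a block'' claim rather than as a contradiction. The explicit handling of the $\plus$ case and the graph-theoretic bookkeeping about which edges can touch $\alpha$ are details the paper leaves implicit.
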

\begin{proof}
  The only non-trivial cases are \RvG{$A=A_1\alpha A_2$
  for $\alpha\in\{\tensor,\cutlabel\}$}.
  Let $\Gamma_1$ and
  $\Gamma_2$ be the hypotheses of the last rule $\rho$ of the proof $\Pi$, let
  $\Pi_i$ be the branch of $\Pi$ above $\rho$ proving $\Gamma_i$, let
  $\theta$ be the proof net of $\Pi$ and $\theta_i$ that
  of $\Pi_i$.  $\fullgraph$ could have a cycle
  through $\alpha$ only when in $\theta$ a link $a$ in $\Gamma_1$
  depends on a $\with$-vertex $w$ in $\Gamma_{2}$ (or vice versa).  In
  that case there exist $\lambda,\lambda'\in\theta$ such that $a\in
  \lambda$, $a\not\in\lambda'$, and $w$ is the only $\with$ toggled by
  $\{\lambda,\lambda'\}$. Hence there must be $\lambda_1,\lambda'_1$
  in $\theta_1$ and $\lambda_2,\lambda'_2$ in $\theta_2$ such that
  $a\in\lambda_1$, $a\not\in\lambda'_1$ and $w$ is the only $\with$
  toggled by $\{\lambda_1 \cup \lambda_2,\lambda'_1 \cup
  \lambda'_2\}$.  However, by (\ref{resolution}) there must be another
  $\with$-vertex of $\Lambda$ that is toggled by $\{\lambda_1 \cup
  \lambda_2,\lambda'_1 \cup \lambda'_2\}$, namely one occurring in
  $\Gamma_1$ that is toggled by $\{\lambda_1,\lambda'_1\}$.
\end{proof}
\begin{lemma}\label{lem-conn}
  If a formula occurrence $\,A=A_1 \alpha A_2\,$ in a \mallcut\/ sequent\/
  $\,\Gamma\!,\,A$ separates a proof net $\theta$ of\/ $\,\Gamma\!,\,A$ for which
  $\fullgraph$ is connected,
  then there is at most one instance $\sigma$ of an $\alpha$-rule that
  could generate $A$ in the last step of a proof\/ $\Pi$
  of\/ $\,\Gamma\!,\,A$ with proof net $\theta$.
\end{lemma}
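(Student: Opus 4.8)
The plan is to fix the formula $A = A_1\,\alpha\,A_2$ and the net $\theta$, assume that \emph{some} instance $\sigma$ of an $\alpha$-rule generates $A$ in the last step of a proof with net $\theta$, and then recover all the data of $\sigma$ (its hypothesis sequents, and for $\with$ the distribution of cuts) directly from $\theta$ and $\fullgraph$. Since every ingredient will be read off intrinsically, any two such instances must coincide, giving ``at most one''. The argument splits on $\alpha$, and only the multiplicative connectives $\tensor$ and $\cut$ will genuinely use connectivity of $\fullgraph$.

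I would dispatch the easy connectives first. If $\alpha=\parr$, the only possible hypothesis is $\Gamma,A_1,A_2$, so $\sigma$ is forced outright. If $\alpha=\plus$, separation (clause (ii)) says one $A_i$ is absent from $\fullgraph$; since $A$ is a root formula it is touched by every linking, so at least one $A_i$ occurs, hence exactly one is absent. A $\plus_j$-rule introducing from $A_j$ would track $A_j$'s leaves downward into $\theta$, so $A_j$ would occur in $\fullgraph$; thus the rule must introduce from the \emph{present} side, and this choice fixes $\plus_1$ versus $\plus_2$ together with its unique premise.

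For $\alpha=\with$ I would exploit that every linking touches every formula of its sequent. Put $\theta_i=\{\lambda\in\theta:\lambda\text{ touches }A_i\}$; because each $\with$-resolution deletes exactly one branch above the root $\with$, these two sets partition $\theta$ and depend only on $\theta$ and $A$. In the {\mallcut} $\with$-rule the cut-only contexts $\Omega_1,\Omega_2$ hold the left-only and right-only cuts while the shared $\Gamma$ carries the superimposed cuts and all non-cut context. A non-cut context formula is touched by every linking, so it is forced into the shared part; a cut lies in $\Omega_1$, $\Omega_2$, or the shared $\Gamma$ according as it is touched by a linking of $\theta_1$ only, $\theta_2$ only, or both. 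This trichotomy is exhaustive and fully determined by $\theta$, so the instance is unique; note that connectivity is not even needed here.

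The hard part is $\alpha\in\{\tensor,\cut\}$, where I must pin down the partition $\Gamma=\Gamma_1,\Gamma_2$ of the context between the two premises. The $\alpha$-vertex has degree $2$ in $\fullgraph$: its only neighbours are the roots of $A_1$ and $A_2$, since it is a non-leaf vertex (so touched by no link), is a root of $\Gamma,A$ (so has no parent), and, being $\tensor$ or $\cut$ rather than $\with$, is the target of no jump. Separation (clause (iii)) gives no cycle through $\alpha$, and with connectivity of $\fullgraph$ this forces $\fullgraph$ minus $\alpha$ to have exactly two components $C_1\ni\mathrm{root}(A_1)$ and $C_2\ni\mathrm{root}(A_2)$. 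For any valid instance with partition $\Gamma_1,\Gamma_2$, every edge of $\fullgraph$ other than the two tree-edges at $\alpha$ stays on one side: axiom links lie within a single premise, and---this is the delicate point, reusing the dependency analysis of Lemma~\ref{lem-last-gen}---no jump crosses from a link in one premise to a $\with$-vertex of the other. Hence the vertex sets $V_1$ (of $\Gamma_1,A_1$) and $V_2$ (of $A_2,\Gamma_2$) carry no cross-edges once $\alpha$ is deleted, so each $V_i$ is a union of components; with exactly two components and both $V_i$ nonempty we get $C_1=V_1$ and $C_2=V_2$. As the decomposition is intrinsic to $\theta$, the partition is forced and any two instances agree. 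I expect the genuine obstacle to be exactly this last step: justifying that jumps respect the partition, so that the combinatorial component split really matches the rule's split, which is precisely where the separation hypothesis and the reasoning of Lemma~\ref{lem-last-gen} must be invoked.
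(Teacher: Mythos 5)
Your proposal is correct and takes essentially the same approach as the paper's proof: the cases $\parr$, $\plus$ and $\with$ are settled by inhabitation-style properties of the graph, while for $\alpha\in\{\tensor,\cutlabel\}$ separation plus connectedness splits $\fullgraph$ minus the $\alpha$-vertex into exactly two components, which (since links and jumps cannot cross the premises, by the dependency analysis of Lemma~\ref{lem-last-gen}) force the hypotheses of $\sigma$. You are in fact more explicit than the paper on two points it leaves tacit: pinning down the cut distribution over $\Omega_1$, $\Omega_2$ and the shared context in the $\with$-case via the touched-by-$\theta_1$/$\theta_2$ trichotomy, and the verification that jump edges respect the premise partition.
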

\begin{trivlist}\item{}\normalfont\textit{Proof. }
\begin{itemize}
\item Case $\alpha = \parr$: the hypothesis of $\sigma$ must be
$\Gamma\!,\, A_1, A_2$.

\item Case $\alpha = \with$: the hypotheses of $\sigma$ must be
$\Gamma\!,\, A_1$ and $\Gamma\!,\, A_2$.

\item 
Case $\alpha = \plus$: exactly one of the $A_i$, say $A_d$, is in
$\fullgraph$ (\ref{inhabitation}).
Hence the hypothesis of $\sigma$ must be $\Gamma\!,\,A_d$.

\item 
Case $\alpha\in\{\tensor,\cutlabel\}$: let $\Gamma\!,\, A_1, A_2$ be the sequent
resulting from deleting the connective $\alpha$ in $A$ from $\Gamma\!,\, A$. Since
$A$ separates $\theta$ and $\fullgraph$ is connected,
the restriction of $\jumpgraph\theta$ to $\Gamma\!,\, A_1, A_2$ has two disconnected components, one on a sequent $\Gamma_1,\,
A_1$ and the other on a sequent $\Gamma_2,\, A_2$, where \mbox{$\Gamma_1 \cup
\Gamma_2 = \Gamma$}.  Using (\ref{inhabitation}), the
hypotheses of $\sigma$ must be $\Gamma_1,\, A_1$ and $\Gamma_2,\, A_2$.
\hfill$\square$
\end{itemize}
\end{trivlist}
In each case the proof nets on the hypotheses of $\sigma$, induced by
the branches of $\Pi$ that prove these hypotheses, are completely
determined by $\theta$.

\vspace{1ex}

\noindent
For $\Pi$ a {\mallcut} proof, let $\jumpgraph\Pi$ abbreviate $\jumpgraph{\theta_\Pi}$.
We shall prove the following four lemmas by simultaneous structural
induction.
\begin{lemma}\label{lem-path}
  Let $\Pi$ be a proof of a {\mallcut} sequent $\Delta,A_1\alpha
  A_2,\Sigma$ such that in $\jumpgraph\Pi$ any path between (vertices
  in) $\Delta,A_1$ and $A_2,\Sigma$ passes through the indicated
  occurrence of $\alpha\in\{\tensor,\cutlabel\}$.  Then $\Pi$ can, by means of rule
  commutations, be converted into a proof $\Pi''$ whose last step is
  the $\alpha$-rule with hypotheses $\Delta,A_1$ and $A_2,\Sigma$.
\end{lemma}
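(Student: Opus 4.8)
The plan is to induct on the structure of $\Pi$, simultaneously with the three companion lemmas, steadily pushing the indicated $\alpha$-rule down to the root. Write $\rho$ for the last rule of $\Pi$. \emph{Base case:} $\rho$ is itself an $\alpha$-rule generating the indicated occurrence $A_1\alpha A_2$. By Lemma~\ref{lem-last-gen} this occurrence separates $\theta_\Pi$, so $\jumpgraph\Pi$ has no cycle through $\alpha$; together with the hypothesis that every path between $\Delta,A_1$ and $A_2,\Sigma$ runs through $\alpha$, the split of the context effected by $\rho$ is forced to be exactly $\Delta,A_1$ against $A_2,\Sigma$ --- this is the $\{\tensor,\cut\}$ reasoning of Lemma~\ref{lem-conn}. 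Thus the premises of $\rho$ are already $\Delta,A_1$ and $A_2,\Sigma$, and we may take $\Pi''=\Pi$.

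\emph{Inductive step:} $\rho$ generates some other formula $C$, so the indicated $\alpha$-rule lies strictly above $\rho$. I first transfer the separation hypothesis upward to the immediate subproof(s) of $\rho$ carrying that $\alpha$-occurrence. This is legitimate because passing to a subproof only deletes the vertex generated by $\rho$ (and, for $\with$, $\plus$, $\tensor$, $\cut$ or $\mixlabel$, the vertices of the discarded branch) together with their incident edges, and deleting vertices and edges can never create a path; hence ``every path crosses $\alpha$'' is inherited. Applying the induction hypothesis to that branch brings the indicated $\alpha$-rule to its top, immediately above $\rho$, and it remains to commute $\rho$ downward past this $\alpha$-rule.

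Such a commutation is always available by the $\tensor$ and $\cut$ columns of Table~\ref{commutation-table}, provided the side-conditions flagged $\circ$ and $\bullet$ are met. For $\rho\in\{\plus_1,\plus_2,\tensor,\cut,\mixlabel\}$ the entry is a check mark and nothing further is required. For $\rho=\parr$ the side-condition demands that both arguments of the generated par-formula lie in a single premise of the $\alpha$-rule; this holds because the par-formula $C$ is a single formula of $\Delta,\Sigma$ and therefore lies wholly on one side of the separation, so after the reshaping both of its arguments sit in the same premise. The delicate case is $\rho=\with$: now the $\alpha$-occurrence appears in \emph{both} premises, the induction hypothesis must be applied to each branch, and the $\Omega$-enriched $\comm{\with}{\tensor}$ (resp.\ $\comm{\with}{\cut}$) commutation of Section~\ref{star commutations} applies only when the premises of the two resulting $\alpha$-rules that do \emph{not} carry the $\with$-formula coincide and carry identical subproofs. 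To secure this I will invoke the remark following Lemma~\ref{lem-conn}, that the sub-net on each such premise is completely determined by $\theta_\Pi$, and then use the simultaneous induction (the companion lemmas) to normalise those off-branch subproofs into a common shape, after which the commutation fires and produces $\Pi''$ with the $\alpha$-rule at its root.

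I expect the $\with$ case to absorb essentially all of the difficulty. Reconciling the additive branching of $\rho$ with the separation of $\alpha$ --- in particular, proving that the two off-branch subproofs can be made \emph{literally} identical rather than merely net-equivalent --- is exactly where the graph properties (\ref{resolution})--(\ref{P2}) and the coupling of the four lemmas in one induction become indispensable, and where the bulk of the bookkeeping (tracking the variables $\Omega_{ijkl}$) will reside.
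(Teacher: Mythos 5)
Your overall architecture is the paper's: a simultaneous structural induction with the companion lemmas, a case analysis on the last rule $\rho$ of $\Pi$, and a resolution of the $\with$ case by showing the two off-branch subproofs have equal proof nets and then invoking Lemma~\ref{lem-same}. But there is a genuine gap running through your base case and your treatment of $\rho\in\{\mixlabel,\tensor,\cutlabel\}$: you assume that the context split performed by the rules already present in $\Pi$ agrees with the required split of $\Delta,A_1$ against $A_2,\Sigma$, and this is false. The hypothesis of the lemma only constrains \emph{paths}; it does not make $\jumpgraph\Pi$ connected, and with $\mixlabel$ in the system (which the lemma must cover --- its own conclusion and Lemma~\ref{lem-no-path} are phrased in terms of $\mixlabel$) a formula of $\Delta$ lying in a component of $\jumpgraph\Pi$ disconnected from everything else satisfies the path condition vacuously, yet may sit in the premise containing $A_2$. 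Concretely, take $A_1=P$, $A_2=Q\parr Q\perp$, and build $\Pi$ by applying $\tensor$ to the premises $P\perp,P$ (an axiom) and $Q\parr Q\perp,\,R\parr R\perp$ (two one-formula proofs joined by $\mixlabel$). With $\Delta=P\perp,\,R\parr R\perp$ and $\Sigma$ empty, every path between $\Delta,A_1$ and $A_2,\Sigma$ passes through the indicated $\tensor$, so the lemma applies; but the last rule of $\Pi$, although it is the $\alpha$-rule generating $A_1\alpha A_2$, has premises $P\perp,P$ and $Q\parr Q\perp,R\parr R\perp$, not $\Delta,A_1$ and $A_2,\Sigma$. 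So your conclusion that one may take $\Pi''=\Pi$ fails, and the Lemma~\ref{lem-conn} reasoning you cite to force the split is unavailable, because that lemma assumes $\fullgraph$ is connected --- an assumption Lemma~\ref{lem-path} does not make.

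The same defect undermines your inductive cases $\rho\in\{\mixlabel,\tensor,\cutlabel\}$: after the induction hypothesis brings the $\alpha$-rule to the top of the branch containing it, a \emph{single} commutation of $\rho$ over $\alpha$ carries the entire other premise of $\rho$ to one side of $\alpha$, whereas the target proof $\Pi''$ needs that premise redistributed --- its $\Delta$-part onto the left hypothesis of $\alpha$ and its $\Sigma$-part onto the right. This redistribution is precisely what Lemma~\ref{lem-no-path} supplies in the paper's proof: the other premise (the mix premise $\Gamma_c$, or the premise $\Delta_2,\Sigma_2,B_2$ of a $\tensor$/$\cutlabel$-rule, or the side $\Sigma_1$ of the premise when $\rho$ itself is the $\alpha$-rule) is first converted, by the induction hypothesis for Lemma~\ref{lem-no-path}, into a proof ending in a $\mixlabel$-rule that separates its $\Delta$-part from its $\Sigma$-part, and only then is everything reassembled by a few commutations of the form $\comm{\mixlabel}{\mixlabel}$ and $\comm{\mixlabel}{\tensor}$. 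Your proposal never invokes Lemma~\ref{lem-no-path}, so this step --- which constitutes most of the actual content of the paper's induction step for this lemma --- is missing; your argument is sound only in the mix-free fragment, where all these side sequents are forced to be empty. (Your $\plus$ and $\parr$ cases, and, modulo vagueness about why the two off-branch nets coincide, your $\with$ case, do match the paper.)
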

\begin{lemma}\label{lem-sep}
Let $\Pi$ be a proof of a {\mallcut} sequent $\Gamma$ whose proof
net $\theta$ is separated by a formula occurrence $A$ in $\Gamma$. Then, by means of a
series of rule commutations, $\Pi$ can be converted into a proof
$\Pi''$ of $\Gamma$ that generates $A$ in its last step.
\end{lemma}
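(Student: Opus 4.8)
The plan is to prove Lemma~\ref{lem-sep} as one strand of the announced simultaneous induction on the proof $\Pi$, splitting on the principal connective $\alpha$ of the root formula $A=A_1\alpha A_2$ (necessarily a root formula of $\Gamma$, since only root formulas can be generated by a final rule). The two multiplicative-style connectives behave differently from the rest: for $\alpha\in\{\tensor,\cutlabel\}$ the obstruction to pulling the $\alpha$-rule to the bottom is \emph{global}, and I reduce to Lemma~\ref{lem-path}; for $\alpha\in\{\parr,\with,\plus\}$ it is purely \emph{local}, and I argue by inspecting the last rule of $\Pi$ and performing a single commutation. The base case (an axiom) is vacuous, since its conclusion carries no compound root formula.

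First I would dispatch $\alpha\in\{\tensor,\cutlabel\}$. Because $A$ separates $\theta$, the graph $\fullgraph$ has no cycle through $\alpha$; as $\alpha$ is a root vertex whose only neighbours are the roots of $A_1$ and $A_2$, this makes $\alpha$ a cut-vertex, so the $A_1$-side and the $A_2$-side of $\fullgraph$ meet only at $\alpha$. Writing $\Gamma=\Gamma',A$, I place the $A_1$-side of the component of $\alpha$ into $\Delta$, the $A_2$-side into $\Sigma$, and assign each further component (such as the $\mixlabel$-rule may create) wholly to either $\Delta$ or $\Sigma$; then every path between $\Delta,A_1$ and $A_2,\Sigma$ runs through $\alpha$. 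Lemma~\ref{lem-path} now converts $\Pi$, by rule commutations, into a proof ending in the $\alpha$-rule with hypotheses $\Delta,A_1$ and $A_2,\Sigma$, which generates $A$. This step is what binds the four lemmas together, and the simultaneous induction must be organised so that the appeal to Lemma~\ref{lem-path} at $\Pi$ feeds back into the other lemmas only at proper subproofs of $\Pi$.

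For $\alpha\in\{\parr,\with,\plus\}$ I would induct on the last rule $\rho$ of $\Pi$. If $\rho$ already generates $A$, take $\Pi''=\Pi$. Otherwise $\rho$ generates a distinct root formula, so the whole occurrence $A$ survives into the premise(s) of $\rho$: into a single premise if $\rho$ is context-splitting ($\tensor$, $\cutlabel$ or $\mixlabel$) and into the shared context of both premises if $\rho=\with$. Separation of $A$ is inherited by the relevant sub-proof-net(s): for $\alpha\in\{\parr,\with\}$ this is automatic from clause~(i), and for $\alpha=\plus$ it follows from clause~(ii), since the argument $A_i$ absent from $\fullgraph$ is \emph{a fortiori} absent from any sub-proof-net, whose linkings are constituents of those of $\theta$. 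Applying the induction hypothesis (Lemma~\ref{lem-sep}) to the premise subproof(s) reorganises them to end in the $\alpha$-rule generating $A$; that rule then sits immediately above $\rho$, and one commutation from Tables~\ref{diag-commutations}--\ref{cut-star-commutations} drops it below $\rho$.

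The delicate point, and the main obstacle, is to check that this final commutation is always permitted by Table~\ref{commutation-table}. Pulling the $A$-rule (with $\alpha\in\{\parr,\with,\plus\}$, \ie\ a column among $\parr,\plus_1,\plus_2,\with$) below $\rho$ is exactly commuting $\rho$, as the lower rule, over the $\alpha$-rule, and in those columns the only entry bearing a side-condition is the row $\rho=\with$, marked $\circ$. In that case the $A$-rule is duplicated into the two branches of $\rho$, and the required condition is that its two copies track to the \emph{same} occurrence in the conclusion of $\rho$; this holds precisely because $A$ lies in the shared context of $\rho$, so both copies track down to the single occurrence of $A$ in $\Gamma$. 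Every other row is an unconditional check-mark; in particular, when $\rho$ is context-splitting and $\alpha=\with$ the commutation simply duplicates $\rho$'s off-$A$ premise, producing identical subproofs automatically rather than requiring them as a hypothesis. Hence each commutation is licensed, the $\alpha$-rule reaches the bottom, and the induction closes.
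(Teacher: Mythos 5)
Your proposal is correct and takes essentially the same route as the paper's own proof: the cases $\alpha\in\{\tensor,\cutlabel\}$ are discharged by appeal to Lemma~\ref{lem-path} (you merely make explicit the cut-vertex/partition argument that the paper leaves implicit in ``is implied by Lemma~\ref{lem-path}''), and for $\alpha\in\{\plus,\parr,\with\}$ you case-split on the last rule $\rho$, inherit separation to the premise subproofs (including the key observation that for $\alpha=\plus$ both branches of a final $\with$ are separated with the \emph{same} argument $A_d$), apply the induction hypothesis, and finish with a single licensed commutation. Your verification against Table~\ref{commutation-table}, in particular that the only conditional entry is $\rho=\with$ over $\alpha\in\{\plus,\parr,\with\}$ and that its tracking condition holds because $A$ sits in the shared context, matches the paper's argument exactly.
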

\begin{lemma}\label{lem-no-path}
  Let $\Pi$ be a proof of a {\mallcut} sequent $\Delta,\Sigma$ for nonempty
  sequents $\Delta$ and $\Sigma$, such that in $\jumpgraph\Pi$ there
  is no path between (vertices in) $\Delta$ and $\Sigma$.  Then $\Pi$
  can, by means of rule commutations, be converted into a proof
  $\Pi''$ whose last step is the $\mixlabel$-rule with hypotheses
  $\Delta$ and $\Sigma$.
\end{lemma}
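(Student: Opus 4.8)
The plan is to run the announced simultaneous structural induction and branch on the last rule $\rho$ of $\Pi$. The starting point is that the hypothesis --- no path between $\Delta$ and $\Sigma$ in $\jumpgraph\Pi$ --- together with property~(\ref{inhabitation}) splits $\jumpgraph\Pi$ into a \emph{$\Delta$-part} and a \emph{$\Sigma$-part} that share no edge: every connected component meets exactly one of $\Delta,\Sigma$, and every root of a formula of $\Delta$ (resp.\ $\Sigma$) lies in the $\Delta$-part (resp.\ $\Sigma$-part). When $\rho$ generates a formula $C$, that $C$ is a single occurrence and so lies wholly in $\Delta$ or wholly in $\Sigma$; say $C\in\Delta$, the case $C\in\Sigma$ being symmetric. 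In every case I would pass to the immediate subproof(s) $\Pi_i$, check that the no-path hypothesis is inherited by each, apply the induction hypothesis to obtain a $\mixlabel$-rule separating the $\Delta$- and $\Sigma$-portions of each $\Pi_i$, and then commute $\rho$ past the resulting mix(es), via Table~\ref{mix-commutations}, so that a single $\mixlabel$ splitting precisely $\Delta$ from $\Sigma$ ends up at the root. The base case (axiom) is vacuous: an axiom links its two leaves, so either both lie on the same side and one of $\Delta,\Sigma$ is empty, or they lie on opposite sides and are joined by a path.

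The engine shared by all inductive cases is a \emph{lifting claim}: every edge of $\jumpgraph{\Pi_i}$ --- tree edge, link, or jump --- is present in $\jumpgraph\Pi$, so a path inside a subproof's graph would yield a forbidden $\Delta$--$\Sigma$ path in $\jumpgraph\Pi$. Tree edges and links lift at once, since $\theta_\Pi$ contains every $\theta_i$-linking for $\mixlabel$ and $\tensor$ and equals $\theta_1\cup\theta_2$ for $\with$. Jump edges are the delicate part, and here I would use that a single linking, being a $\with$-resolution, touches only one argument of each $\with$-vertex: if a link $a$ depends on $w$ inside $\Pi_i$ with witnesses $\lambda,\lambda'\in\theta_i$, then completing them by a common linking of the sibling branch (multiplicative case) or keeping them as is (additive case) still toggles $w$ and nothing else --- the adjoined material touches only one side of every other $\with$ --- so $a$ depends on $w$ in $\theta_\Pi$ too. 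Granted the lifting claim, the easy cases close: for unary $\rho$, apply the induction hypothesis to $\Pi_1$ and commute $\rho$ below the mix by $\comm{\plus}{\mixlabel}$ or $\comm{\convparr}{\mixlabel}$ (the side condition for $\parr$ holds because both arguments of the generated $\parr$ sit in the $\Delta$-hypothesis); for $\rho=\mixlabel$, apply the induction hypothesis to both hypotheses and reassociate by $\comm{\mixlabel}{\mixlabel}$; for $\rho\in\{\tensor,\cutlabel\}$ with $C\in\Delta$, Lemma~\ref{lem-last-gen} gives that the generated connective separates $\theta_\Pi$, each hypothesis inherits no-path between its $\Delta$- and $\Sigma$-portions, and repeated use of $\comm{\tensor}{\mixlabel}$, $\comm{\cutlabel}{\mixlabel}$ and $\comm{\mixlabel}{\mixlabel}$ draws the two $\Sigma$-mixes below the $\alpha$-rule and fuses them into one bottom mix.

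The genuinely hard case is $\rho=\with$ with $C=C_1\with C_2\in\Delta$. As before each branch $\Pi_i$ inherits the no-path property and, by the induction hypothesis, converts to a proof whose last rule is a mix separating its $\Delta$-portion from its $\Sigma$-portion $\Omega_i^{\Sigma},\Gamma_c^{\Sigma}$. To finish one wants to commute the $\with$-rule below these two mixes, but the commutation $\comm{\with}{\mixlabel}$ (in its $\Omega$-enriched {\mallcut} form) requires the \emph{shared} part of the two $\Sigma$-side hypotheses to be proved by the \emph{same} subproof, and the two invocations of the induction hypothesis supply no such agreement a priori. The key will be to show that the two branches compute the same proof net on the shared $\Sigma$-context. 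Since $\Sigma$ is disconnected from the $\with$-vertex $w=C$, no pair of linkings toggling $w$ alone can differ on $\Sigma$; combined with property~(\ref{resolution}) this forces the $\with$-choice at $w$ to be independent of the $\Sigma$-behaviour, so the $\Sigma$-restrictions of $\theta_1$ and $\theta_2$ agree on $\Gamma_c^{\Sigma}$ (the branch-local cuts $\Omega_i^{\Sigma}$ being absorbed by the $\Omega$-enrichment of the commutation). The two shared $\Sigma$-subproofs are then strictly shorter proofs with a common proof net, so the converse direction of Proposition~\ref{MALL* kernel}, available as an induction hypothesis of this simultaneous induction, converts them into one another and we may take them identical. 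With the side condition of $\comm{\with}{\mixlabel}$ now met, that commutation lifts the $\with$-rule above a single bottom mix splitting $\Delta$ from $\Sigma$.

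I expect this reconciliation of the two $\Sigma$-subproofs --- establishing the product factorisation of $\theta_\Pi$ over the $\Delta/\Sigma$ split of $\jumpgraph\Pi$, and feeding the common $\Sigma$-net back through the simultaneous induction --- to be the main obstacle. The remaining work (empty $\Delta_i$ or $\Sigma_i$, the presence of cut-only $\Omega$'s on either side, and the symmetric $C\in\Sigma$ subcases) is routine bookkeeping.
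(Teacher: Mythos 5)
Your overall architecture coincides with the paper's: a simultaneous structural induction, a case split on the last rule $\rho$, lifting of all edges of the subproofs' graphs (including jumps) into $\jumpgraph\Pi$, routine commutations for $\plus$, $\parr$, $\mixlabel$, $\tensor$, $\cutlabel$, and, in the crucial $\with$ case, reconciliation of the two copies of the disconnected context by showing they carry the same proof net and then invoking Lemma~\ref{lem-same} as a component of the simultaneous induction. However, two steps inside your $\with$ case do not work as stated. The first is your justification of the agreement claim: it does not follow from property~(\ref{resolution}) together with the absence of crossing jump edges. Property~(\ref{resolution}) only asserts that any two linkings toggle \emph{some} $\with$; it supplies no linking to which your observation ``no pair toggling $w$ alone can differ on $\Sigma$'' could be applied. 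What is needed is property~(\ref{P2}): given a linking $\nu$ of $\theta_\Pi$ coming from branch 1, (\ref{P2}) yields a partner $\nu'$ such that $w$ is the \emph{only} $\with$ toggled by $\{\nu,\nu'\}$; such a $\nu'$ necessarily comes from branch 2, and any link on which $\nu$ and $\nu'$ disagree would then depend on $w$, producing a jump edge that crosses the $\Delta$/$\Sigma$ split --- contradiction. Hence $\nu$ and $\nu'$ agree on the disconnected side, and the two restrictions coincide. This is exactly the paper's argument (spelled out in the $\with$ case of Lemma~\ref{lem-path} and reused here); with (\ref{resolution}) alone the claim is not provable.

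The second, more serious, flaw is your disposal of the branch-local cuts $\Omega^\Sigma_i$ on the disconnected side: they are \emph{not} ``absorbed by the $\Omega$-enrichment'' of $\comm{\with}{\mixlabel}$. In the enriched commutation the $\Omega$'s sit on the side of the $\with$-arguments $B_1,B_2$; the other hypothesis of each $\mixlabel$ is one and the same sequent, proved by one and the same subproof, on both sides of the commutation. So for the commutation to be applicable after the two appeals to the induction hypothesis, the disconnected sides of the two branches must be literally identical as sequents, which forces $\Omega^\Sigma_1=\Omega^\Sigma_2=\emptyset$ (branch-local cuts of distinct branches are distinct cut occurrences, so they can only be equal by being empty). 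This emptiness is a substantive fact that must be proved, and the paper proves it by the same jump-edge mechanism: by (\ref{inhabitation}) a leaf $\ell$ of such a cut occurs in some link of some linking of the branch's net; that link cannot occur in any linking of the other branch, so by (\ref{P2}) it depends on $w$, and the resulting jump edge from $\ell$ to $w$ crosses the split --- contradiction. Your proposal is missing this step entirely, and without it the commutation that is supposed to finish the $\with$ case cannot be applied. Once both repairs are made, your proof becomes the paper's.
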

\begin{lemma}\label{lem-same}
If two proofs $\Pi$ and $\Pi'$ of a {\mallcut} sequent $\Gamma$
translate to the same proof net on $\Gamma$, then $\Pi$ can
be converted into $\Pi'$ by a series of rule commutations.
\end{lemma}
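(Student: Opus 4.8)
The plan is to prove Lemma~\ref{lem-same} as the final, load-bearing step of the simultaneous structural induction, drawing on Lemmas~\ref{lem-path}, \ref{lem-sep} and~\ref{lem-no-path} (for proofs of the same size) and on Lemma~\ref{lem-same} itself (for strictly smaller subproofs). Given proofs $\Pi$ and $\Pi'$ of a {\mallcut} sequent $\Gamma$ that both translate to a common proof net $\theta$, I would induct on the size of $\theta$ (together with the sizes of the proofs as a secondary component) and split on whether the graph $\fullgraph$ is connected. The base case is immediate: a one-rule proof is an axiom on $P,P\perp$, so $\Pi$ and $\Pi'$ coincide and the empty series of commutations suffices.

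For the connected case, let $\rho$ be the last rule of $\Pi$ and let $A=A_1\alpha A_2$ be the formula occurrence it generates. By Lemma~\ref{lem-last-gen}, $A$ separates $\theta$. Applying Lemma~\ref{lem-sep} to $\Pi'$, I convert $\Pi'$ by rule commutations into a proof $\Pi'''$ that likewise generates $A$ in its last step. Now connectedness of $\fullgraph$ lets me invoke Lemma~\ref{lem-conn}: there is at most one instance $\sigma$ of an $\alpha$-rule that can generate $A$ last, so $\rho$ and the last rule of $\Pi'''$ must be the very same instance, with identical hypothesis sequents; moreover, by the remark following Lemma~\ref{lem-conn}, the proof nets induced on those hypotheses are completely determined by $\theta$, hence agree for $\Pi$ and $\Pi'''$. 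I then apply the induction hypothesis (Lemma~\ref{lem-same}) to each matched pair of immediate subproofs — the single hypothesis $\Gamma\!,A_1,A_2$ for $\parr$, the single hypothesis $\Gamma\!,A_d$ for $\plus$, the two hypotheses $\Gamma_1,A_1$ and $\Gamma_2,A_2$ for $\tensor$ and $\cutlabel$, and the two branches $\Gamma\!,A_1$ and $\Gamma\!,A_2$ for $\with$ — converting the subproofs of $\Pi$ into those of $\Pi'''$. Composing these in-context conversions with the commutations that turned $\Pi'$ into $\Pi'''$ converts $\Pi$ into $\Pi'$.

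For the disconnected case, the connected components of $\fullgraph$ partition the roots of $\Gamma$ (which all appear in the graph by~(\ref{inhabitation})) into two nonempty subsequents $\Delta$ and $\Sigma$ with no path between them. Lemma~\ref{lem-no-path}, applied to both $\Pi$ and $\Pi'$, converts each into a proof whose last step is the $\mixlabel$-rule with hypotheses $\Delta$ and $\Sigma$. Since no link of any $\lambda\in\theta$ crosses between $\Delta$ and $\Sigma$, $\theta$ restricts to proof nets $\theta_\Delta$ and $\theta_\Sigma$ with $\theta=\{\lambda\cup\mu : \lambda\in\theta_\Delta,\ \mu\in\theta_\Sigma\}$, and these restrictions are forced to coincide for $\Pi$ and $\Pi'$. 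The induction hypothesis then converts the two $\Delta$-subproofs into each other and the two $\Sigma$-subproofs into each other, completing this case.

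I expect two points to be delicate. The first is bookkeeping: because Lemma~\ref{lem-same} invokes Lemmas~\ref{lem-sep} and~\ref{lem-no-path} on proofs of the \emph{same} sequent, the induction measure must be arranged (e.g.\ ordering the four lemmas so that the three reorganizing lemmas sit below Lemma~\ref{lem-same} at a fixed size of $\theta$, with the genuine recursion on strictly smaller subproofs) to keep the simultaneous induction well-founded. The second, more substantive, obstacle is the determinacy of the induced subnets — that once both proofs are forced to end with the same rule instance, their immediate subproofs really do translate to identical proof nets. This is exactly where connectedness is essential, since without it Lemma~\ref{lem-conn} fails and the $\tensor$/$\cutlabel$ splitting of $\Gamma$ into $\Gamma_1,A_1$ and $\Gamma_2,A_2$ need not be unique; it is the reason for isolating the disconnected case and routing it through the $\mixlabel$-normalization of Lemma~\ref{lem-no-path}. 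The $\with$-case warrants extra care, because there the linking set splits additively rather than multiplicatively, so I must verify that the two $\with$-branches of $\theta$ are themselves determined by $\theta$ before the induction hypothesis can be applied to each branch.
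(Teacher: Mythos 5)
Your proposal follows the paper's own argument almost step for step: the same case split on connectedness of $\fullgraph$, the same chain Lemma~\ref{lem-last-gen} $\rightarrow$ Lemma~\ref{lem-sep} $\rightarrow$ Lemma~\ref{lem-conn} $\rightarrow$ induction hypothesis in the connected case (with the roles of $\Pi$ and $\Pi'$ harmlessly swapped), and the same use of Lemma~\ref{lem-no-path} plus restriction of $\theta$ to the two components in the disconnected case. The one place where you genuinely deviate from the paper is the induction measure, and there your proposal has a real flaw. The paper runs the simultaneous induction structurally on the \emph{sequent} (``on $\Pi$, or equivalently, on $\Gamma$''), and this is what makes it well-founded: every recursive call, in all four lemmas, concerns proofs of a hypothesis sequent of a last rule, hence a strictly smaller sequent --- even though those proofs are typically \emph{not} subproofs of the original $\Pi$ and $\Pi'$, but subproofs of proofs obtained from them by commutations.

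Your measure --- size of $\theta$ with proof sizes as tie-breaker --- does not obviously decrease along these recursive calls. First, $|\theta|$ need not drop when passing to a hypothesis: for $\parr$ and $\plus$ the linking set is unchanged, and for $\tensor$, $\cutlabel$ and $\mixlabel$ it is unchanged whenever the other factor contributes a single linking. Second, in exactly those cases you must fall back on proof sizes, but rule commutations can \emph{increase} proof size (the commutations of $\with$ with $\tensor$, $\mixlabel$ and $\cutlabel$ duplicate subproofs), and the pairs to which you apply the induction hypothesis consist of a subproof of $\Pi$ paired with a subproof of the \emph{converted} proof $\Pi'''$; in the disconnected case even the first component is a branch of the converted proof $\Pi_{\mixlabel}$ rather than of $\Pi$. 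So ``genuine recursion on strictly smaller subproofs'' is not what actually happens, and neither component of your lexicographic measure is guaranteed to decrease. The repair is exactly the paper's choice: drop $|\theta|$ and proof sizes, induct on $\Gamma$, and keep your ordering of the four lemmas at fixed $\Gamma$ (Lemmas~\ref{lem-path}, \ref{lem-no-path} and~\ref{lem-sep} below Lemma~\ref{lem-same}). With that single change, the rest of your argument is the paper's proof.
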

Lemma~\ref{lem-same} is the
\RvG{converse direction of}
Proposition~\ref{MALL* kernel}
\RvG{that} must be established.
\begin{proof}
  We prove Lemmas~\ref{lem-path}--\ref{lem-same} by a simultaneous structural
  induction on $\Pi$ (or equivalently, on $\Gamma$).
\vspace{1ex}

\noindent
\emph{Induction base (applies to Lemma~\ref{lem-same} only).}  The induction base
 is trivial, as a \fullMALL sequent that can be proven in one step
 has at most one proof, a single application of $\axlabel$.

\vspace{1ex}

\noindent
\emph{Induction step for Lemma~\ref{lem-path}.}
\begin{itemize}
\item
First consider the case that the last step $\rho$ of $\Pi$ is an
application of $\mixlabel$, say with hypotheses $\Gamma_c$ and
$\Gamma_d,A_1\RvG{\alpha} A_2$.

Let $\Pi_d$ be the branch of $\Pi$ above $\rho$ proving
$\Gamma_d,A_1\RvG{\alpha} A_2$.
Let $\Delta_d=\Delta \cap \Gamma_d$ and $\Sigma_d = \Sigma\cap\Gamma_d$.
Since $\jumpgraph{\Pi_d}$ is a subgraph of $\jumpgraph\Pi$,
any path in $\jumpgraph{\Pi_d}$ between (vertices in)
$\Delta_d,A_1$ and $A_2,\Sigma_d$ passes through the indicated occurrence
of $\RvG{\alpha}$.
Hence, by induction, 
$\Pi_d$ can, by means of rule commutations, be converted into a
proof $\Pi_d'$ whose last step is the $\RvG{\alpha}$-rule
with hypotheses $\Delta_d,A_1$ and $A_2,\Sigma_d$.

Let $\Pi_c$ be the branch of $\Pi$ above $\rho$ proving $\Gamma_c$.
Let $\Delta_c=\Delta \cap \Gamma_c$ and $\Sigma_c =
\Sigma\cap\Gamma_c$.  Since $\jumpgraph{\Pi_c}$ is a subgraph of
$\jumpgraph\Pi$, there is no path in $\jumpgraph{\Pi_c}$ between
(vertices in) $\Delta_c$ and $\Sigma_c$. If $\Delta_c$ or $\Sigma_c$
is empty, let $\Pi'_c = \Pi_c$. Otherwise, by induction, using Lemma~\ref{lem-no-path},
$\Pi_c$ can, by means of rule commutations, be converted into a
proof $\Pi_c'$ whose last step is the $\mixlabel$-rule with hypotheses
$\Delta_c$ and $\Sigma_c$.

Let $\Pi'$ be the proof obtained from $\Pi$ by replacing
$\Pi_d$ with $\Pi_d'$ and $\Pi_c$ with $\Pi_c'$. Let $\Pi''$ be the
proof with the same 3 or 4 subproofs yielding $\Delta_c$,~ $\Sigma_c$,~
$\Delta_d,A_1$ and $A_2,\Sigma_d$ that first combines
$\Delta_c$ with $\Delta_d,A_1$ into $\Delta,A_1$ using $\mixlabel$
(provided $\Delta_c$ is nonempty), and likewise combines
$\Sigma_c$ with $A_2,\Sigma_d$ into $A_2,\Sigma$ using $\mixlabel$
(provided $\Sigma_c$ is nonempty), and then applies $\RvG{\alpha}$ to
yield $\Delta,A_1\RvG{\alpha} A_2,\Sigma$. By means of a few simple rule
commutations, $\Pi'$ can be converted into $\Pi''$.
\item
Next consider the case that the last step $\rho$ of $\Pi$ is an
application of $\RvG{\alpha}$ generating the same formula $A_1\RvG{\alpha} A_2$.
Let the hypotheses of $\rho$ be $\Gamma_i, A_i$ for $i=1,2$.

Let $\Pi_i$ be the branch of $\Pi$ above $\rho$ proving
$\Gamma_i,A_i$.
Let $\Delta_i=\Delta \cap \Gamma_i$ and $\Sigma_i = \Sigma\cap\Gamma_i$.
Since $\jumpgraph{\Pi_1}$ is a subgraph of $\jumpgraph\Pi$,
there is no path in $\jumpgraph{\Pi_1}$ between (vertices in)
$\Delta_1,A_1$ and $\Sigma_1$.
In case $\Sigma_1$ is empty, let $\Pi_1'=\Pi_1$. Otherwise, by
induction, using Lemma~\ref{lem-no-path}, $\Pi_1$ can, by means of rule commutations,
be converted into a proof $\Pi_1'$ whose last step is the $\mixlabel$-rule
with hypotheses $\Delta_1,A_1$ and $\Sigma_1$.

In case $\Delta_2$ is empty, let $\Pi_2'=\Pi_2$. Otherwise, by
means of rule commutations, $\Pi_2$ can be converted into a
proof $\Pi_2'$ whose last step is the $\mixlabel$-rule
with hypotheses $\Delta_2$ and $A_2,\Sigma_2$.

Let $\Pi'$ be the proof obtained from $\Pi$ by replacing
$\Pi_i$ with $\Pi_i'$ for $i\in\{1,2\}$. Let $\Pi''$ be the
proof with the same 2, 3 or 4 subproofs yielding $\Delta_1,A_1$,~
$\Sigma_1$,~ $\Delta_2$ and $A_2,\Sigma_2$ that first combines
$\Delta_1,A_1$ with $\Delta_2$ into $\Delta,A_1$ using $\mixlabel$
(provided $\Delta_2$ is nonempty), and likewise combines
$\Sigma_1$ with $A_2,\Sigma_2$ into $A_2,\Sigma$ using $\mixlabel$
(provided $\Sigma_1$ is nonempty), and then applies $\RvG{\alpha}$ to
yield $\Delta, A_1\RvG{\alpha} A_2, \Sigma$. By means of a few simple rule
commutations, $\Pi'$ can be converted into $\Pi''$.
\end{itemize}
In the remaining cases let the last step of $\Pi$ be a
a $\beta$-rule $\rho$ generating the formula $B = B_1 \beta B_2 \neq
A = A_1\RvG{\alpha} A_2$. We treat the case that $B$ occurs in $\Sigma$;
the other case follows by symmetry. Let $\Sigma = \Sigma',B_1\beta B_2$.
\begin{itemize}
\item
Let $\beta=\plus$.
Let $\Pi_d$ be the part of $\Pi$ above $\rho$, proving the hypothesis
$\Delta, A, \Sigma', B_d$ of $\rho$ (where $d$ is $1$ or $2$).
Since $\jumpgraph{\Pi_d}$ is a subgraph of $\jumpgraph\Pi$,
any path in $\jumpgraph{\Pi_d}$ between (vertices in)
$\Delta,A_1$ and $A_2,\Sigma',B_d$ passes through the indicated occurrence
of $\RvG{\alpha}$.  Thus, by induction, by a series of
rule commutations $\Pi_d$ can be be converted into a proof $\Pi'_d$ of
$\Delta, A_1\RvG{\alpha} A_2, \Sigma', B_d$ whose last step is the $\RvG{\alpha}$-rule
with hypotheses $\Delta,A_1$ and $A_2,\Sigma',B_d$. Let $\Pi'$ be
the proof obtained from $\Pi$ by replacing $\Pi_d$ by $\Pi_d'$.
In $\Pi'$, $\rho$ commutes over the $\RvG{\alpha}$-rule generating $A$,
thereby yielding the required proof $\Pi''$.
\item
Let \RvG{$\beta\in\{\tensor,\cutlabel\}$}.
Let $\Pi_1$ and $\Pi_2$ be the branches of $\Pi$ above $\rho$ proving
the hypotheses $\Delta_1, A, \Sigma_1, B_1$ and $\Delta_2, \Sigma_2,
B_2$ of $\rho$, respectively.
Here $\Delta = \Delta_1, \Delta_2$ and $\Sigma' = \Sigma_1,\Sigma_2$.
We assume that $A$ sides with $B_1$; the other case proceeds symmetrically. 
Since $\jumpgraph{\Pi_1}$ is a subgraph of $\jumpgraph\Pi$,
any path in $\jumpgraph{\Pi_1}$ between (vertices in)
$\Delta_1,A_1$ and $A_2,\Sigma_1,B_1$ passes through the indicated occurrence
of $\RvG{\alpha}$.  Thus, by induction, by a series of
rule commutations $\Pi_1$ can be be converted into a proof $\Pi'_1$ of
$\Delta_1, A_1\RvG{\alpha} A_2, \Sigma_1, B_1$ whose last step is the $\RvG{\alpha}$-rule
with hypotheses $\Delta_1,A_1$ and $A_2,\Sigma_1,B_1$.

In case $\Delta_2$ is empty, let $\Pi_2' \mathbin{=} \Pi_2$. Otherwise, by
induction, using Lemma~\ref{lem-no-path}, $\Pi_2$ can be be converted into a proof $\Pi'_2$ of
$\Delta_2, \Sigma_2, B_2$ whose last step is the $\mixlabel$-rule
with hypotheses $\Delta_2$ and $\Sigma_2,B_2$.

Let $\Pi'$ be the proof obtained from $\Pi$ by replacing
$\Pi_i$ by $\Pi_i'$, for $i\in\{1,2\}$.
Let $\Pi''$ be the proof with the same 3 or 4 subproofs yielding
$\Delta_1,A_1$,~ $A_2,\Sigma_1,B_1$,~ $\Delta_2$ and $\Sigma_2,B_2$
that first combines $\Delta_2$ with $\Delta_1,A_1$ into $\Delta,A_1$
using $\mixlabel$ (provided $\Delta_2$ is nonempty), and likewise combines
$\Sigma_2,B_2$ with $A_2,\Sigma_1,B_1$ into $A_2,\Sigma',B$ using
\RvG{$\beta$}, and then applies $\RvG{\alpha}$ to yield $\Delta,A,\Sigma',B$.
By means of a few simple rule commutations, $\Pi'$ can be converted into $\Pi''$.
\item 
Let $\beta = \parr$.
Let $\Pi_\rho$ be the part of $\Pi$ above $\rho$. Then $\Pi_\rho$
proves the hypothesis $\Delta, A, \Sigma', B_1, B_2$ of $\rho$.
Since $\jumpgraph{\Pi_\rho}$ is a subgraph of $\jumpgraph\Pi$, in
$\jumpgraph{\Pi_\rho}$ any path between (vertices in) $\Delta,A_1$ and
$A_2,\Sigma',B_1, B_2$ passes through the indicated occurrence of $\RvG{\alpha}$.
Hence, by induction, using by Lemma~\ref{lem-path}, $\Pi_\rho$ can, by means of
rule commutations, be converted into a proof $\Pi_\rho'$ whose last step is
the $\RvG{\alpha}$-rule with hypotheses $\Delta,A_1$ and $A_2,\Sigma',B_1,B_2$. 
Let $\Pi'$ be the proof obtained from $\Pi$ by replacing $\Pi_\rho$
by $\Pi_\rho'$. In $\Pi'$ the $\parr$-rule $\rho$ commutes
over the $\RvG{\alpha}$-rule generating $A$, thereby yielding the required
proof $\Pi''\!$.
\item 
Let $\beta = \with$.
The rule $\rho$ has hypotheses
\begin{figure}[h]
\vspace{-1ex}
\begin{center}
\begin{prooftree}\thickness=.08em
  \[
      \using                          
              \Pi_1
      \proofdotseparation=1.2ex       
      \proofdotnumber=4
      \leadsto                        
          \RvG{\Omega^\Delta_1,\Delta'},A_1\RvG{\alpha} A_2,\RvG{\Omega^\Sigma_1,\Sigma''},B_1
  \]
  \hspace{4ex}
  \[
      \using                          
              \Pi_2
      \proofdotseparation=1.2ex       
      \proofdotnumber=4
      \leadsto                        
          \RvG{\Omega^\Delta_2,\Delta'},A_1\RvG{\alpha} A_2,\RvG{\Omega^\Sigma_2,\Sigma''},B_2
  \]
  \hspace{-1ex}
  \justifies \RvG{\Omega^\Delta_1,\Omega^\Delta_2,\Delta'},A_1\RvG{\alpha} A_2,
             \RvG{\Omega^\Sigma_1,\Omega^\Sigma_2,\Sigma''},B_1\with B_2\using \withlabel(\rho)
\end{prooftree}
\vspace*{-1em}
\end{center}
\end{figure}
hypotheses $\RvG{\Omega^\Delta_1,\Delta'}, A, \RvG{\Omega^\Sigma_1,\Sigma''}, B_1$ and
$\RvG{\Omega^\Delta_2,\Delta'}, A, \RvG{\Omega^\Sigma_2,\Sigma''}, B_2$
\RvG{with $\Delta = \Omega^\Delta_1,\Omega^\Delta_2,\Delta'$
and $\Sigma' = \Omega^\Sigma_1,\Omega^\Sigma_2,\Sigma''$}.
\RvG{We claim that $\Omega^\Delta_1$, and by symmetry also $\Omega^\Delta_2$, is empty. For if not, let $\ell$ be a
leaf in $\Omega^\Delta_1$ that occurs in a link $a$ in a linking $\nu$ of $\jumpgraph{\Pi_1}$---such a leaf
exists by (\ref{inhabitation}). Then, using Table~\ref{cut-free-induction}, $\nu$ also occurs in $\jumpgraph\Pi$.
Using (\ref{P2}), let $\nu' \in \theta_\Pi$ be such that $\beta$ is the only
$\with$ toggled by $\{\nu, \nu'\}$. Again using
Table~\ref{cut-free-induction}, $\nu'$ must occur in $\jumpgraph{\Pi_2}$.
Since $\ell$ does not occur in $\jumpgraph{\Pi_2}$, $a$ cannot occur
in $\nu'$, and thus depends on $\beta$. Hence in $\jumpgraph\Pi$ there is a jump
edge from $\ell$ to $\beta$.
This contradicts the assumption that in $\fullgraph$ any path between
(vertices in) $\Delta,A_1$ and $A_2,\Sigma',B_1\parr B_2$ passes through
the indicated occurrence of $\RvG{\alpha}$.}

Let $\Pi_i$ be the branch of $\Pi$ above $\rho$ proving
$\RvG{\Delta'}, A, \RvG{\Omega^\Sigma_i},\linebreak[1]\Sigma'', B_i$.
Since $\jumpgraph{\Pi_i}$ is a subgraph of $\jumpgraph\Pi$, in
$\jumpgraph{\Pi_i}$ any path between (vertices in) $\RvG{\Delta'},A_1$ and
$A_2,\RvG{\Omega^\Sigma_i,\Sigma''},B_i$ passes through the indicated occurrence of $\RvG{\alpha}$.
Hence, by induction, using Lemma~\ref{lem-path}, $\Pi_i$ can, by means of rule
commutations, be converted into a proof $\Pi_i'$ whose last step is the
$\RvG{\alpha}$-rule with hypotheses $\RvG{\Delta'},A_1$ and
$A_2,\RvG{\Omega^\Sigma_i,\Sigma''},B_i$.

Thus the left hypotheses of $\Pi'_1$ and $\Pi'_2$
are both $\RvG{\Delta'},A_1$, and we claim that the proof nets on them induced
by the subproofs $\Pi'_{11}$ and $\Pi'_{21}$ of $\Pi$ leading up to these
hypotheses must be the same.
\begin{center}
\begin{prooftree}\thickness=.08em
  \[
    \[
      \using                          
              \Pi_{11}'
      \proofdotseparation=1.2ex       
      \proofdotnumber=4
      \leadsto                        
          \RvG{\Delta'},A_1
    \]
    \hspace{4ex} 
    A_2,\RvG{\Omega^\Sigma_1,\Sigma''},B_1
    \justifies \RvG{\Delta'},A_1\RvG{\alpha} A_2,\RvG{\Omega^\Sigma_1,\Sigma''},B_1
    \using\RvG{\alpha}
  \]
  \hspace{4ex}
  \[
    \[
      \using                          
              \Pi_{21}'
      \proofdotseparation=1.2ex       
      \proofdotnumber=4
      \leadsto                        
          \RvG{\Delta'},A_1
    \]
    \hspace{4ex} 
    A_2,\RvG{\Omega^\Sigma_2,\Sigma''},B_2
    \justifies \RvG{\Delta'},A_1\RvG{\alpha} A_2,\RvG{\Omega^\Sigma_2,\Sigma''},B_2
    \using\RvG{\alpha}
  \]
  \hspace{-1ex}
  \justifies \RvG{\Delta'},A_1\RvG{\alpha} A_2,\RvG{\Omega^\Sigma_1,\Omega^\Sigma_2,\Sigma''},B_1\with B_2\using \withlabel(\rho)
\end{prooftree}
\vspace{1ex}
\end{center}
For if not, let $\lambda$ be a linking in the proof net of $\Pi'_{11}$
but not in the proof net of $\Pi'_{21}$. (The symmetric case goes
likewise.) Then, using Table~\ref{cut-free-induction}, for some linking $\mu$ on
$A_2,\RvG{\Omega^\Sigma_1,\Sigma''},B_1$, the linking $\nu := \lambda \cup \mu$ must be in
the proof net $\theta$ of $\Pi$.
Using (\ref{P2}), let $\nu' \in \theta$ be such that $\beta$ is the only
$\with$ toggled by $\{\nu, \nu'\}$. Again using
Table~\ref{cut-free-induction}, $\nu'=\lambda'\cup\mu'$ for some
linking $\lambda'$ in the proof net of $\Pi'_{21}$. Since there must
be a link $a=\{\ell,\ell'\}$ such that $a \in \lambda$ but
$a\not\in\lambda'$ (or vice versa), in $\fullgraph$ there is a jump
edge from $\ell$ to $\beta$.
This contradicts the assumption that in $\fullgraph$ any path between
(vertices in) $\RvG{\Delta'},A_1$ and $A_2,\Sigma',B_1\parr B_2$ passes through
the indicated occurrence of $\RvG{\alpha}$.

Therefore, by induction, using Lemma~\ref{lem-same}, $\Pi'_{11}$ can be converted
into $\Pi'_{21}$ by a series of rule commutations. Let $\Pi''_2$ be
obtained from $\Pi'_2$ by replacing its subproof $\Pi'_{21}$ by
$\Pi'_{11}$, and let $\Pi'$ be the proof obtained from $\Pi$ by
replacing $\Pi_1$ by $\Pi_1'$ and $\Pi_2$ by $\Pi_2''$.  In
$\Pi'$, the $\RvG{\alpha}$-rules generating $A$ commute with the $\with$-rule $\rho$,
thereby yielding the required proof $\Pi''$. \hfill\filledbox
\end{itemize}

\noindent
\emph{Induction step for Lemma~\ref{lem-sep}.}
Suppose that $\Pi$ does not generate $A=A_1 \alpha A_2$ in its last step.
The case \RvG{$\alpha\in\{\tensor,\cutlabel\}$} is implied by Lemma~\ref{lem-path}.
Therefore we assume here that $\alpha\in\{\plus,\parr,\with\}$.
\begin{itemize}
\item
First consider the case that the last step of $\Pi$ is the
application of a $\mixlabel$-rule $\rho$. Then $\Gamma= \Delta,A$ and
$A$ occurs in a hypothesis $\Delta_d, A$ of $\rho$ (where $\Delta_d
\subseteq \Delta$).  Let $\Pi_d$ be the branch of $\Pi$ above $\rho$ proving
$\Delta_d, A$. Its proof net is separated by $A$ in $\Delta_d$, for
otherwise the proof net $\theta$ of $\Pi$ would not be separated by
$A$ in $\Gamma$.  Thus, by induction, by a series of rule commutations
$\Pi_d$ can be be converted into a proof $\Pi'_d$ of $\Delta_d, A$
that generates $A$ in its last step. Let $\Pi'$ be the proof of
$\Gamma$ obtained by replacing $\Pi_d$ by $\Pi_d'$ in $\Pi$. In
$\Pi'$, $\rho$ commutes over the $\alpha$-rule generating $A$, thereby
yielding the required proof $\Pi''$.
\end{itemize}
In the remaining cases let the last step of $\Pi$ be the application
of a $\beta$-rule $\rho$, generating the formula $B_1 \beta B_2$. Thus
$\Gamma= \Delta, A, B_1 \beta B_2$.
\begin{itemize}
\item 
Let $\beta\in\{\tensor,\plus\RvG{,\cutlabel}\}$. Then $A$ occurs in a hypothesis
$\Delta_d, A, B_d$ of $\rho$ (where $d$ is $1$ or $2$, and
$\Delta_d=\Delta$ in the case $\beta=\plus$).  Let $\Pi_d$ be the
branch of $\Pi$ above $\rho$ proving $\Delta_d, A, B_d$. Its proof net
is separated by $A$ in $\Delta_d, A, B_d$, for otherwise the proof net
$\theta$ of $\Pi$
would not be separated by $A$ in $\Gamma$.  Thus, by induction, by a series of
rule commutations $\Pi_d$ can be be converted into a proof $\Pi'_d$ of
$\Delta_d, A, B_d$ that generates $A$ in its last step. Let $\Pi'$ be
the proof of $\Gamma$ obtained by replacing $\Pi_d$ by $\Pi_d'$ in
$\Pi$. In $\Pi'$, $\rho$ commutes over the $\alpha$-rule generating $A$,
thereby yielding the required proof $\Pi''$.
\item 
Let $\beta = \parr$.
Let $\Pi_\rho$ be the part of $\Pi$ above $\rho$. Then $\Pi_\rho$
proves the hypothesis $\Delta, A, B_1, B_2$ of $\rho$, and its proof
net is separated by $A$, for otherwise $\theta$ would not be separated by
$A$. Thus, by induction, by a series of rule commutations $\Pi_\rho$
can be be converted into a proof $\Pi'_\rho$ of $\Delta, A, B_1, B_2$
that generates $A$ in its last step. As above, a rule commutation
completes the argument.
\item 
Let $\beta = \with$.
Then $\rho$ has hypotheses $\RvG{\Omega_1,\Delta'}, A, B_1$ and
$\RvG{\Omega_2,\Delta'}, A, B_2$ \RvG{with $\Delta = \Omega_1,\Omega_2,\Delta'$}.  Let $\Pi_i$ be the
branch of $\Pi$ above $\rho$ proving $\RvG{\Omega_i,\Delta'}, A, B_i$. The proof nets
of $\Pi_1$ and $\Pi_2$ are separated by $A$ in $\Delta, A, B_i$ in exactly the
same way, \ie, in case $\alpha=\plus$ choosing the same argument
$A_d$, for otherwise $\theta$ would not be
separated by $A$.  By induction, by a series of rule commutations
the $\Pi_i$ can be converted into proofs $\Pi'_i$ of $\RvG{\Omega_i,\Delta'},A,B_i$ that
generate $A$ in their last steps.
Let $\Pi'$ be the proof of $\Gamma$
obtained by replacing $\Pi_i$ by $\Pi_i'$ in $\Pi$, for $i=1,2$. In
$\Pi'$, the $\with$-rule $\rho$ commutes over the $\alpha$-rules
generating $A$, thereby yielding the required proof $\Pi''$.
\hfill\filledbox
\end{itemize}
\vspace{1ex}

\noindent
\emph{Induction step for Lemma~\ref{lem-no-path}.}
\begin{itemize}
\item
First consider the case that the last step $\rho$ of $\Pi$ is an
application of $\mixlabel$, say with hypotheses $\Gamma_1$ and $\Gamma_2$.
Let $\Pi_i$ be the branch of $\Pi$ above $\rho$, proving $\Gamma_i$
(for $i=1,2$). Since $\jumpgraph{\Pi_i}$ is a subgraph of $\jumpgraph\Pi$,
in $\jumpgraph{\Pi_i}$ there is no path between (vertices in) $\Delta_i:=\Delta \cap
\Gamma_i$ and $\Sigma_i := \Sigma\cap\Gamma_i$. In case $\Delta_i$ or
$\Sigma_i$ is empty, we let $\Pi_i'=\Pi_i$. Otherwise, by induction
$\Pi_i$ can, by means of rule commutations, be converted into a
proof $\Pi_i'$ whose last step is a $\mixlabel$-rule
with hypotheses $\Delta_i$ and $\Sigma_i$. Let $\Pi'$ be the proof
obtained from $\Pi$ by replacing $\Pi_i$ with $\Pi_i'$ for $i=1,2$.
In $\Pi'$, $\rho$ commutes over the 0, 1 or 2 $\mixlabel$-rules
introduced immediately above it, thereby yielding the required proof $\Pi''$.
\end{itemize}
In the remaining cases let the last step of $\Pi$ be the application
of a $\beta$-rule $\rho$, generating the formula $B = B_1 \beta B_2$.
We treat the case that $B$ occurs in $\Sigma$;
the other case follows by symmetry. Let $\Sigma = \Sigma',B_1\beta B_2$.
\begin{itemize}
\item 
Let \RvG{$\beta\in\{\tensor,\cutlabel\}$}. The hypotheses of this rule are
$\Delta_i,\Sigma_i,B_i$, for $i \in \{1,2\}$, where
$\Delta=\Delta_1,\Delta_2$ and $\Sigma'=\Sigma_1,\Sigma_2$.
Let $\Pi_i$ be the branch of $\Pi$ proving $\Delta_i,\Sigma_i,B_i$.
Since $\jumpgraph{\Pi_i}$ is a subgraph of $\jumpgraph\Pi$, in
$\jumpgraph{\Pi_i}$ there is no path between (vertices in) $\Delta_i$ and $\Sigma_i,B_i$.
In case $\Delta_i$ is empty, we let $\Pi_i'=\Pi_i$. Otherwise, by induction
$\Pi_i$ can, by means of rule commutations, be converted into a
proof $\Pi_i'$ whose last step is a $\mixlabel$-rule
with hypotheses $\Delta_i$ and $\Sigma_i,B_i$. Let $\Pi'$ be the proof
obtained from $\Pi$ by replacing $\Pi_i$ with $\Pi_i'$ for $i=1,2$.
In $\Pi'$, $\rho$ commutes over the 1 or 2 $\mixlabel$-rules
introduced immediately above it (possibly using $\comm{\mixlabel}{\RvG{\beta}}$
twice and $\comm{\mixlabel}{\mixlabel}$ once), thereby
yielding the required proof $\Pi''$.
\item 
Let $\beta = \plus$. The hypothesis of this rule is
$\Delta,\Sigma',B_d$, where $d$ is 1 or 2. Let $\Pi_d$ be the subproof
of $\Pi$ proving the latter sequent. Since $\jumpgraph{\Pi_d}$ is a
subgraph of $\jumpgraph\Pi$, in $\jumpgraph{\Pi_d}$ there is no path
between (vertices in) $\Delta$ and $\Sigma',B_d$. By induction
$\Pi_d$ can, by means of rule commutations, be converted into a
proof $\Pi_d'$ whose last step is an application of the $\mixlabel$-rule
with hypotheses $\Delta$ and $\Sigma',B_d$. Let $\Pi'$ be the proof
obtained from $\Pi$ by replacing $\Pi_d$ with $\Pi_d'$.
In $\Pi'$, $\rho$ commutes over the $\mixlabel$-rule
introduced immediately above it, thereby yielding the required proof $\Pi''$.
\item 
Let $\beta = \parr$. The hypothesis of this rule is
$\Delta,\Sigma',B_1,B_2$. Let $\Pi_\rho$ be the subproof
of $\Pi$ proving the latter sequent. Since $\jumpgraph{\Pi_\rho}$ is a
subgraph of $\jumpgraph\Pi$, in $\jumpgraph{\Pi_\rho}$ there is no path
between $\Delta$ and $\Sigma',B_1,B_2$. By induction
$\Pi_\rho$ can, by means of rule commutations, be converted into a
proof $\Pi_\rho'$ whose last step is a $\mixlabel$-rule
with hypotheses $\Delta$ and $\Sigma',B_1,B_2$. Let $\Pi'$ be the
proof obtained from $\Pi$ by replacing $\Pi_\rho$ with $\Pi_\rho'$.
In $\Pi'$, $\rho$ commutes over the $\mixlabel$-rule
introduced immediately above it, thereby yielding the required proof $\Pi''$.
\item 
Let $\beta = \with$. The hypotheses of this rule are
$\RvG{\Omega^\Delta_i,\Delta'}, \RvG{\Omega^\Sigma_i,\Sigma''}, B_i$
for $i \mathbin\in \{1,2\}$ \RvG{with $\Delta = \Omega^\Delta_1,\Omega^\Delta_2,\Delta'$
and $\Sigma' = \Omega^\Sigma_1,\Omega^\Sigma_2,\Sigma''$}.
As in the induction step for Lemma~\ref{lem-path}, it follows that
$\Omega_1$ and $\Omega_2$ are empty.
Let $\Pi_i$ be the branch of $\Pi$ proving $\RvG{\Delta'},\RvG{\Omega^\Sigma_i,\Sigma''},B_i$.
Since $\jumpgraph{\Pi_i}$ is a subgraph of $\jumpgraph\Pi$, in
$\jumpgraph{\Pi_i}$ there is no path between (vertices in) $\Delta'$ and $\RvG{\Omega^\Sigma_i,\Sigma''},B_i$.
By induction $\Pi_i$ can, by means of rule commutations, be converted into a
proof $\Pi_i'$ whose last step is a $\mixlabel$-rule
with hypotheses $\Delta'$ and $\RvG{\Omega^\Sigma_i,\Sigma''},B_i$.
So the left hypotheses of $\Pi'_1$ and $\Pi'_2$
are both $\Delta'$, and we claim that the proof nets on them induced
by the subproofs $\Pi'_{11}$ and $\Pi'_{21}$ of $\Pi$ leading up to these
hypotheses must be the same. The argument goes just as in the
induction step for Lemma~\ref{lem-path}.

Therefore, by induction, using Lemma~\ref{lem-same}, $\Pi'_{11}$ can be converted
into $\Pi'_{21}$ by a series of rule commutations. Let $\Pi''_2$ be
obtained from $\Pi'_2$ by replacing its subproof $\Pi'_{21}$ by
$\Pi'_{11}$, and let $\Pi'$ be the proof obtained from $\Pi$ by
replacing $\Pi_1$ by $\Pi_1'$ and $\Pi_2$ by $\Pi_2''$.  In
$\Pi'$, the $\mixlabel$-rules generating $A$ commute with the
$\with$-rule $\rho$, 
thereby yielding the required proof $\Pi''$. \hfill\filledbox
\end{itemize}

\vspace{1ex}

\noindent
\emph{Induction step for Lemma~\ref{lem-same}.}   For the induction step, suppose
$\Pi$ and $\Pi'$ are two proofs of a \RvG{\mallcut} sequent $\Gamma$
that have the same proof net $\theta$. 

First assume that $\fullgraph$ is connected.  In that case the last
steps of $\Pi$ and $\Pi'$ cannot be $\mixlabel$. Let $A$ be the
formula occurrence in $\Gamma$ that is generated by the last step of
$\Pi'$.  By Lemma~\ref{lem-last-gen}, $A$ separates $\theta$. Hence,
using Lemma~\ref{lem-sep}, by means of a series of rule commutations, $\Pi$ can
be converted into a proof $\Pi''$ of $\Gamma$ that generates $A$ in
its last step. By Lemma~\ref{lem-conn}, the last step $\sigma$ of $\Pi'$ is the
same as the last step of $\Pi''$.  Thus each hypothesis $\Gamma_d$ of
$\sigma$ is proven by a subproof $\Pi'_d$ of $\Pi'$, and by a subproof
$\Pi''_d$ of $\Pi''$.  As $\Pi'_d$ and $\Pi''_d$ have the same proof
net, by induction they can be converted into each other by means of a
series of rule commutations. It follows that also $\Pi$ and $\Pi'$
can be converted into each other by means of a series of rule
commutations.

Next assume that $\fullgraph$ is disconnected; let $\Gamma =
\Gamma_1,\Gamma_2$ with the $\Gamma_i$ nonempty sequents,
such that in $\fullgraph$ there is no path between (vertices in)
$\Gamma_1$ and $\Gamma_2$. Using Lemma~\ref{lem-no-path}, $\Pi$ can, by means of rule
commutations, be converted into a proof $\Pi_\mixlabel$ whose last step is the
$\mixlabel$-rule with hypotheses $\Gamma_i$. Let $\Pi_i$ be the branch
of $\Pi_\mixlabel$ proving $\Gamma_i$. Its proof net is simply the
restriction of (the linkings in) $\theta$ to $\Gamma_i$.
Likewise, $\Pi'$ can, by means of rule
commutations, be converted into a proof $\Pi'_\mixlabel$ whose last step is the
$\mixlabel$-rule with hypotheses $\Gamma_i$. Let $\Pi'_i$ be the branch
of $\Pi_\mixlabel$ proving $\Gamma_i$. Since $\Pi_i$ and $\Pi'_i$ have
the same proof net, by induction one can be converted into the other
by a series of rule commutations.
Consequently, $\Pi$ can be converted into $\Pi'$.
\end{proof}

\section{Proof of the MALL rule commutation theorem}

We use Proposition~\ref{MALL* kernel} to derive Theorem~\ref{kernel
  cut}. We shall need two lemmas connecting {\mallcut} rule
commutations with \fullMALL rule commutations.
\begin{lemma}\label{lem comm 1}
If two {\mallcut} proofs differ by a rule commutation, so do
their projections to \fullMALL proofs.
\end{lemma}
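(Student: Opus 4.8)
The plan is to exploit that the projection $\pi$ from {\mallcut} proofs to {\fullMALL} proofs acts strictly rule by rule: it relabels each $\cutlabel$ rule as a standard $\truecutlabel$ rule and deletes every cut formula $A\cut A\perp$ — in particular every cut-only sequent $\Omega_i$ — from every sequent, leaving all other rule occurrences in place. The first point I would record is that $\pi$ therefore neither deletes nor merges any rule occurrence: $\cutlabel$ becomes $\truecutlabel$, $\mixlabel$ stays $\mixlabel$, and the logical rules are untouched. Since a rule commutation is applied in context — it replaces some subproof $\Xi$ of $\Pi$ by the other side $\Xi'$ of a commutation, leaving the surrounding proof unchanged — the projections $\pi(\Pi)$ and $\pi(\Pi')$ agree everywhere except that $\pi(\Xi)$ is replaced by $\pi(\Xi')$. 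Thus it suffices to check, for each {\mallcut} rule commutation of Section~\ref{star commutations}, that projecting its two sides yields the two sides of a single {\fullMALL} rule commutation.

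I would then run through the classification. When a commutation involves neither $\cutlabel$ nor the cut-enriched $\withlabel$ (Tables~\ref{diag-commutations}--\ref{mix-commutations}), $\pi$ only strips cuts from the context sequents $\Gamma,\Delta,\Sigma$ and leaves the commutation verbatim, so the projected sides form the very same cut-free commutation. For a commutation involving $\cutlabel$ (Table~\ref{cut-star-commutations}), relabelling $\cutlabel$ as $\truecutlabel$ and deleting the cut formulas $A\cut A\perp$ (and, in $\comm{\with}{\truecutlabel}$, the sequents $\Omega_1,\Omega_2$) turns it precisely into the corresponding {\fullMALL} commutation of Table~\ref{cut-commutations}; note that the duplication of $\Pi_1$ in $\comm{\with}{\truecutlabel}$ survives, so the two projected sides stay genuinely distinct. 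For the $\with$-enriched heterogeneous commutations and for $\comm{\with}{\with}$, deleting all the cut-only sequents $\Omega_i$, respectively all fourteen $\Omega_{ijkl}$, returns them to the plain cut-free commutations of Tables~\ref{diag-commutations}--\ref{mix-commutations}.

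The one step that needs real verification — and which I expect to be the crux — is that the projection of a {\mallcut} $\withlabel$ rule (Table~\ref{mallcut-rules}) is a legitimate {\fullMALL} $\withlabel$ rule. Its two hypotheses are $\Omega_1,\Gamma,A$ and $\Omega_2,\Gamma,B$, and the rule may superimpose the cuts inside $\Gamma$ while keeping those in the $\Omega_i$ separate; I must confirm that once $\pi$ deletes all of $\Omega_1,\Omega_2$ together with every cut in $\Gamma$, the two hypotheses acquire the \emph{same} cut-free context $\Gamma'$, so that the outcome is a bona fide {\fullMALL} $\withlabel$ rule with hypotheses $\Gamma',A$ and $\Gamma',B$. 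Granting this, the $\with$-involving commutations line up hypothesis by hypothesis with their {\fullMALL} counterparts. Finally, because $\pi$ removes no rule occurrence, in every case the two projected proofs differ by exactly one rule rather than none, so $\pi(\Pi)$ and $\pi(\Pi')$ are indeed related by a single {\fullMALL} rule commutation.
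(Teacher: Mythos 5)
Your proof is correct and takes essentially the same approach as the paper: the paper disposes of this lemma with the single line ``This follows immediately from inspecting the rule commutations,'' and your argument is exactly that inspection carried out explicitly (the projection preserves every rule occurrence, a commutation is local to a proof fragment, and each {\mallcut} commutation---cut-free, $\cutlabel$-involving, or $\Omega$-enriched $\with$-case---projects to the corresponding \fullMALL{} commutation). Your added checks, that the projected $\withlabel$-rule has a common context $\Gamma'$ in both hypotheses and that the projected sides remain distinct, are details the paper leaves implicit but are consistent with its argument.
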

\begin{proof}
This follows immediately from inspecting the rule commutations.
\end{proof}
In the other direction, one might expect that for each pair
$(\Pi_l,\Pi_r)$ of commuting \fullMALL proofs, and for each {\mallcut}
proof $\Pi^*_l$ that projects to $\Pi_l$, there exists a {\mallcut} proof $\Pi^*_r$
projecting to $\Pi_r$ and commuting with $\Pi^*_l$. This is not the
case, however. A counterexample is provided by taking $\Pi^*_r$ to be
\begin{center}
\vspace{12pt}
\renewcommand{\piproof}[2]{\[\;\Pi_{#1}^*\justifies{#2}\]}
\begin{prooftree}\thickness=.08em
\[
 \piproof{1a}{\Omega_1,\Gamma,\A} \piproof{2}{\B,\Delta,\X} \justifies
 \Omega_1,\Gamma,\A\tensor \B,\Delta,\X \using \tensorlabel
\]
\[
 \piproof{1b}{\Omega_2,\Gamma,\A} \piproof{3}{\B,\Delta,\Y} \justifies
 \Omega_2,\Gamma,\A\tensor \B,\Delta,\Y \using \tensorlabel
\]
\justifies
\Omega_1,\Omega_2,\Gamma,\A\tensor \B,\Delta,\X\with \Y
\using \withlabel
\end{prooftree}
\vspace{12pt}
\end{center}
with $\Omega_1$ or $\Omega_2$ nonempty, and $(\Pi_l,\Pi_r)$ the last
rule commutation of Table~\ref{nondiag-commutations}.  For
$(\Pi_l,\Pi_r)$ to be a valid rule commutation, the subproofs
$\Pi_{1a}^*$ and $\Pi_{1b}^*$ must project to identical \fullMALL
proofs, even though they derive different {\mallcut}
sequents. This can be achieved by inserting a
$\with$-rule in each of these subproofs, where one superimposes two
cuts, while the other keeps them disjoint.
However, a weaker property \emph{does} hold:
\begin{lemma}\label{lem comm 2}
For each \fullMALL rule commutation $(\Pi_l,\Pi_r)$ there is a
{\mallcut} rule commutation $(\Pi^*_l,\Pi^*_r)$ that projects to
$(\Pi_l,\Pi_r)$.
\end{lemma}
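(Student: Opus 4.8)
The plan is to construct the required {\mallcut} commutation $(\Pi^*_l,\Pi^*_r)$ directly from the given \fullMALL commutation, building \emph{both} sides from a single, consistent choice of lifted subproofs; the counterexample above shows why one cannot instead lift one side arbitrarily and hope to commute. A \fullMALL rule commutation $(\Pi_l,\Pi_r)$ is an instance of one of the schemas of Tables~\ref{diag-commutations}--\ref{cut-commutations}, fixed by a choice of the finitely many preserved subproofs $\Pi_1,\dots,\Pi_n$ and the surrounding context. First I would lift each $\Pi_i$ to a {\mallcut} proof $\Pi^*_i$ by the \emph{separated lifting}: replace each cut rule of $\Pi_i$ by a $\cutlabel$-rule retaining its cut formula $C\cut C\perp$, and at every $\withlabel$-rule keep all incoming cuts apart (placing them in the $\Omega_i$ of Table~\ref{mallcut-rules}, superimposing none). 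A straightforward induction shows this is always well defined; since cuts occur only in outermost position, the retained cut formulas are never consumed by a connective and simply persist downward. Thus $\Pi^*_i$ projects to $\Pi_i$ and derives $\Omega^{(i)},\Theta_i$, where $\Theta_i$ is the conclusion of $\Pi_i$ and $\Omega^{(i)}$ is the cut-only sequent of all cut formulas created inside $\Pi^*_i$.

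I would then assemble $\Pi^*_l$ and $\Pi^*_r$ by slotting these $\Pi^*_i$ into the {\mallcut} schema corresponding to the given one (recovered, as in Section~\ref{star commutations}, by reinstating cuts), with each $\Omega$-parameter instantiated by the appropriate union of the $\Omega^{(i)}$. Whenever no subproof is duplicated --- every schema of Tables~\ref{diag-commutations}--\ref{mix-commutations} other than $\comm{\with}{\with}$, $\comm{\with}{\tensor}$, $\comm{\with}{\mixlabel}$, and every schema of Table~\ref{cut-star-commutations} other than $\comm{\with}{\cutinconv}$ --- each $\Pi^*_i$ occurs once per side with the same conclusion, so both sides carry exactly the same cut formulas down to a common conclusion, and $(\Pi^*_l,\Pi^*_r)$ is visibly a valid {\mallcut} commutation projecting to $(\Pi_l,\Pi_r)$.

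The cases needing argument are those in which a $\withlabel$-rule duplicates a subproof. In $\comm{\with}{\tensor}$, and identically in $\comm{\with}{\mixlabel}$ and $\comm{\with}{\cutinconv}$, the subproof $\Pi_1$ is copied on the $\withlabel$-side. I would use the \emph{same} lift $\Pi^*_1$ for both copies: being literally identical they create identical cut formulas $\Omega^{(1)}$, so the concluding $\withlabel$-rule may legitimately superimpose precisely these (absorbing $\Omega^{(1)}$ into its shared context) while keeping $\Omega^{(2)},\Omega^{(3)}$ --- the cuts of the non-duplicated $\Pi_2,\Pi_3$ --- separate. This is exactly the $\Omega$-enrichment of Section~\ref{star commutations}, and both sides are then seen to conclude with $\Omega^{(1)},\Omega^{(2)},\Omega^{(3)},\Gamma,\A\tensor\B,\Delta,\X\with\Y$, so the construction is a valid commutation.

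The main obstacle is $\comm{\with}{\with}$, where all four subproofs interact and three $\withlabel$-rules must be reconciled. Here I would again take one lift $\Pi^*_i$ of each $\Pi_i$, with cut formulas $\Omega^{(i)}$, and instantiate the fourteen-variable schema of Section~\ref{star commutations} by setting every multi-index $\Omega_{ijkl}$ to empty and each single-index variable to $\Omega^{(i)}$, so that $\Omega^{n}=\Omega^{(n)}$, $\Omega^{12}=\Omega^{(1)},\Omega^{(2)}$ and $\Omega^{1234}=\Omega^{(1)},\Omega^{(2)},\Omega^{(3)},\Omega^{(4)}$. Tracing the three $\withlabel$-rules on each side --- superimposing the cuts common to each paired hypothesis and keeping the remainder apart, with opposite pairings on the two sides --- one checks that both sides conclude with $\Omega^{(1)},\Omega^{(2)},\Omega^{(3)},\Omega^{(4)},\Gamma,\A\with\B,\X\with\Y$. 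The only delicate point is that every superimposition must identify cut formulas originating in the \emph{same} subproof, hence genuinely equal; reusing a single lift $\Pi^*_i$ throughout guarantees this, and is exactly the consistency that an arbitrarily chosen $\Pi^*_r$ (as in the counterexample) need not provide.
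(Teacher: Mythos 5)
Your proof is correct, and it reaches the lemma by a genuinely different construction than the paper's. The paper's proof first \emph{orients} the pair so that $\Pi_l$ is not an $\alpha\beta$-proof fragment with $\beta=\with$ and $\alpha\in\{\tensor,\mixlabel,\truecutlabel\}$ --- i.e.\ not the side on which a subproof is duplicated --- then lifts $\Pi_l$ to an \emph{arbitrary} {\mallcut} proof $\Pi^*_l$, and checks table by table that a commuting $\Pi^*_r$ can then always be found (duplication is harmless in that direction, since the two copies of the lifted subproof are created identical). You instead build \emph{both} sides simultaneously from canonical data: the unique separated lifts (in the sense of \fullMALL{}$^{\mkern-5mu\cutlabel}_{\sf sep}$) of the preserved subproofs, slotted into the $\Omega$-enriched schemas of Section~\ref{star commutations} with explicit instantiations of the $\Omega$-parameters. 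Each approach buys something: the paper's argument implicitly yields the stronger completion property that \emph{every} lift of the correctly oriented $\Pi_l$ extends to a {\mallcut} commutation (exactly what fails for the opposite orientation, per the counterexample), while yours dispenses with the orientation step, gives one uniform recipe, and pins down the crux precisely --- the fragment's own $\with$-rule must \emph{superimpose} the cuts $\Omega^{(1)}$ of the duplicated copy while keeping the others apart, rather than following the separated discipline there too, which would fail since \fullMALL{}$^{\mkern-5mu\cutlabel}_{\sf sep}$ loses the commutations $\comm{\with}{\alpha}$ for $\alpha\in\{\tensor,\mixlabel,\cutlabel\}$, as the paper notes. Two cosmetic points: the \fullMALL cut commutations you classify live in Table~\ref{cut-commutations} (their {\mallcut} lifts in Table~\ref{cut-star-commutations}), and $\comm{\with}{\with}$ duplicates no subproof --- it merely requires distributing the $\Omega^{(i)}$ consistently over the three $\with$-rules, which your instantiation (single-index variables set to $\Omega^{(i)}$, multi-index ones empty) does correctly.
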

\begin{proof}
Orient the pair $(\Pi_l,\Pi_r)$ so that we avoid $\Pi_l$ being an
$\alpha\beta$-proof fragment (see the Appendix)
with $\beta=\with$ and $\alpha\in\{\tensor,\mixlabel,{\sf cut}\}$.
Take $\Pi^*_l$ to be an arbitrary {\mallcut} proof projecting to $\Pi_l$.
Going through the rule commutations of
Tables~\ref{diag-commutations}--\ref{cut-commutations},
one can check that in each case it is straightforward to find the
required proof $\Pi^*_r$.
\end{proof}

\begin{corollary}\mbox{}\vspace{-1ex}\label{corr-kernel cut}
\begin{enumerate}
\item[(a)]
If two \fullMALL proofs\/ $\Pi_l$ and $\Pi_r$ translate to a common proof net
then they can be converted into each other by rule commutations.
\vspace{-1ex}
\item[(b)]
If two \fullMALL proofs\/ $\Pi_l$ and $\Pi_r$ differ by a rule commutation then they have a common proof net.
\end{enumerate}
\end{corollary}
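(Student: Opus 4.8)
The plan is to reduce both claims to the already-established {\mallcut} results by passing through the sequentialisation relation $\seqrel$, using Proposition~\ref{MALL* kernel} as the engine and Lemmas~\ref{lem comm 1} and~\ref{lem comm 2} as the bridges between {\mallcut} and \fullMALL rule commutations. Recall that, by definition, a \fullMALL proof translates to a linking set $\theta$ exactly when it is the projection (erasing all cuts) of a {\mallcut} proof translating to $\theta$; this is the only place where the passage between the two systems enters, so the whole argument is a matter of unfolding that definition in each direction.

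For part~(a), suppose $\Pi_l$ and $\Pi_r$ translate to a common proof net $\theta$. Unfolding $\seqrel$, I would pick {\mallcut} proofs $\Pi^*_l$ and $\Pi^*_r$ of the (cut-carrying) sequent underlying $\theta$ such that $\Pi^*_l$ projects to $\Pi_l$, $\Pi^*_r$ projects to $\Pi_r$, and both translate to $\theta$. Since $\Pi^*_l$ and $\Pi^*_r$ translate to the same proof net, Proposition~\ref{MALL* kernel} supplies a finite series of {\mallcut} rule commutations converting $\Pi^*_l$ into $\Pi^*_r$. Applying Lemma~\ref{lem comm 1} to each step, the projections yield a series of \fullMALL rule commutations converting $\Pi_l$ into $\Pi_r$ (erasing cuts preserves the transposed pair of rules, so every step projects to a genuine commutation).

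For part~(b), suppose $\Pi_l$ and $\Pi_r$ differ by a \fullMALL rule commutation. By Lemma~\ref{lem comm 2} there is a {\mallcut} rule commutation $(\Pi^*_l,\Pi^*_r)$ projecting to $(\Pi_l,\Pi_r)$; in particular $\Pi^*_l$ and $\Pi^*_r$ are {\mallcut} proofs of a common sequent that differ by a single commutation. The easy direction of Proposition~\ref{MALL* kernel} (established in its proof, by taking a $\with$-resolution on either side of the commutation and observing that it induces essentially the same resolutions of the subproofs) shows that $\Pi^*_l$ and $\Pi^*_r$ translate to one and the same proof net $\theta$. Since $\Pi_l$ and $\Pi_r$ are their respective projections, the definition of $\seqrel$ gives $\Pi_l\seqrel\theta$ and $\Pi_r\seqrel\theta$, so $\theta$ is a common proof net, as required.

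The two parts are thus essentially bookkeeping on top of Proposition~\ref{MALL* kernel}, and the genuine mathematical content lives in that proposition together with the two bridging lemmas. The only point demanding care is that the correspondence between the systems is asymmetric: as the counterexample preceding Lemma~\ref{lem comm 2} shows, one cannot freely prescribe the {\mallcut} lift of a chosen side of a \fullMALL commutation, because a $\with$-rule may superimpose cuts differently above the two tensors. This is exactly why Lemma~\ref{lem comm 2} is phrased so as to let us orient the pair before lifting, and for part~(b) I only need the \emph{existence} of some lift with the correct projection rather than a lift of a prescribed proof. I therefore expect no further obstacle once the proposition and the two lemmas are in hand.
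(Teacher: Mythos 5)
Your proposal is correct and takes essentially the same route as the paper's own proof: both parts unfold the definition of $\seqrel$ to lift the \fullMALL proofs to {\mallcut}, invoke Proposition~\ref{MALL* kernel} (its hard direction for part~(a), its easy direction for part~(b)), and transfer back via Lemma~\ref{lem comm 1} and Lemma~\ref{lem comm 2} respectively. Your closing remark about the asymmetry of lifting, and why part~(b) only needs existence of some commuting lift, matches the paper's discussion of the counterexample preceding Lemma~\ref{lem comm 2}.
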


\begin{proof}
  Suppose $\Pi_l$ and $\Pi_r$ translate to a common proof net $\theta$.
  Then $\Pi_l$ and $\Pi_r$ must be projections of {\mallcut} proofs
  $\Pi^*_l$ and $\Pi^*_r$ that translate to $\theta$. By
  Proposition~\ref{MALL* kernel} $\Pi^*_l$ and $\Pi^*_r$ can be
  converted into each other by a series of rule commutations.
  By Lemma~\ref{lem comm 1} the same holds for $\Pi_l$ and $\Pi_r$.

  Suppose $\Pi_l$ and $\Pi_r$ differ by a rule commutation.
  By Lemma~\ref{lem comm 2} there are {\mallcut} proofs $\Pi^*_l$ and
  $\Pi^*_r$ that differ by a rule commutation and project to $\Pi_l$
  and $\Pi_r$. By Proposition~\ref{MALL* kernel} $\Pi^*_l$ and
  $\Pi^*_r$ translate to the same proof net $\theta$. Hence $\theta$
  is a common proof net of $\Pi_l$ and $\Pi_r$.
\end{proof}
Finally, Theorem~\ref{kernel cut} is a direct consequence of
Corollary~\ref{corr-kernel cut}.\hfill$\square$

\section{Alternative treatments of cut}

One of the innovations of the proof nets from \cite{HvG} over the
monomial ones from \cite{Gir87} is that the translation from cut-free
proofs to proof nets is a function. This property does not extend to
proofs with cut. In \cite[Section 5.3.4]{HvG} \RvG{three} alternative
translations are discussed \RvG{of which two} \emph{are} functions.
\RvG{One of these fails to identify} proof nets modulo rule commutations.
\RvG{For the other, we conjecture that it does.
However, for this notion ``it is not immediately clear how to define a meaningful correctness
criterion to characterise the image of the translation'' \cite{HvG}.}

\paragraph{Superimposing no cuts}
The first alternative is to restrict the rule for $\with$ in
Table~\ref{mallcut-rules} by requiring that $\Gamma$ may contain no
cuts. This means the cuts appearing in the conclusion of the rule must
be the disjoint union of the cuts appearing in the premises.
Let \fullMALL{}$^{\mkern-5mu\cutlabel}_{\sf sep}$ be the resulting alternative for {\mallcut}.
Now each \fullMALL proof is the restriction of a unique \fullMALL{}$^{\mkern-5mu\cutlabel}_{\sf sep}$
proof; hence the translation from \fullMALL proofs to proof nets becomes a
function.

Clearly, the resulting notion of proof-net equivalence on \fullMALL proofs
is included in the one from Section~\ref{cut-sec}. In fact the
conclusion is strict, for we loose the rule commutation
$\comm{\cutlabel}{\with}$, as illustrated in \cite[Section 5.3.4]{HvG}.
\RvG{In general}, the commutations $\comm{\alpha}{\with}$ and
$\comm{\with}{\alpha}$ with $\alpha\in\{\tensor,\mixlabel,\cutlabel\}$
are no longer valid, because any cut included in $\Gamma$ appears only
once on the left, yet is duplicated on the right.
(All other rule commutations remain valid.)

\paragraph{Superimposing as many cuts as possible}
The alternative of requiring $\Omega_1$ and $\Omega_2$ in the rule for $\with$ in
Table~\ref{mallcut-rules} to be disjoint superimposes as many cuts as possible.
As pointed out in \cite[Section 5.3.4]{HvG} it does not yield a
function from \fullMALL proofs to proof nets, for there may be a choice of
how to identify cuts.

\paragraph{Local cuts}
A final variation considered in \cite{HvG} is to depart
from sets of linkings on a fixed cut sequent, and permit each linking
its own set of cut pairs. Define a \defn{cut linking} on a \fullMALL sequent
$\Gamma$ as a linking on a sequent $\Omega,\Gamma$ with $\Omega$ a
disjoint union of cuts. In order to abstract from the identity of the cut pairs we
consider $\Omega$ (but not $\Gamma$) up to isomorphism. A \fullMALL proof
of $\Gamma$ yields a set of cut linkings on $\Gamma$ in the obvious way \cite{HvG}.
This yields a deterministic translation (function) from \fullMALL proofs to
sets of cut linkings.

Since the set of cut linkings of a \fullMALL sequent $\Gamma$ can be
inferred from any \fullMALL proof net of $\Gamma$, the kernel of this
function (identifying \fullMALL proofs that translate to the same set of
cut linkings) \RvG{includes} proof-net equivalence as defined in
Section~\ref{cut-sec}.
\RvG{Thus, two MALL proofs that differ by rule commutations translate
  to the same set of cut linkings.}

\begin{conjecture}
Two \fullMALL proofs translate to the same set of cut linkings if and only
if they can be converted into each other by a series of rule commutations.
\end{conjecture}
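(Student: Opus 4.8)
The direction ``convertibility by rule commutations implies equality of the cut-linking sets'' is essentially recorded immediately before the conjecture, so the plan is to prove the converse: if two \fullMALL proofs $\Pi_l$ and $\Pi_r$ translate to the same set of cut linkings, they are related by a series of rule commutations. The starting observation is that the cut-linking translation is the \emph{separate-cuts} proof net read up to isomorphism of its cut pairs. Indeed, by the treatment of \fullMALL{}$^{\mkern-5mu\cutlabel}_{\sf sep}$ in Section~\ref{cut-sec}, every \fullMALL proof has a \emph{unique} lift in which no cuts are ever superimposed; its {\mallcut} proof net $\theta_{\sf sep}$ assigns to each $\with$-resolution $R$ the linking $\lambda_R$ together with its (pairwise distinct) surviving cuts $\Omega_R$. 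Forgetting the identity of these cuts---\ie\ passing to each $\Omega_R$ up to isomorphism---is exactly the cut-linking set. Hence ``$\Pi_l$ and $\Pi_r$ have the same set of cut linkings'' says precisely that $\theta_{\sf sep}(\Pi_l)$ and $\theta_{\sf sep}(\Pi_r)$ are \emph{isomorphic}, though not necessarily equal, as linking sets over a disjoint union of cuts.

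This suggests attacking the converse in two layers. First, for the case of \emph{equal} $\theta_{\sf sep}$, I would establish a sequentialisation theorem for \fullMALL{}$^{\mkern-5mu\cutlabel}_{\sf sep}$ parallel to Proposition~\ref{MALL* kernel}, by re-running the simultaneous induction of Lemmas~\ref{lem-path}--\ref{lem-same} inside the separate-cuts system: two \fullMALL{}$^{\mkern-5mu\cutlabel}_{\sf sep}$ proofs with the same proof net should be convertible by \fullMALL{}$^{\mkern-5mu\cutlabel}_{\sf sep}$ rule commutations. Since that system omits no commutation \emph{except} the $\comm{\alpha}{\with}$ and $\comm{\with}{\alpha}$ with $\alpha\in\{\tensor,\mixlabel,\cutlabel\}$ (all others remaining valid, as noted in Section~\ref{cut-sec}), the two projected \fullMALL proofs are then \emph{a fortiori} related by \fullMALL rule commutations. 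Second---and here lies the genuine content---I must absorb the phrase ``up to isomorphism'': I would show that every isomorphism witnessing $\theta_{\sf sep}(\Pi_l)\cong\theta_{\sf sep}(\Pi_r)$ is realised by a finite sequence of exactly the commutations that \fullMALL{}$^{\mkern-5mu\cutlabel}_{\sf sep}$ discards. The mechanism is that commuting a $\cutlabel$-, $\tensor$- or $\mixlabel$-rule over a $\with$-rule in \fullMALL \emph{duplicates} a cut into two isomorphic copies in the two branches (\cf\ the counterexample following Lemma~\ref{lem comm 2}); the cut-linking abstraction identifies these copies, so such a step is invisible to the cut-linking set yet effects precisely a relabelling of $\theta_{\sf sep}$.

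The main obstacle will be this second layer. One must prove that \emph{every} cut-isomorphism between $\theta_{\sf sep}(\Pi_l)$ and $\theta_{\sf sep}(\Pi_r)$ decomposes into elementary relabellings, each carried by a single legal \fullMALL rule commutation, and---for the ``only if'' to stay tight rather than collapse too much---that no such step ever merges proofs from distinct rule-commutation classes. The delicate combinatorics is that one cut may thread through several nested $\with$-rules, so a relabelling forced at one $\with$ must be propagated coherently across \emph{all} $\with$-resolutions simultaneously; reconciling these global constraints while keeping each individual move a valid commutation is the point I expect to demand the most care, and is presumably why the statement remains a conjecture.
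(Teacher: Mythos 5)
This statement is left \emph{open} in the paper: it is stated as a conjecture, the paper records only the easy direction (rule commutations preserve the cut-linking set, noted in the sentences immediately preceding the conjecture), and no proof of the converse is offered anywhere. Your proposal likewise does not prove the converse. It is a two-layer strategy in which the decisive steps are explicitly deferred (``I would establish\ldots'', ``I must absorb\ldots'', ``The main obstacle will be this second layer''), so it cannot be counted as a proof of the statement; at best it is a research plan, and you yourself concede that its hardest step is ``presumably why the statement remains a conjecture.''

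Beyond incompleteness, the reduction you set up contains a concrete error. You assert that having the same set of cut linkings ``says precisely'' that $\theta_{\sf sep}(\Pi_l)$ and $\theta_{\sf sep}(\Pi_r)$ are isomorphic as linking sets over a disjoint union of cuts. That is not what the cut-linking translation gives you. A cut linking abstracts the cut pairs of \emph{each linking separately}: the $\Omega$ attached to a linking is taken up to isomorphism linking by linking, not globally across the whole linking set. Equality of the two cut-linking sets therefore only provides, for each $\with$-resolution of $\Pi_l$, \emph{some} $\with$-resolution of $\Pi_r$ with an isomorphic individual cut linking; these per-linking isomorphisms need not cohere into a single bijection between the cut sets of the two \fullMALL{}$^{\mkern-5mu\cutlabel}_{\sf sep}$ proof nets, and the induced correspondence between linkings need not even be injective, since distinct linkings of one proof net (touching different but isomorphic cuts) may collapse to the same cut linking. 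Establishing that a coherent global isomorphism always exists for linking sets that actually arise from proofs is itself a substantial part of the open problem, and your proposal gives no argument for it; your ``second layer''---decomposing an arbitrary such isomorphism into elementary relabellings, each realised by a single $\comm{\with}{\alpha}$ commutation with $\alpha\in\{\tensor,\mixlabel,\cutlabel\}$, coherently across nested $\with$s---is exactly the missing content. Incidentally, your ``first layer'' needs no fresh induction inside the separate-cuts system: if the two sep lifts have \emph{equal} {\mallcut} proof nets, Proposition~\ref{MALL* kernel} already makes the lifts interconvertible by {\mallcut} rule commutations, and Lemma~\ref{lem comm 1} projects this to \fullMALL. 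But that shortcut does nothing to close the gap above, which is where the conjecture actually lives.
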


\section{Local rule commutations}\label{sec:local}

The rule commutations 
$\commpair{\with}{\tensor}$,
$\commpair{\with}{\mixlabel}$,
$\commpair{\with}{\truecutlabel}$
and
$\commpair{\with}{\cutinconv}$
duplicate/identify premises, respectively;
we refer to the other rule commutation as \defn{local}.
The Appendix below concludes with a general definition of local rule commutation.
In \cite{HH16} a different notion of proof net, called a
\defn{conflict net}, is proposed, such that two
\fullMALL proofs translate to the same conflict net if and only
if they can be converted into each other by a series of local rule commutations.

\section*{Appendix: \kern3pt General concept of rule commutation}

In order to properly define rule commutations in a sequent calculus,
we consider rules---called \defn{abstract} rules---that contain
\defn{variables} ranging over formulas and over sequents. The rules
for \fullMALL in Section\RvG{s}~\ref{mall-sec} \RvG{and~\ref{cut-sec}}
are of this form. Thus, rather than seeing the
rule for $\tensorlabel$ as a template, of which there is an instance
for each choice of $A$, $B$, $\Gamma$ and $\Delta$, we see it as a
single rule containing four variables.  When applying such a rule in a
proof, formulas and sequents are substituted for the variables of the corresponding type.

Formally, a \defn{formula expression} is built from formula variables,
negated formula variables, literals and connectives; it is a
\defn{formula} if it contains only literals and connectives.
Here a \defn{negated formula variable} is a formula variable annotated
with the subscript $\perp$. A \defn{sequent expression} is a multiset of sequent
variables and formula expressions; it is a \defn{sequent} if it does not
contain any variables. Here a \defn{multiset} of objects from a set
$S$ is a function $M:S\rightarrow\mbox{I\!N}$ indicating for each object in
$S$ how often it occurs in $M$. An object $x\in S$ with $M(x)>0$ is
called an \defn{element} of $M$. Let $C(M)=\{x\in S \mid M(x)>0\}$
denote the set of elements of $M$. In case $M(x)\in \{0,1\}$ for all
$x\in S$, the multiset $M$ is usually identified with the set $C(M)$.

An \defn{abstract rule} is a pair \plat{$\frac{H}{\Gamma}$} of a
set $H$ of sequent expressions---the \defn{premises}---and a single
sequent expression $\Gamma$---the conclusion.
A \defn{concrete rule}---simply called \defn{rule} outside of this
appendix---is a pair \plat{$\frac{H}{\Gamma}$} of a \emph{multi}set
$H$ of (variable-free) sequents and a single sequent $\Gamma$.

A \defn{substitution} $\sigma$ maps formula variables to formula
expressions and sequent variables to sequent expressions;\footnote{\RvG{In order
to capture {\mallcut}, we also allow sequent variables of special
types---like ``cut only'' in Table~\ref{mallcut-rules}---and for each
type define the class of sequent expressions that may be substituted
for it.}} it extends
to negated formula variables $A\perp$ by $\sigma(A\perp)=\sigma(A)\perp$,
and further extends to a map from formula expressions to formula expressions and from
(sets of) sequent expressions to (multisets of) sequent
expressions. A substitution is \defn{closed} if it maps formula
variables to formulas and sequent variables to sequents.
If \plat{$\frac{H}{\Gamma}$} is an abstract rule and $\sigma$ a
(closed) substitution, then \plat{$\frac{\sigma(H)}{\sigma(\Gamma)}$}
is a (closed) \defn{substitution instance} of \plat{$\frac{H}{\Gamma}$};
its \defn{collapse} \plat{$\frac{C(\sigma(H))}{\sigma(\Gamma)}$}
is again an abstract rule.

Given a collection of connectives to determine the valid formulas,
a \defn{sequent calculus}---such as \fullMALL---is given by a set of abstract rules.%
\footnote{By these definitions, the \fullMALL axiom $\axlabel$, unlike the
  other rules, is still a template, of which an instance is obtained
  by filling in actual propositional variables for the metavariable $P$.
  If this is felt to be inelegant, one could rename ``propositional
  variable'' into ``atom'' and introduce ``atom variables'' and
  negated atom variables to formulate the axiom $\axlabel$.
  For simplicity, we abstain from doing this here.}
It induces a set of concrete rules, namely the \RvG{collapsed} closed substitution
instances of the abstract rules.

We now formalise proofs, extended to include the case where the
conclusion is a sequent expression. When the conclusion is a standard
sequent, the definition specialises to the familiar notion of sequent
calculus proof.  A \defn{proof} $\Pi$ in a sequent calculus is a
well-founded, upwards branching tree whose nodes are labelled by
sequent expressions and some of the leaves are marked ``hypothesis'',
such that
if $\Delta$ is the label of a node that is not a hypothesis
and $K$ is the multiset of labels of the children of this node then
$\frac{K}{\Delta}$ is a substitution instance of one of the rules of
that sequent calculus.
Such a proof \defn{derives} the abstract rule $\frac{H}{\Gamma}$, where $H$ is
the set of labels of the hypotheses, and $\Gamma$ the label of the
root of $\Pi$. A proof of a sequent expression $\Gamma$ can be
regarded as a proof of the abstract rule $\frac{H}{\Gamma}$ with $H=\emptyset$.

For $\alpha$ and $\beta$ two abstract proof rules in a sequent calculus,
an \defn{$\alpha\beta$-proof} is a proof in which each non-hypothesis
node is either the root and an application of $\beta$, or a child of the root and an application of $\alpha$.
A \defn{subproof} $\Pi'$ of a proof $\Pi$ comprises all nodes in the tree
$\Pi$ above a given node, which is the root of $\Pi'$.
A proof $\Pi_f$ deriving a rule $\frac{H}{\Gamma}$, together with
proofs $\Pi_\Delta$ of $\Delta$ for each $\Delta\in H$, composes into
a proof $\Pi'$ of $\Gamma$, such that the proofs $\Pi_\Delta$ are
subproofs of $\Pi'$. If $\Pi'$ itself is a subproof of a proof
$\Pi$ we say that $\Pi_f$ is a \defn{proof fragment} of $\Pi$;
if $\Pi_f$ is an $\alpha\beta$-proof, it is called an 
\defn{$\alpha\beta$-proof fragment} of $\Pi$.\linebreak[2]
For $\Pi$ a proof and $\sigma$ a substitution, $\sigma(\Pi)$ denotes
the proof obtained from $\Pi$ by applying $\sigma$ to all its node labels.

An abstract rule is \defn{pure} if
(1) its premises are free of literals and
connectives and thus are built from variables (sequent variables,
formula variables and negated formula variables) only, and
(2) each of these variables occurs exactly once in the conclusion.
We define rule commutation for sequent calculi containing pure
rules only. This includes \cutfreeMALL and {\mallcut}, but not \fullMALL;
however, the rule commutations of \fullMALL can be derived as the projections
of the ones for {\mallcut}.

The implicit tracking of subformula occurrences described in
Section~\ref{mall-sec} and utilised in Sections~\ref{sec-proofnets}
and~\ref{cut-sec} can now be formalised as follows: a 
subformula occurrence within an occurrence of a formula or sequent substituted
for a variable $A$, $A\perp$ or $\Gamma$ appearing in the premises of
an abstract rule tracks 
to the corresponding subformula occurrence within the occurrence of
the same formula or sequent substituted for $A$, $A\perp$ or
$\Gamma$ in the conclusion of the rule.

It is not hard to show that any abstract rule derivable in a sequent calculus
containing pure rules only can be obtained as a collapsed substitution
instance of a pure rule derivable in that sequent calculus.  Although
we do not make use of this insight in our proofs, it helps to motivate
the following definition.

A \defn{rule commutation} is an (ordered) pair of an
$\alpha\beta$-proof and a (different) $\beta\alpha$-proof
deriving the same pure rule.
An $\alpha\beta$-proof $\Pi'_1$ \defn{commutes} with a
$\beta\alpha$-proof $\Pi'_2$ if there exists a rule commutation
$(\Pi_1,\Pi_2)$ and a substitution $\sigma$ such that $\sigma(\Pi_1)=\Pi'_1$
and $\sigma(\Pi_2)=\Pi'_2$.
Two proofs \defn{differ by a rule commutation} if one can be obtained from
the other by the replacement of an $\alpha\beta$-proof fragment
occurring in it by a commuting $\beta\alpha$-proof fragment.
Thus a rule commutation is a transposition of adjacent rules that
preserves subproofs immediately above, with possible
duplication/identification.

We leave it to the reader to check that this definition, applied to
\cutfreeMALL and {\mallcut}, generates exactly the rule commutations presented in
Sections~\ref{sec-commutations} and~\ref{star commutations}.

In our definition of rule commutation it is essential that the rule
derived by each of the two proofs $\Pi_1$ and $\Pi_2$ in a rule
commutation is pure. Skipping this requirement would give rise to
unwanted rule commutations. As an example, consider the rule
commutation $\comm{\convparr}{\convparr}$ of Figure~\ref{diag-commutations}
in which $\Gamma,A_1$ is substituted for~$\Gamma$.
The two sides of the resulting rule commutation define the same
non-pure rule. Simply requiring---as we do---that the same proof
$\Pi$, deriving the sequent $\Gamma,A_1,A_1,A_2,B_1,B_2$, is used at
both sides of the commutation does not rule out that the roles of
the two occurrences of $A_1$ are swapped at one side of the
commutation, possibly leading to proofs inducing different proof nets.

Moreover, we cannot drop the requirement that $\Pi_1$ and $\Pi_2$ must
be $\alpha\beta$- and $\beta\alpha$-rules, for that would give rise
to the unwanted rule commutation
\begin{center}
  \vspace{2ex}\hspace*{-2ex}\begin{math}
    \newcommand{\gap}{\hspace{5.5ex}}
    \newcommand{\branch}[3]{
      \[
        \[
          #1,\, C \hspace{4ex} D,\ #2
          \justifies
          #1,\, C\tensor D,\ #2
          \using \tensorlabel
        \]
        \justifies
        #1,\, C\tensor D,\, E\plus F
        \using #3
      \]
    }
    \newcommand{\base}[4]{
      \begin{prooftree}\thickness=.08em
        \branch{A}{#1}{#3}
        \hspace{1ex}
        \branch{B}{#2}{#4}
        \justifies
        A \with B,\, C\tensor D,\, E\plus F
        \using \withlabel
      \end{prooftree}
    }
    \begin{array}{c@{\gap}c@{\gap}c}
      \base{E}{F}{\leftpluslabel}{\rightpluslabel}
      & \longleftrightarrow &
      \base{F}{E}{\rightpluslabel}{\leftpluslabel}
      \\[8ex]
    \end{array}
  \end{math}\hspace*{-2ex}
\end{center}
These two proofs derive the same pure rule, yet (when instantiated)
induce different proof nets:
\begin{center}\vspace{2ex}%
  \begin{math}
    \newcommand{\Ax}{4}
    \newcommand{\Bx}{19}
    \newcommand{\Cx}{39}
    \newcommand{\Dx}{58}
    \newcommand{\Ex}{78}
    \newcommand{\Fx}{97}
    \newcommand{\net}[2]{
      \links{A \with B,\;\;\;\: C\tensor D,\;\;\;\: E\plus F\rule{0ex}{1.8ex}}{%
      \renewcommand{\linkcolor}{\Brown}\olink{\Ax}{\Cx}%
      \renewcommand{\linkcolor}{\Blue}\olink{\Dx}{#1}%
      \renewcommand{\linkcolor}{\Red}\ulink{\Bx}{\Cx}%
      \renewcommand{\linkcolor}{\Green}\ulink{\Dx}{#2}}
    }
    \net{\Ex}{\Fx}
    \hspace*{15ex}
    \net{\Fx}{\Ex}
  \end{math}\vspace{2ex}
\end{center}
%
Based on the above, we say that a concrete $\beta$-rule \defn{commutes over}
a concrete $\alpha$-rule, if these rules occur in an $\alpha\beta$-proof fragment
obtained as a substitution instance of an $\alpha\beta$-proof for which
there exists a $\beta\alpha$-proof deriving the same pure rule.
This definition of rule commutation is more liberal than the
standard definition of rule commutation for a Gentzen sequent calculus
\cite[Def.\ 5.2.1]{TS96}, analysed by Kleene \cite{Kle52} and Curry
\cite{Cur52}. That definition only covers the case where \emph{each}
$\beta$ rule commutes over \emph{each} $\alpha$-rule, corresponding
with the check marks in Table~\ref{commutation-table}.
Moreover, \cite{TS96} requires---translated to our terminology---the
source proof fragment to have two non-leaf nodes only (one for $\beta$
and only one for $\alpha$), thereby ruling out the commutation of
$\with$ over any $\alpha$.

\paragraph{Local rule commutations.}

Define a proof as \defn{non-repeating} if all its hypothesis have a
different label.
A rule commutation $(\Pi_1,\Pi_2)$ is \defn{local} (\cf\
Section~\ref{sec:local}) if $\Pi_1$ and $\Pi_2$ are non-repeating.

\bibliographystyle{eptcs}

\begin{thebibliography}{1}
\providecommand{\bibitemdeclare}[2]{}
\providecommand{\surnamestart}{}
\providecommand{\surnameend}{}
\providecommand{\urlprefix}{Available at }
\providecommand{\url}[1]{\texttt{#1}}
\providecommand{\href}[2]{\texttt{#2}}
\providecommand{\urlalt}[2]{\href{#1}{#2}}
\providecommand{\doi}[1]{doi:\urlalt{http://dx.doi.org/#1}{#1}}
\providecommand{\bibinfo}[2]{#2}

\bibitemdeclare{article}{Cur52}
\bibitem{Cur52}
\bibinfo{author}{H.B. \surnamestart Curry\surnameend} (\bibinfo{year}{1952}):
  \emph{\bibinfo{title}{The Permutability of Rules in the Classical Inferential
  Calculus}}.
\newblock {\sl \bibinfo{journal}{J. Symb. Log.}}
  \bibinfo{volume}{17}(\bibinfo{number}{4}), pp. \bibinfo{pages}{245--248}.
\newblock \urlprefix\url{http://projecteuclid.org/euclid.jsl/1183731481}.

\bibitemdeclare{article}{Gir87}
\bibitem{Gir87}
\bibinfo{author}{J.-Y. \surnamestart Girard\surnameend} (\bibinfo{year}{1987}):
  \emph{\bibinfo{title}{Linear Logic}}.
\newblock {\sl \bibinfo{journal}{Theoretical Computer Science}}
  \bibinfo{volume}{50}, pp. \bibinfo{pages}{1--102},
  \doi{10.1016/0304-3975(87)90045-4}.
\newblock \urlprefix\url{http://iml.univ-mrs.fr/~girard/linear.pdf}.

\bibitemdeclare{incollection}{Gir96}
\bibitem{Gir96}
\bibinfo{author}{J.-Y. \surnamestart Girard\surnameend} (\bibinfo{year}{1996}):
  \emph{\bibinfo{title}{Proof-nets: the parallel syntax for proof theory}}.
\newblock In: {\sl \bibinfo{booktitle}{Logic and Algebra}}, {\sl
  \bibinfo{series}{Lecture Notes In Pure and Applied Mathematics}}
  \bibinfo{volume}{180}, \bibinfo{publisher}{Marcel Dekker},
  \bibinfo{address}{New York}, pp. \bibinfo{pages}{97--124}.
\newblock \urlprefix\url{http://iml.univ-mrs.fr/~girard/Proofnets.pdf}.

\bibitemdeclare{inproceedings}{HG03}
\bibitem{HG03}
\bibinfo{author}{D.J.D. \surnamestart Hughes\surnameend} \&
  \bibinfo{author}{R.J.~van \surnamestart Glabbeek\surnameend}
  (\bibinfo{year}{2003}): \emph{\bibinfo{title}{Proof Nets for Unit-free
  Multiplicative-Additive Linear Logic (extended abstract)}}.
\newblock In: {\sl \bibinfo{booktitle}{{\rm Proceedings $18^{th}$ Annual IEEE
  Symposium on} Logic in Computer Science, {\rm LICS 2003}}},
  \bibinfo{publisher}{IEEE Computer Society Press}, pp.
  \bibinfo{pages}{1--10}, \doi{10.1109/LICS.2003.1210039}.
\newblock \urlprefix\url{http://theory.stanford.edu/~rvg/abstracts.html#50}.

\bibitemdeclare{article}{HvG}
\bibitem{HvG}
\bibinfo{author}{D.J.D. \surnamestart Hughes\surnameend} \&
  \bibinfo{author}{R.J.~van \surnamestart Glabbeek\surnameend}
  (\bibinfo{year}{2005}): \emph{\bibinfo{title}{Proof Nets for Unit-free
  Multiplicative-Additive Linear Logic}}.
\newblock {\sl \bibinfo{journal}{ACM Transactions on Computational Logic}}
  \bibinfo{volume}{6}(\bibinfo{number}{4}), pp. \bibinfo{pages}{784--842},
  \doi{10.1145/1094622.1094629}.
\newblock \urlprefix\url{http://theory.stanford.edu/~rvg/abstracts.html#57}.

\bibitemdeclare{article}{HH16}
\bibitem{HH16}
\bibinfo{author}{D.J.D. \surnamestart Hughes\surnameend} \&
  \bibinfo{author}{W. \surnamestart Heijltjes\surnameend}
  (\bibinfo{year}{2016}): \emph{\bibinfo{title}{Conflict nets: Efficient locally canonical MALL proof nets}}.
\newblock In: {\sl \bibinfo{booktitle}{{\rm Proceedings $31^{st}$ Annual ACM/IEEE
  Symposium on} Logic in Computer Science, {\rm LICS 2016}}}, \doi{10.1145/2933575.2934559}.

\bibitemdeclare{book}{Kle52}
\bibitem{Kle52}
\bibinfo{author}{S.C. \surnamestart Kleene\surnameend} (\bibinfo{year}{1952}):
  \emph{\bibinfo{title}{Introduction to Metamathematics}}.
\newblock \bibinfo{publisher}{North-Holland}, \bibinfo{address}{Amsterdam}.

\bibitemdeclare{book}{TS96}
\bibitem{TS96}
\bibinfo{author}{A.S. \surnamestart Troelstra\surnameend} \&
  \bibinfo{author}{H.~\surnamestart Schwichtenberg\surnameend}
  (\bibinfo{year}{1996}): \emph{\bibinfo{title}{Basic Proof Theory}}.
\newblock \bibinfo{publisher}{Cambridge Tracts in Theoretical Computer Science,
  Cambridge University Press}, \bibinfo{address}{Cambrige, U.K.}

\end{thebibliography}

\end{document}